\newlength{\dhatheight}
\DeclarePairedDelimiter{\ceil}{\lceil}{\rceil}
\DeclareMathAlphabet{\mymathbb}{U}{BOONDOX-ds}{m}{n}
\newtheorem{definition}{Definition}
\newtheorem{remark}{Remark}
\newtheorem{assumption}{Assumption}
\newtheorem{theorem}{Theorem}
\newtheorem{lemma}{Lemma}
\newtheorem{corollary}[theorem]{Corollary}
\newtheorem{proposition}{Proposition}
\newtheorem{conjecture}{Conjecture}
\definecolor{gnred1}{RGB}{71,0,0} 
\definecolor{gnred2}{RGB}{117,0,0} 
\definecolor{gnred3}{RGB}{164,0,0} 
\definecolor{gnred4}{RGB}{211,0,0} 
\definecolor{gnred5}{RGB}{255,0,0} 
\definecolor{gnred6}{RGB}{255,42,34} 
\definecolor{gnred7}{RGB}{255,91,89} 
\definecolor{gnblue1}{RGB}{0,36,71}   
\definecolor{gnblue2}{RGB}{0,60,118}  
\definecolor{gnblue3}{RGB}{0,85,164}  
\definecolor{gnblue4}{RGB}{0,108,212} 
\definecolor{gnblue5}{RGB}{0,133,255}  
\definecolor{gnblue6}{RGB}{35,156,255} 
\definecolor{gnblue7}{RGB}{88,177,255} 
\definecolor{gnbrown1}{RGB}{71,27,0}  
\definecolor{gnbrown2}{RGB}{117,45,0} 
\definecolor{gnbrown3}{RGB}{164,62,0} 
\definecolor{gnbrown4}{RGB}{211,80,0} 
\definecolor{gnbrown5}{RGB}{255,97,0} 
\definecolor{gnbrown6}{RGB}{255,127,26} 
\definecolor{gnbrown7}{RGB}{255,155,86} 
\newcommand{\En}{\mathrm{E}}
\newcommand{\prox}{\text{prox}}
\newcommand{\tr}{\mathrm{Tr}}
\newcommand{\real}{\mathbb{R}}
\renewcommand{\real}{\mathbb{R}}
\newcommand{\diag}{\mathrm{diag}}
\newcommand{\argmax}[2] {\mathrm{arg}\max_{#1}#2}
\newcommand{\argmin}[2] {\mathrm{arg}\min_{#1}#2}
\DeclareSymbolFont{bbold}{U}{bbold}{m}{n}
\DeclareSymbolFontAlphabet{\mathbbold}{bbold}
\newcommand{\vect}[1]{\mathbbold{#1}}
\newcommand{\vectorones}[1][]{\vect{1}_{#1}}
\newcommand{\vectorzeros}[1][]{\vect{0}_{#1}}
\title{Competition, stability, and functionality\\ in excitatory-inhibitory neural circuits}
\author{ \href{https://orcid.org/0009-0000-3444-0838}{\includegraphics[scale=0.06]{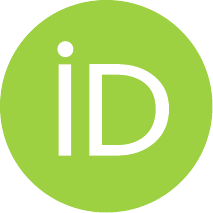}\hspace{1mm}Simone Betteti} \\
RIAS Lab\\ 
	The Italian Institute of\\
	Artificial Intelligence for Industry\\
	Turin, 10129, IT \\
	\texttt{simone.betteti[at]ai4i.it} \\
	\And
	\href{https://orcid.org/0000-0002-0983-1615}{\includegraphics[scale=0.06]{orcid.pdf}\hspace{1mm}William Retnaraj} \\
	Center for Control Systems and Dynamics\\
	University of California at San Diego\\
	San Diego, CA, 92093, US \\
	\texttt{wretnaraj[at]ucsd.edu} \\
	\And
	\href{https://orcid.org/0000-0001-5629-2565}{\includegraphics[scale=0.06]{orcid.pdf}\hspace{1mm}Alexander Davydov} \\
	Department of Mechanical Engineering\\
	Rice University\\
	Houston, TX, 77005, US \\
	\texttt{sd210[at]rice.edu} \\
	\AND
	\href{https://orcid.org/0000-0001-9582-5184}{\includegraphics[scale=0.06]{orcid.pdf}\hspace{1mm}Jorge Cortés} \\
	Center for Control Systems and Dynamics\\
	University of California at San Diego,\\
	San Diego, CA, 92093, US \\
	\texttt{cortes[at]ucsd.edu}
	\And	
    \href{https://orcid.org/0000-0002-4785-2118}{\includegraphics[scale=0.06]{orcid.pdf}\hspace{1mm}Francesco Bullo} \\
	Center for Control, Dynamical Systems and Computation\\
	University of California at Santa Barbara,\\
	Santa Barbara, CA, 93106 US \\
	\texttt{bullo[at]ucsb.edu}
}
\begin{document}
\maketitle

\begin{abstract}
Energy-based models have become a central paradigm for understanding computation and stability in both theoretical neuroscience and machine learning. However, the energetic framework typically relies on symmetry in synaptic or weight matrices, a constraint that excludes biologically realistic systems such as excitatory-inhibitory (E-I) networks. When symmetry is relaxed, the classical notion of a global energy landscape fails, leaving the dynamics of \emph{asymmetric} neural systems conceptually unanchored. In this work, we extend the energetic framework to \emph{asymmetric} firing-rate networks, revealing an underlying game-theoretic structure for the neural dynamics in which each neuron is an agent that seeks to minimize its own energy. In addition, we exploit rigorous stability principles from network theory to study regulation and balancing of neural activity in E-I networks. We combine the novel game-energetic interpretation and the sharp stability results to revisit standard examples in theoretical neuroscience, such as the Wilson-Cowan and lateral inhibition models. These insights allow us to study columns of lateral inhibitory microcircuits as contrast enhancers, i.e., as circuits that selectively sharpen subtle differences in the environment through hierarchical excitation-inhibition interplay. Our results bridge energetic and game-theoretic views of neural computation, offering a pathway toward the systematic engineering of biologically grounded, dynamically stable neural architectures. 
\end{abstract}

\keywords{Asymmetric neural networks \and Energy principle \and Game theory \and Excitation-inhibition \and Network stability}

\section*{Introduction}
Recent advancements in both theoretical neuroscience and machine learning~\citep{KD-HJJ:16,RH-SB-LJ:21,KL-SJJ-KD:25} frame relevant computational properties of neural networks through energy lenses. The energetic framework provides a simple yet powerful interpretive tool to understand neural dynamics and answer scientifically profound questions on emergent cognitive phenomena. The recent Nobel prize awarded to John J. Hopfield~\citep{HJJ:82, CMA-GS:83, HJJ:84} underscores the lasting impact of this approach across topics and disciplines. Drawing inspiration from spin-glass theory~\citep{SD-KS:75, MM-PG-VM:86}, Hopfield introduced an energy function for neural dynamics, proving asymptotic convergence to memory states encoded in the synaptic matrix and retrieved as stable attractors. The energetic framework has since become a core element of neural dynamics investigation~\citep{KD:86,RET-TA:12, BS-BG-BF-ZS:25fr}, casting collective neural activity as a minimization process aimed at retrieving meaningful network states. Lately, the pairing of generative AI technologies and energy-driven dynamics has inspired a wave of theoretical studies on the dynamical properties of large language and diffusion models~\citep{HB-LY-PB:23, AL:24}, enriching the study on their interpretability and predictability. Thus, energy becomes reliability; energy grants robustness; yet, energy requires structure.     

The classical applications of energy-driven neural dynamics rest on a strong structural assumption: the symmetry of the synaptic
matrix~\citep{HD-WJ:87}. In simple terms, symmetry enforces perfect reciprocity-if neuron $i$ projects to neuron $j$ with synaptic strength $w$, then neuron $j$ must project back to neuron $i$ with the same strength. The symmetry constraint is \textit{de facto} a requirement, being itself a structural condition for the stability of the memory patterns. However, the symmetry constraint introduces compromising modeling flaws, thereby widening the gap between theoretical studies and experiments~\citep{PG:86, YH-ZL-HL:13}. Most notably, it clashes with the organization of real neural circuits, where populations of excitatory (E) and inhibitory (I) neurons interact~\citep{HB-DA-HAR:06}. By Dale's law~\citep{EJC-FP-KK:54}, E-neurons project only excitatory outgoing connections, and I-neurons only inhibitory outgoing connections. Under these conditions, the symmetry constraint can only be satisfied if excitatory and inhibitory populations do not interact-a scenario that defeats the very purpose of modeling E-I networks. 

The scientific literature on excitatory-inhibitory (E-I) networks has made substantial progress in characterizing their dynamics and stability, yet it largely lacks a unifying, energy-like framework capable of interpreting the activity of biologically-plausible neural circuits. With few notable exceptions relying on strong structural assumptions~\citep{ASI:77li, FM-TA:95}, energetic interpretations have remained confined to \emph{symmetric} settings. This limitation has created a conceptual divide between biologically grounded, \emph{asymmetric} E-I networks and cognitively oriented models, where optimization and energy minimization provide a core explanatory principle. As a result, the role of energy-based organization in real neuronal circuits remains underdeveloped. At the same time, a broad body of work on inhibition-stabilized E-I circuits~\citep{VC-SH:98,SA-AB-GHC:20,SS-CC:20,MD-DA-IV:24} points to intrinsic balancing mechanisms that regulate activity, prevent runaway excitation, and anchor network dynamics to well-defined operating regimes. Such stabilization phenomena strongly suggest\footnote{In dynamical system theory it is widely known that existence of stable equilibrium points implies the existence of a Lyapunov function by the inverse Lyapunov theorem~\citep[Chapter 4]{KHK:02}} the presence of implicit cost-minimizing principles, even in the absence of symmetry. The inability to reconcile energy-based descriptions with the balancing dynamics of \emph{asymmetric} E-I networks therefore emerges as a central theoretical bottleneck. Addressing this gap is essential to bridge biophysically grounded circuit models with optimization-driven theories of neural computation.

\begin{figure}[tbph!]
 \centering
 \includegraphics[width=14.4cm]{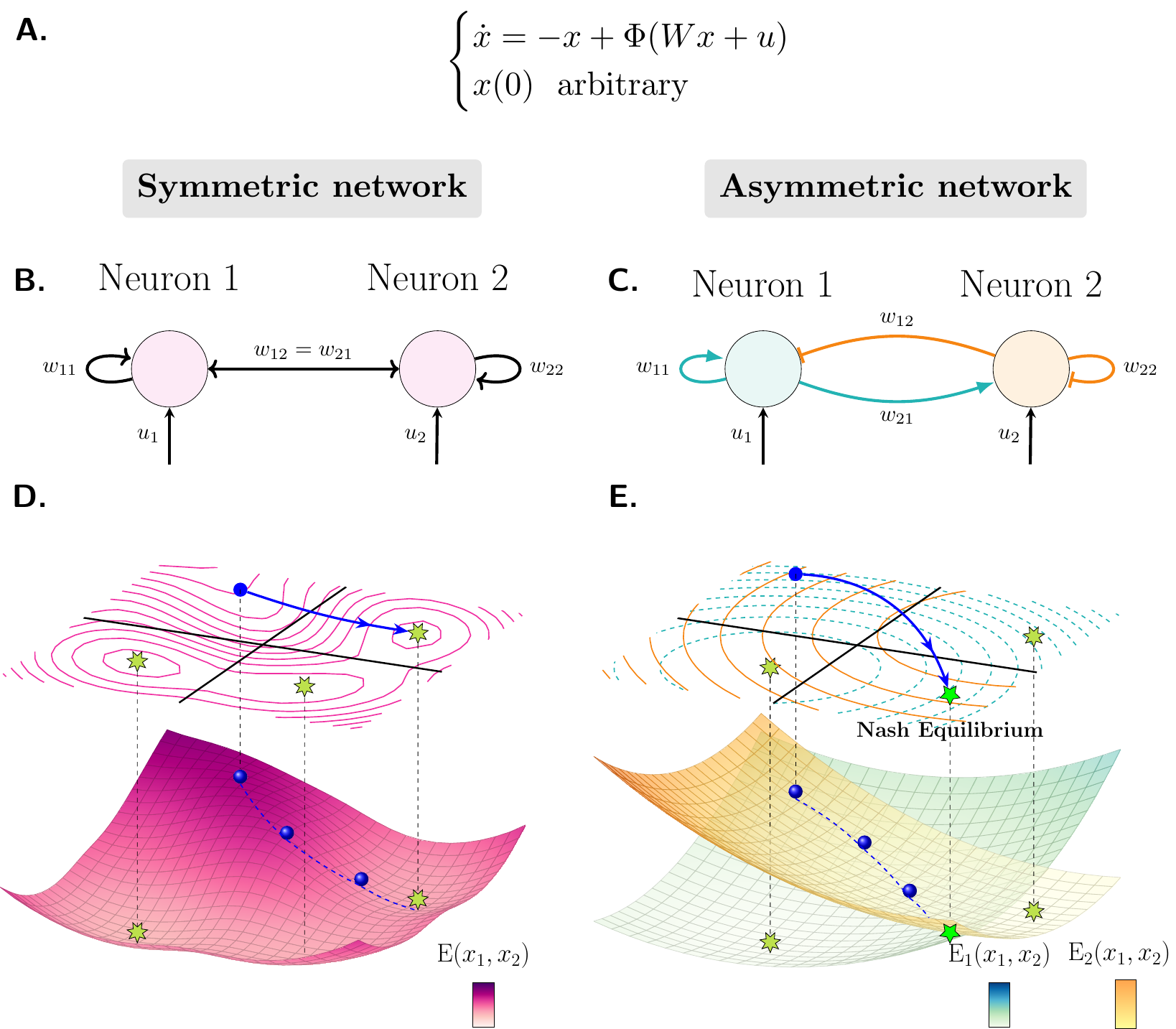}
 \caption{\textbf{A qualitative comparison of neural interactions and collective actions in \emph{symmetric} (left) and \emph{asymmetric} (right) firing-rate networks.} (A) Dynamics of a recurrent neural network with firing rates $x$, activation function $\Phi$, synaptic matrix 
$W$, and external input $u$. Firing-rate models capture the temporal evolution of mean neural activity across interconnected populations.
(B) Graphic visualization of neural interaction in \emph{symmetric} neural networks ($W=W^{\top}$). In \emph{symmetric} neural networks, every pair of neurons interacts reciprocally: if neuron $i$ influences neuron $j$ with strength $w$, neuron $j$ exerts the same influence back on $i$. Symmetry enforces balance and reciprocity across the network, and allows for the existence of a potential for the dynamics directing the flow of the system.
(C) Graphic visualization of neural interaction in \emph{asymmetric} neural networks ($W\neq W^{\top}$). In \emph{asymmetric} networks, neurons contribute to the collective activity through their outgoing synapses, but reciprocal connections need not exist - or may differ in strength. Interactions are therefore unbalanced, reflecting the structure of real excitatory–inhibitory circuits.
(D) Energy in \emph{symmetric} firing-rate networks. \emph{Symmetric} interactions define a global energy function that all neurons jointly minimize. Neural activity converges toward a minimum-energy configuration, corresponding to a stable equilibrium of the dynamics.
(E) Energies in \emph{asymmetric} firing-rate networks. When symmetry is broken, each neuron minimizes its own energy. Their competition shapes a balance point - a Nash equilibrium - where no neuron can unilaterally improve its outcome. This equilibrium need not correspond to a global minimum but represents the best possible compromise given the asymmetry of interactions.}
 \label{fig: graph_abstr}
\end{figure}

In this manuscript, we extend the energetic framework of \emph{symmetric} energy-based networks to \emph{asymmetric} excitatory-inhibitory (E-I) systems (Fig.~\ref{fig: graph_abstr}), and reinterpret classical results on inhibition-stabilized dynamics within this broader setting. Competition is distilled in our idea is to model each neuron, or neuronal population, as a selfish agent optimizing its own energy, so that network dynamics emerge as a non-cooperative game whose meaningful states correspond to Nash equilibria. Stability is addressed within the new energetic, game-theoretic framework through the analysis of inhibitory stabilization in E-I circuits. In multi-stable E-I circuits, such as spatial working memory~\citep{CA:00} modules, inhibition and dissipation interact, jointly constraining excitation and preventing runaway dynamics. Surprisingly, our analysis reveals that circuits exhibiting a unique globally stable equilibrium, such as early visual inhibitory circuits~\citep{OH-FIM-SES:09}, rely on strong dissipation alone to dominate recurrent excitation. Functionality is approached through the study of three canonical neural architectures. First, we revisit the Wilson-Cowan (WC) model~\citep{WHR-CJD:73}, refining the inhibition-stabilized regime into mono-stable and multi-stable phases and showing that oscillations and chaos arise from excitatory dominance. Second, we introduce a novel energetic interpretation of lateral inhibition, showing how E-I networks can implement input discrimination with arbitrary precision~\citep{FY-CMA-KTG:96}. Finally, we extend lateral inhibition to columnar organizations, suggesting that stacked E-I microcircuits may act as contrast-enhancing modules in sensory processing. More broadly, this work provides a unifying language that reconciles biological asymmetry with optimization principles, offering a bridge between mechanistic neuroscience, dynamical systems theory, and the engineering of stable yet flexible neural computation.

\section*{Taking the energy principle beyond symmetry}

\subsection*{One energy drives global neuronal activity}
Firing-rate networks have been a dominant framework in the study of mesoscopic neural activity~\citep{DP-ALF:05,EGB-TDH:10}, offering insight into the dynamics of distributed information processing in the brain, from excitatory-inhibitory interaction to associative memory. They have also inspired relevant machine learning applications, such as echo state networks~\citep{JH-HH:04}, where the recurrence among neurons is exploited to define tailored input-output relationships.
In their most general characterization, firing-rate dynamics obey the ordinary differential equation
    \begin{equation}\label{eq: fr-clas}
    \dot x = -x + \Phi(Wx+u)\\
\end{equation}
where $W\in\real^{N\times N}$ is the synaptic matrix regulating the interaction among neurons, $u\in\real^{N}$ is the external input and $\Phi(x)=(\upphi_{1}(x_{1}),\dots,\upphi_{N}(x_{N}))$ is the activation function with positive components abstracting the average rate of firing of an ensemble of neurons~\citep{DP-ALF:05}. After normalization, the rates of activation for the units vary in the interval $[0,1]$, with the two extremes representing either neural inactivation or neural activation. 

A key determinant of network properties is the structure of the synaptic matrix. In associative memory models, the synaptic interaction among neurons is \emph{symmetric} ($W=W^{\top}$, see Fig.~\ref{fig: graph_abstr}(B)) and typically given by a covariance learning rule~\citep{GW-KWM:02} implementing zero-shot learning from random memories. The symmetry enables the definition of an Hopfield-type energy function~\citep{BS-BG-BF-ZS:25fr}
\begin{equation}\label{eq: En-gb}
    \En(x,u)=-\frac{1}{2}x^{\top}Wx-x^{\top}u+\sum_{i=1}^{N}\int_{0}^{x_{i}}\Phi_{i}^{-1}(s)\ ds
\end{equation}
that ensures convergence of the network activity to the desired patterns from any initial condition $x_{0}$ (see Fig~\ref{fig: graph_abstr}(D)). Crucially, the memories correspond to the minima of the energy landscape: the dynamics~[\ref{eq: fr-clas}] unfold as gradient-like motion descending into valleys centered on the learned patterns.

Recently~\citep{VC-AG-AD-GR-FB:23a,AG-AD-FB:24d}, firing-rate dynamics with \emph{symmetric} synaptic interactions have been re-interpreted as proximal gradient dynamics, a framework that has gained traction in the analysis and optimization of machine learning systems. The idea is that the firing-rate system minimizes a composite scalar cost $\mathrm{E}_{\text{int}}(x,u)+\mathrm{E}_{\text{act}}(x)$. The interaction cost $\mathrm{E}_{\text{int}}(x,u)$ is the potential along which firing-rate dynamics would evolve without physical constraints (e.g., non-negative firing rates). The activation cost $\mathrm{E}_{\text{act}}(x)$ assigns prohibitively high penalty to infeasible regions and null or low cost to feasible regions. The proximal operator of $\mathrm{E}_{\text{act}}$ takes $x-\nabla \mathrm{E}_{\text{int}}(x,u)$ (a gradient step of $\mathrm{E}_{\text{int}}$) as input and outputs its nearest point in the feasible region, thereby regularizing the basic gradient flow of $\mathrm{E}_{\text{int}}$ by smoothly steering it toward feasible regions.
\begin{figure}[!tbph]
    \centering
    \includegraphics[width=\linewidth]{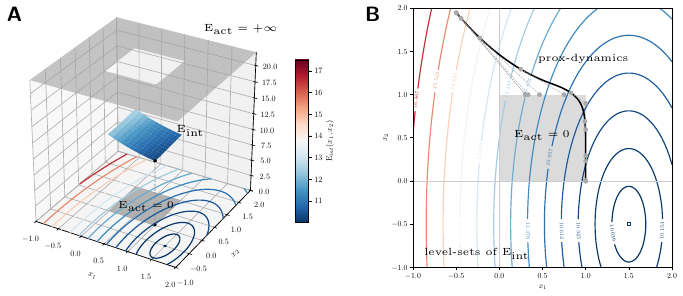}
    \caption{\textbf{Graphic interpretation of the $\operatorname{prox}_{\mathrm{E}_{\text{act}}}$ operator action for any fixed external input $u$.} (A) Energy landscape associated to a quadratic interaction cost $\mathrm{E}_{\text{int}}$ and activation cost $\mathrm{E}_{\text{act}}$ for a saturated activation function. In particular, $\mathrm{E}_{\text{act}}$ is equal to zero over the entire domain of linear activation, and equal to $+\infty$ over the entire saturated region. The dynamics are therefore steered inside (or at the boundary) of the linear region. Generally, the interaction cost $\mathrm{E}_{\text{int}}$ defines the global landscape over which the neural network dynamics evolve and $\mathrm{E}_{\text{act}}(x)$ the physical constraints of the system. (B) The $\operatorname{prox}$ operator smoothly alters the basic gradient descent on $\mathrm{E}_{\text{int}}$, pushing the dynamics towards the point in the feasible region that is closest - in the Euclidean metric - to the gradient step $x-\nabla \mathrm{E}_{\text{int}}$.}
    \label{fig: prox-grad}
\end{figure}
Under this decomposition, the firing-rate dynamics can be rewritten as 
\begin{equation}\label{eq: fr-prox}
    \dot x = -x + \prox_{\mathrm{E}_{\text{act}}}(x-\nabla \mathrm{E}_{\text{int}}(x,u))\\
\end{equation}
where $\prox_{\mathrm{E}_{\text{act}}}$ is the proximal operator with respect to the function $\mathrm{E}_{\text{act}}$ (see SI for a rigorous introduction to proximal operators), satisfying $\prox_{\mathrm{E}_{\text{act}}}(x)=\Phi(x)$. 
Thus, the equilibrium of the firing-rate system (when unique and stable) must minimize the composite cost $\mathrm{E}_{\text{int}}(x,u)+\mathrm{E}_{\text{act}}(x)$. 
Comparing~\eqref{eq: fr-clas} with~\eqref{eq: fr-prox}, for \emph{symmetric} \(W\), we get
\begin{equation}
    x-\nabla \mathrm{E}_{\text{int}}(x,u) = Wx+u ,
\end{equation}
from which one possible cost structure is $\mathrm{E}_{\text{int}}(x,u)=\frac{1}{2}x^{\top}(\mathcal{I}_{N}-W)x-x^{\top}u$, where $\mathcal{I}_{N}$ is the identity matrix. 
Using the definition of the proximal operator, we obtain $\mathrm{E}_{\text{act}}(x)=\sum_{i=1}^{N}(\int_{0}^{x_{i}}\Phi_{i}^{-1}(s)\ ds - \frac{1}{2}x_{i}^{2})$, and finally
\begin{equation}
    \mathrm{E}_{\text{int}}(x,u)+\mathrm{E}_{\text{act}}(x) = \En(x,u).
\end{equation}
Thus, the traditional energetic interpretation of firing-rate networks emerges naturally within the proximal gradient framework. This unification not only confirms the consistency of the two perspectives, but also equips us with a powerful modern tool to view firing-rate dynamics as minimizing interaction and activation costs. Importantly, modeling of many cognitive processes explicitly requires excitatory and inhibitory interactions, in which case the synaptic matrix $W$ is no longer \emph{symmetric}. Developing an analogous framework for such \emph{asymmetric} firing-rate networks is a central open challenge, which we tackle here.

\subsection*{To each neuron its own energy}

Communication in cortical networks is inherently directional~\citep{PG:86, TA-ADJ:88}: each neuron propagates its activity forward and transmits exclusively either excitatory (positive) or inhibitory (negative) signals (see Fig.~\ref{fig: graph_abstr}(C) for a diagrammatic representation of excitatory-inhibitory interaction). 
This biological constraint makes \emph{asymmetric} synaptic matrices unavoidable in realistic models, as the non-reciprocity of neural connections directly breaks symmetry. 
This loss of symmetry has significant mathematical consequences. 
In particular, many tools from dynamical systems theory that rely on \emph{symmetric} interactions no longer apply, making it difficult to prove convergence of trajectories or to rely on a global energy function to interpret the system’s behavior. Without the unrealistic assumption on \emph{symmetric} network interactions, gaining insight on \emph{asymmetric} recurrent neural networks thus becomes a primary challenge.

Remarkably, the proximal operator formulation of the firing-rate dynamics
can be extended to the case of an \emph{asymmetric} network.  The key
difference is that, while symmetry allows us to define a single global
interaction cost $\mathrm{E}_{\text{int}}(x,u)$, asymmetry requires
considering instead a family of interaction costs
$\{\mathrm{E}^{i}_{\text{int}}(x,u_{i})\}_{i=1}^{N}$, one for each neuron
(or neuronal population), where $u_{i}$ is the scalar external input to the
$i^{\text{th}}$ neuron.  The activation costs remain neuron-specific but
structurally identical to the \emph{symmetric} case, with
$\prox_{\mathrm{E}^{i}_{\text{act}}}(x) = \Phi_{i}(x)$. Therefore, we can
rewrite the \emph{asymmetric} firing-rate dynamics for each neuron as
\begin{equation}\label{eq: fr-prox-s}
    \dot x_{i} = -x_{i} + \prox_{\mathrm{E}^{i}_{\text{act}}}(x_{i}-\nabla_{x_{i}} \mathrm{E}^{i}_{\text{int}}(x,u_{i}))\\
\end{equation}
Owing to the equivalence with the single neuron dynamics in~\eqref{eq: fr-clas}, the following relationship must hold
\begin{equation}\label{eq:pre-energy}
    x_{i}-\nabla_{x_{i}}\mathrm{E}^{i}_{\text{int}}(x,u_{i}) = \sum_{j=1}^{N}W_{ij}x_{j}+u_{i}
\end{equation}
from which it easy to see that the interaction cost of each neuron is
$\mathrm{E}^{i}_{\text{int}}(x,u_{i})=-x_{i}[\sum_{j=1}^{N}(1-\frac{1}{2}\updelta_{ij})W_{ij}x_{j}-\frac{1}{2}x_{i}+u_{i}]$. Exploiting
once again the properties of the proximal operator, the activation cost of
each neural is
$\mathrm{E}^{i}_{\text{act}}(x_{i})=\int_{0}^{x_{i}}\Phi_{i}^{-1}(s)\ ds-\frac{1}{2}x_{i}^{2}$. Consequently,
we introduce interaction and activation cost families,
$\{\En_{\mathrm{int}}^{i}\}_{i=1}^{N}$ and
$\{\En_{\mathrm{act}}^{i}\}_{i=1}^{N}$, which jointly induce proximal gradient play
dynamics~\citep{BT-OGJ:98} across the neural network. The sum of the
interaction and activation costs for each neuron allows us to define a
specific neuron energy as
\begin{align}\label{eq: En-pn}
    \En^{i}(x,u_{i})&=\mathrm{E}^{i}_{\text{int}}(x,u_{i})+\mathrm{E}^{i}_{\text{act}}(x_{i})\\
    &=-x_{i}\sum_{j=1}^{N}(1-\frac{1}{2}\updelta_{ij})W_{ij}x_{j}-x_{i}u_{i}+\int_{0}^{x_{i}}\Phi_{i}^{-1}(s)\ ds \nonumber
\end{align}
and therefore a family of energies $\{\En^{i}(x,u_{i})\}_{i=1}^{N}$ over the whole network. Strikingly, the structure of~\eqref{eq: En-pn} closely resembles the global energy~\eqref{eq: En-gb} that characterizes the \emph{symmetric} case. To express this compactly, define the entrywise product $\odot$ for two vectors $a,b \in \mathbb R^{N}$ by $(a \odot b)_{i} = a_{i}b_{i}$, and let $\Tilde W$ be the modified synaptic matrix with entries $\Tilde W_{ij} = (1 - \tfrac{1}{2}\updelta_{ij}) W_{ij}$. Then we can introduce the vector-valued energy
\begin{align}\label{eq: En-vc}
\Tilde \En(x,u) &= (\En^{1}(x,u_{1}), \dots, \En^{N}(x,u_{N})) \nonumber\\
&= -x \odot (\Tilde W x) - x \odot u + F(x),
\end{align}
where $F_{i}(x) = \int_{0}^{x_{i}} \Phi_{i}^{-1}(s) ds$.

Unlike the \emph{symmetric} case, however, the vectorial nature of~\eqref{eq: En-vc} prevents us from interpreting the network states as minima of a single potential function. 
Instead, each neuron contributes its own local energetic perspective, reflecting the fundamentally \emph{asymmetric} structure of neural communication.
Finally, observe how the components of the vectorial energy in~\eqref{eq: En-vc} are affected exclusively by the inputs $u_{i}$ to the corresponding $i^{th}$ neuron.

\subsection*{A game of energies}
Game theory provides a powerful lens for interpreting the dynamics of systems composed of multiple interacting agents with individual objectives. 
In this framework, each agent may choose an action strategy to minimize (or maximize) its own cost function, while the actions of the other agents act as constraints shaping the feasible set of outcomes. 
A Nash equilibrium is a configuration of player actions in which no agent can unilaterally improve its outcome by altering its own strategy, even if the global configuration is not optimal.
The concept of Nash equilibria is of vital importance in the consideration of noncooperative games. 

As described above, \emph{asymmetric} neural networks can be viewed through the perspective of non-cooperative games, which we call \textit{proximal gradient play}, where each neuron plays the role of an agent attempting to minimize its own energy function through gradient play dynamics. 
The Nash equilibria of such networks, if they exist, satisfy
\begin{align}
x^{\star}_{i} &= \arg \min_{x_{i} \in \mathbb R} \En^{i}(x_{i}, x^{\star}_{-i},u_{i}) \nonumber \\
&= \arg \min_{x_{i} \in \mathbb R} \Big( \mathrm{E}^{i}_{\text{int}}(x_{i}, x^{\star}_{-i},u_{i}) + \mathrm{E}^{i}_{\text{act}}(x_{i}) \Big),
\end{align}
where $x^{\star}_{-i}$ denotes the states of all neurons other than $i$. 
In simpler terms, each neuron in the \emph{asymmetric} network strives to jointly minimize the cost $\mathrm{E}^{i}_{\text{int}}(x_{i}, x_{-i},u_{i})$ of its interaction with the other neurons and its own cost of activation $\mathrm{E}^{i}_{\text{act}}(x_{i})$.
Crucially, this does not imply that neurons reach the absolute minimum of their individual energies. 
Rather, they settle into states where no unilateral deviation would yield an advantage (see Fig.~\ref{fig: graph_abstr}(F)), which is an inherently game-theoretic notion of equilibrium.

\section*{Navigating the Wilson-Cowan dynamic regimes}
Excitatory-inhibitory (E-I) neural networks are a central framework for studying \emph{asymmetric} neural dynamics, capturing the distinct roles and interactions of excitatory and inhibitory populations.
The excitatory population exclusively projects positive synaptic weights, whereas the inhibitory kind projects exclusively negative weights. 
This is the principle popularized in Dale’s law~\citep{EJC-FP-KK:54}, which inevitably leads to an \emph{asymmetric} interaction graph.
The simplest example of an E-I network is the WC model~\citep{WHR-CJD:73}, a firing-rate framework describing the dynamical interaction of two neuronal populations: one excitatory and one inhibitory.
In their pioneering work, Wilson and Cowan revealed that this minimal E-I circuit can exhibit rich dynamical behavior ranging from stable equilibria to limit cycles. 
Therefore, we leverage the dynamical richness of the WC model as a testbed to study the intertwined relationship of the E-I firing-rate dynamics and the corresponding game-theoretic setting. 

We adopt a simple two-dimensional firing-rate model in~\eqref{eq:EI-2dim} with a saturating activation function, capturing the interconnection between one excitatory and one inhibitory population to implement the WC model:

\vspace*{-1ex}
{\small
\begin{equation}\label{eq:EI-2dim}
    \begin{pmatrix}
        \dot x_E\\
        \dot x_I
    \end{pmatrix} = -
    \begin{pmatrix}
        x_E\\
        x_I
    \end{pmatrix} + 
    \left[
    \begin{pmatrix}
        w_{EE} & -w_{EI}\\
        w_{IE} & -w_{II}
    \end{pmatrix}
    \begin{pmatrix}
        x_E\\
        x_I
    \end{pmatrix} + 
    \begin{pmatrix}
        u_E\\
        u_I
    \end{pmatrix}
    \right]_0^1
\end{equation}
}
The state \(x = (x_E, x_I)^\top\) is defined on the unit square $[0,1]^{2}$ to represent allowable firing rates, with initial conditions picked therein.
The saturating activation function we use here is a linear-threshold activation function, also known as a ReLU (Rectified Linear Unit), that cuts off at \(0\) and saturates at \(1\). 
The synaptic matrix $W$ has column definite sign, and therefore adheres to Dale's law. 
In this section, we reveal that the WC dynamics lives on a special game that makes explicit the opposing objectives of the two populations under certain conditions, while revealing how their strategic tension structures their dynamical behavior. 
We identify the self-excitatory weight $w_{EE}$ as the principal switch governing transitions across these regimes such as the dissipation-excitation dominance and the inhibition-stabilized regime, which dictate the dominant dynamical behavior.
We connect these regimes to the zero-sum game properties and finally discuss the emergence of a limit cycle, when the populations perpetually fail to arrive at consensus.

\begin{figure}
    \centering
    \includegraphics[width=.75\linewidth]{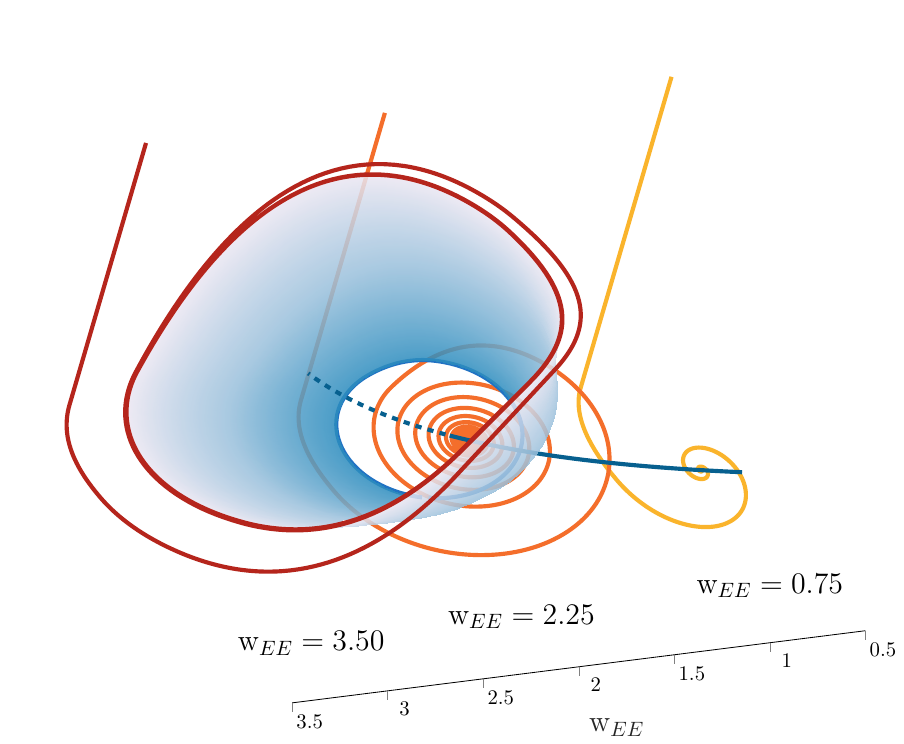}
    \caption{\textbf{Self-excitatory weight $w_{EE}$ drives the emergence of limit cycles through Hopf-like bifurcation.} Hopf-like bifurcation observed as the E-I firing-rate network transitions from the antagonistic weak decision regime to the antagonistic indecision regime. Plot shows the limit set for each \(w_{EE}\) as it is varied starting in the consensual regime (\(w_{EE} = 0.5 < 1\)) and capturing the point of bifurcation at \(w_{EE} = w_{II} + 2 = 2.5\). Parameters used: \(w_{EI} = 4, w_{IE} = 2.3, w_{II} = 0.5, u_E = 1, u_I = -0.2\). Solid blue curves represent stable limit sets, while the dashed blue line represents the locus of unstable equilibria.}
    \label{fig: Hopf}
\end{figure}

\begin{figure}[t!]
 \centering
 \includegraphics[height=13cm, width=13cm]{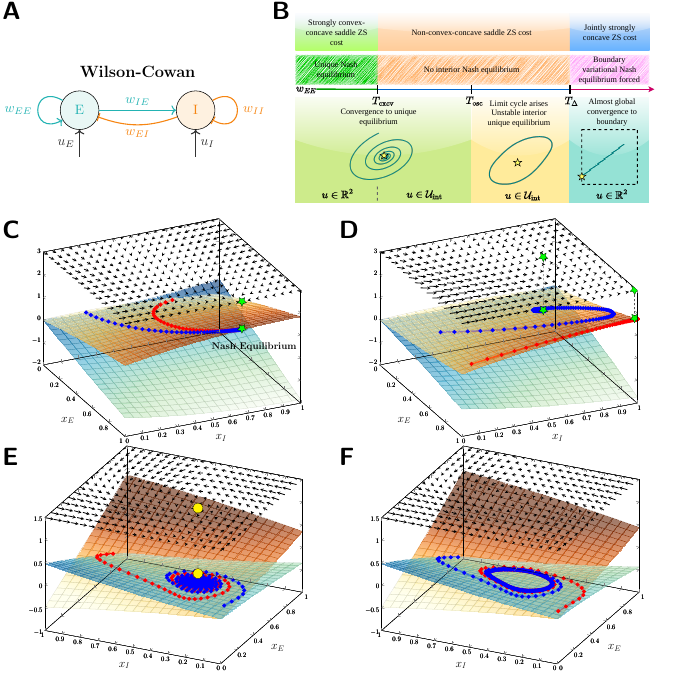}
 \caption{\textbf{Schematic of the Wilson-Cowan (WC) model and complete dynamical and game characterization of the exhibited regimes} 
 (A) Schematic of the WC model, with one excitatory neuron ($E$) and one inhibitory neuron ($I$). 
 Excitatory synapses ($w_{EE}$, $w_{IE}$, light blue) are positive, while inhibitory synapses ($w_{II}$, $w_{EI}$, light orange) are negative. Each neuron receives an external input ($u_E$, $u_I$). 
 (B) Summary of the dynamical and game-theoretic regimes of the two-population WC model. Define \(T_{\mathrm{cxcv}}=d_E\), \(T_{\mathrm{osc}}=d_E+d_I+w_{II}\), and \(T_{\Delta}=d_E+\tfrac{w_{EI}w_{IE}}{d_I+w_{II}}\). Assume \(T_{\mathrm{cxcv}}<T_{\mathrm{osc}}<T_{\Delta}\). Then the system transitions from a unique globally stable Nash equilibrium (\(w_{EE}<T_{\mathrm{cxcv}}\)), to a unique stable interior non-Nash equilibrium (\(T_{\mathrm{cxcv}}<w_{EE}<T_{\mathrm{osc}}\), \(u\in\mathcal U_{\mathrm{int}}\)), to a stable limit-cycle regime (\(T_{\mathrm{osc}}<w_{EE}<T_{\Delta}\), \(u\in\mathcal U_{\mathrm{int}}\)). For \(u\notin\mathcal U_{\mathrm{int}}\), stable boundary/saturated equilibria and multistability may occur; for \(w_{EE}>T_{\Delta}\), at least one stable boundary variational Nash equilibrium is forced. (C--F) Energy surfaces associated with the excitatory (blue) and inhibitory (orange) neurons, together with the corresponding dynamical regimes.
 (C) Unique Nash equilibrium: the landscape is convex in the excitatory variable and concave in the inhibitory variable, yielding a single equilibrium that is globally asymptotically stable.
 (D) Multiple locally stable equilibria: outside \(\mathcal U_{\mathrm{int}}\), the geometry admits multiple attracting states, and the dynamics settle to one of them depending on the initial condition.
 (E) Non-Nash spiral convergence: for \(u\in\mathcal U_{\mathrm{int}}\) and \(w_{EE}>d_E\), the competing curvatures steer trajectories in a spiral toward a unique stable interior equilibrium.
 (F) Self-sustained oscillations: in the strong self-excitation regime, the same interaction destabilizes the interior equilibrium and drives the dynamics onto a stable limit cycle rather than a fixed point.}
 \label{fig:wilson-cowan}
\end{figure}

\subsection*{The Wilson-Cowan conceals a zero-sum saddle game}
We demonstrate that the proximal gradient play induced by the WC model admits a transformation that reveals a single cost function that drives a zero-sum game (ZSG). 
A ZSG is a special case of non-cooperative games, usually played between two agents, where the gain of one player is exactly the loss of the other.
Thus, if the first player in a ZSG is seeking to minimize a scalar cost \(\En\), the other player must necessarily be seeking to maximize the same cost \(\En\).
Consider the following ZSG on \([0, 1]^2\) played between an excitatory population and an inhibitory one, with firing rates \(x_E\) and \(x_I\) respectively:
\begin{align}
    \label{eq:EI-2Dim-ZS-game}
    \min_{x_E \in [0, 1]}\max_{x_I \in [0, 1]} \En_\text{tot}&(x_I, x_E, u_E, u_I)\\
    \En_\text{tot}(x_E, x_I,u) =&  \left(\frac{1-w_{EE}}{2w_{EI}}\right)x_E^2 - \left(\frac{1 + w_{II}}{2w_{IE}}\right) x_I^2 \nonumber\\&+ x_Ex_I -\frac{u_E}{w_{EI}}x_E + \frac{u_I}{w_{IE}}x_I.
    \label{eq:EI-2Dim-ZS-cost}
\end{align}

We also define the strategy, a weighted gradient play, which we call the internal dynamics of the E-I populations:
\begin{subequations}\label{eq:strategy_desc/asc}
    \begin{align}
        \dot x_E &= -w_{EI} \cdot \nabla_{x_E} \En_\text{tot}(x_E, x_I, u)\\
        \dot x_I &= ~~~w_{IE} \cdot \nabla_{x_I} \En_\text{tot}(x_E, x_I, u).
    \end{align}
\end{subequations}
This is essentially the proximal gradient play driving a state-space model.
The opposing signs reflect the opposing (minimizing vs maximizing) objectives of each player, indicating that firing patterns are driven solely by a self interest. 
Apart from pursuing self-interest, there is also a constraining action of projecting the firing rates, at each step, onto physically feasible values represented by the set \(\mathcal X\). 
Specifically, the projection step encodes physical constraints on firing rates, such as nonnegativity, and an upper limit due to refractory factors like the ion gate switching and diffusion delays.
This two-step \textit{gradient descent} followed by \textit{projection} resembles the proximal point algorithm, often used as a generalized Nash equilibrium seeking iterator in continuous games~\citep{QXM-JWL:25, NL:13}.

\subsection*{A steep saddle leads to monostability}
For a minimax game such as~\eqref{eq:EI-2Dim-ZS-game}, if the cost function has a saddle point within the action space \([0,1]^2\), the saddle point is necessarily a Nash equilibrium.
One can show that the presented E-I zero-sum scalar cost \(\En_\mathrm{tot}\) is either a concave down surface or a saddle surface.  
The zero-sum saddle structure arises in \eqref{eq:EI-2Dim-ZS-cost} if and only if
\begin{equation}\label{eq:saddle_iff} 
    w_{EE} < 1 + \tfrac{w_{EI}w_{IE}}{1+w_{II}}.\quad\boxed{\text{Existence of Zero-Sum Saddle}}
\end{equation}
The existence of the interior saddle point is just one piece of the puzzle. The frame of the puzzle is given by the \textit{steepness} of the saddle, studied using the notions of strong or weak convex-concavity. 
Strong convex-concavity is a powerful concept, as it extends the notion of strong convexity in optimization to saddle cost functions in minimax games.

The saddle surface is strongly convex-concave if and only if \(w_{EE} < 1\), which we henceforth call the dissipation-excitation dominance condition. The strong convex-concavity of the saddle surface ensures the uniqueness of the Nash equilibrium, and the zero-sum saddle existence inequality~\eqref{eq:saddle_iff} is automatically satisfied under \(w_{EE} < 1\). Therefore, under dissipation-excitation dominance, both the strategy~\eqref{eq:strategy_desc/asc} and the WC dynamics~\eqref{eq:EI-2dim} converge to a unique global equilibrium for every input, and the point of convergence is exactly the saddle as long as it lies within \([0,1]^2\).

\subsection*{A shallow saddle leads to multistability and limit cycles}
Switching from strong to weak convex-concavity, we lose guarantees on the uniqueness of the equilibrium for both the game and the dynamics. 
In particular, under weak convex-concavity we observe the emergence of either multiple attractors or a stable limit cycle under~\eqref{eq:strategy_desc/asc}.
Under certain inputs, the game still admits a unique Nash equilibrium at the saddle point.
The saddle location can be obtained by solving \(\nabla\En_\text{tot}(x_E^*,x_I^*,u_{E},u_{I})=0\), for given \(u = (u_E, u_I)^\top\) where \(x_E^*\) and \(x_I^*\) are the saddle coordinates. 
If the input is chosen outside a set \(\mathcal U_{\mathrm{int}}\), the saddle point still exists, but it would be outside of the physically feasible region $[0,1]^{2}$ and thus inaccessible, with the system converging to multiple equilibria on the boundary instead. 

The weakly convex-concave geometric condition on the saddle is by itself insufficient to determine the dynamical behavior of the system, for which we look to~\eqref{eq:strategy_desc/asc}.
As a matter of fact, the eigenvalues associated with the modes of the internal dynamics undergo a Hopf-like crossover at \(w_{EE} = w_{II} + 2\). For \(w_{EE} > w_{II} + 2\), the system converges to the limit cycle. 
For smaller values, trajectories spiral toward the interior saddle point when it exists.
The latter is the well-known inhibitory-stabilized regime, where the excitatory dynamics would be unstable were it not for the inhibitory stabilization that leads to convergence to the interior saddle instead of the limit cycle.

Finally, when~\eqref{eq:saddle_iff} is violated, the ZSG cost surface is concave down. The minimization of the cost by \(x_E\) implies that the \(x_E\) axis is always either fully activated or inactivated depending on initial conditions.
This regime captures excitation-dominated multistability for all inputs. 

\section*{Lateral inhibition and neural input sensitivity}
Monostability of excitatory-inhibitory circuits is the cornerstone of robust and reliable contrast enhancement. Contrast enhancement in E-I networks exploits a phenomenon known as lateral inhibition, in which a single inhibitory neuron integrates inputs from multiple excitatory neurons and, through recurrent feedback, suppresses their activity~\citep{IJS-SM:11}.  When a set of excitatory neurons receives external inputs that differ only slightly in intensity, the resulting excitation recruits the central inhibitory neuron, which in turn feeds inhibition back onto the entire excitatory ensemble. The recurrent interaction effectively suppresses all but the strongest excitatory response, thereby amplifying small input differences into a discrete, winner-take-all outcome. Thus, the contrast-enhancing dynamics provide a substrate for reliable categorical computation, with the overall suppressive action of lateral inhibition being consistent with the sparse activity patterns observed in neural circuits~\citep{GW-KWM-NR:14}. The key ingredient enabling robust discrimination is monostability, which we define as the global asymptotic stability of the underlying E-I dynamics. In this regime, activity converges to a unique equilibrium for any admissible input, ensuring reliable and reproducible responses. Monostability arises from Lyapunov diagonal stability, a structural condition in network theory that enforces component-wise dissipation and suppresses unstable modes in recurrent systems. In the remainder of this section, we analyze a minimal E-I motif that achieves robust input discrimination under this condition. We then extend the construction to winner-take-all dynamics and, ultimately, to a columnar organization that scales the same stabilizing principle to layered architectures.

\subsection*{E$\mathbf{^{2}}$-I lateral inhibition model}
The simplest realization of a laterally inhibitive E-I network is a firing-rate model comprised of two excitatory neurons receiving external input and a central inhibitory neuron receiving only circuital input, which we will name the E$^{2}$-I-network:
\begin{equation}\label{eq: EIE}
    \begin{pmatrix}
        \dot x_{E_1}\\
        \dot x_{I}\\
        \dot x_{E_2}
    \end{pmatrix} = 
    -\begin{pmatrix}
        x_{E_1}\\
        x_{I}\\
        x_{E_2}
    \end{pmatrix} +
    \left[W
    \begin{pmatrix}
        x_{E_1}\\
        x_{I}\\
        x_{E_2}
    \end{pmatrix} + 
    \begin{pmatrix}
        u_{E_1}\\
        0\\
        u_{E_2}
    \end{pmatrix}
    \right]_0^1
\end{equation}
where the synaptic matrix $W$ is given by
\begin{equation}\label{eq: EIEsyn}
    W = \begin{pmatrix}
        w_{EE} & -w_{EI} & 0\\
        w_{IE} & -w_{II} & w_{IE}\\
        0 & -w_{EI} & w_{EE}
    \end{pmatrix}.
\end{equation}
Notice that, in our setting, the excitatory neurons do not interact directly, but rather in an ancillary manner, through their common inhibitory node. In the simple E$^2$-I motif, monostability is ensured by the simple dissipation-excitation dominance condition
\begin{equation}
    w_{EE}<1\qquad \text{monostability}.
\end{equation}
Neural correlates of successful laterally inhibitive activity are equilibrium states for the system~\eqref{eq: EIE} where either one of the excitatory neurons and the inhibitory neurons are active - namely $x^{\star}_{1}=(1,1,0)$ or $x^{\star}_{2}=(0,1,1)$. The functional condition ensuring the desired neural output is
\begin{equation}
    w_{IE}\geq 1+w_{II} \qquad \text{functionality},
\end{equation}
hence a dominance of inhibitory dissipation and self-inhibition by the excitatory-to-inhibitory input. Under the novel dissipation-excitation dominance and functional conditions, we define the \emph{separation parameter} $\updelta=(1-w_{EE}+w_{EI})>0$. The system~\eqref{eq: EIE} then acts as a $\updelta$-precision lateral inhibition network, a biologically plausible winner-take-all (WTA) mechanism that correctly categorizes all inputs satisfying
\begin{equation}\label{eq: input-sep}
    u_{E_1}\geq\updelta,\quad u_{E_2}\leq-\updelta\qquad \text{(or viceversa)}
\end{equation}
Neural inputs satisfying condition~\eqref{eq: input-sep} might be relayed by specialized pre-processing circuits that center biased stimuli around a zero mean signal for fast and efficient discrimination. Together, these conditions identify a regime in which dissipation controls excitation, while $w_{IE}$ actively recruits inhibition to stabilize selective amplification. Further details, including timescale separation of excitatory and inhibitory dynamics, and energy visualizations, are provided in the SI, section IV.

\subsection*{E$^{k}$-I winner-take-all model}
The E$^{2}$-I architecture can be straightforwardly generalized to a generic E$^{k}$-I network. The E$^{k}$-I model, with $k\geq 2$ excitatory neurons and one inhibitory neuron, implements biologically plausible winner-take-all~\citep{GS:78} dynamics via the firing-rate model. The E$^k$-I inherits the same dissipation-excitation and excitatory-to-inhibitory dominance conditions from the E$^2$-I, hence
\begin{align}
    w_{EE}&<1\qquad\qquad\quad \text{monostability},\\
    w_{EI}&\geq 1+w_{II}\qquad \text{functionality}.
\end{align}
In the general setting of a E$^k$-I circuit, an excitatory input $u\in\real^k$ is $\updelta$-precision laterally inhibitive whenever $u_{E_i}\geq\updelta$ for $i=1,\dots,k$ and $u_{E_j}\leq-\updelta$ for all $j\neq i$. The generalization of WTA to an arbitrary number of excitatory neurons provides solid ground to understand the sparsification of activity mediated by inhibitory interneurons in columnar strata.

In general, the ability of the E$^{k}$-I circuit to discriminate between competing inputs is intrinsically tied to its $\updelta$-precision threshold. At first glance, such finite precision might appear as a severe limitation; yet it is unlikely that biological nervous systems rely on rigid mechanisms with little capacity to adapt to weak or noisy signals. Instead, this apparent constraint invites a higher-level organizational solution. In the next section, we show that a columnar arrangement of multiple E$^{k}$-I microcircuits overcomes finite-precision limits by progressively amplifying small input differences. This hierarchical architecture enables reliable discrimination of arbitrarily weakly separated signals and allows us to explicitly estimate the number of layers required for categorization as a function of circuit precision and input separation. 
  
\begin{figure}[!tbph]
    \centering
    \subfloat{\includegraphics[width=\linewidth]{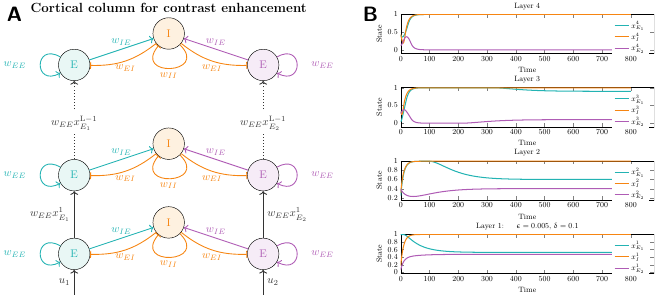}}

    \caption{\textbf{Schematic of a column of E$^{2}$-I networks and layer-wise forward contrast enhancement.} (A) Vertical arrangement of E$^{2}$-I modules, each forwarding its excitatory activity to the layer above. In the first layer, the two excitatory neurons receive distinct external inputs, $u_{E_1}$ and $u_{E_2}$. Within the $\updelta$-FLI subthreshold regime, the layer converges to a Nash equilibrium $(x_{E_1}^{1,*},x_{E_2}^{1,*})$ that lies between the categorical states $(1,0)$ and $(0,1)$. These equilibrium activities are then propagated to the next layer. If the weighted activity difference satisfies $w_{EE}|x_{E_1}^{1,}-x_{E_2}^{1,}|<\updelta$, the subsequent layer also remains in a subthreshold regime, and the process continues. Once the accumulated difference exceeds the threshold ($w_{EE}|x_{E_1}^{\mathrm{L}-1,*}-x_{E_2}^{\mathrm{L}-1,*}|\geq\updelta$), the next layer sharply discriminates the two input channels. (B) Evolution of excitatory activities across four successive E$^{2}$-I layers (from bottom to top). Each layer operates with $\updelta=0.1$, has self-excitatory weights $w_{EE}=0.8$, and the first layer receives external inputs separated by $\upepsilon=u_{E_2}-u_{E_1}=0.005$. The difference between the equilibrium activities of the two excitatory neurons progressively amplifies across layers, eventually surpassing the $\updelta$-FLI threshold in the third layer and reaching categorical separation in the fourth, demonstrating hierarchical contrast enhancement. In particular, notice how for the chosen parameters the layer depth estimate is $\mathrm{L} = \lceil 3.16\rceil = 4$.}
    \label{fig: Cort}
\end{figure}

\subsection*{Lateral inhibitory columns for contrast enhancement}
The E$^{k}$-I lateral inhibition network is a simple biologically plausible model of contrast enhancement in excitatory-inhibitory motifs, but relies on the assumption of the incoming inputs being separated enough to allow categorical discrimination. Yet, earlier nervous streams may receive inputs that are subthreshold with respect to the precision of the network, and therefore fail in the categorical detection of contrasts. Furthermore, a unique E$^{k}$-I network able to detect very subtle differences in the inputs would require a careful fine-tuning of the synaptic parameters, thereby imposing stringent structural conditions. Instead, real neuronal networks are characterized by notable structural variability~\citep{ME-GJM:06}, and theoretical studies suggests that different combinations of synaptic parameters result in the same network phenomenology~\citep{PAA-BD-ME:04}. A plausible solution to the conflict between neural biology and mathematical modeling for contrast enhancement comes in the form of hierarchical (columnar) organization of neural circuits.

The hierarchical organization of neural circuits suggests that contrast enhancement is not a single-stage operation, but a computation that recurs across layers and columns. Each lamina in a column can be abstracted as an E$^{k}$-I microcircuit, in which a central inhibitory neuron mediates lateral inhibition among neighboring excitatory cells. This motif closely matches the organization of local parvalbumin-expressing interneurons, which receive dense excitation and provide broad, fast inhibition that normalizes population activity within a layer~\citep{ABV-BW-CM:12}. When arranged in a vertical cascade, the E$^{k}$-I layers form a hierarchical contrast-enhancement network: each layer receives the excitatory activity of the one below, adjusts the inhibitory bias to the average incoming input, and transmits a sharpened, sparser representation upward. This architecture aligns with canonical microcircuit descriptions~\citep{DRJ-MKAC:04} and explains the progressive sparsification and selectivity observed across cortical layers~\citep{RM-PT:99, VWE-GJL:00}. Moreover, it provides a biological interpretation for normalization modules in artificial neural networks, which similarly implement recurrent inhibitory feedback via divisive normalization to stabilize dynamics and improve feature discrimination~\citep{BMF-CSA-DGH:21}. In this view, columnar neuronal circuits become a stack of interacting E$^{k}$-I games (see Fig.~\ref{fig: Cort}(A)), each layer refining the contrast and selectivity of the one beneath, thereby enabling robust hierarchical computation under biological constraints.

Consider a column composed of $\mathrm{L} \in \mathbb{N}$ layers, each representing for notational simplicity an E$^{2}$-I network with state $\mathrm{x}^{l} = (x_{E_{1}}^{l}, x_{I}^{l}, x_{E_{2}}^{l})$, for $l = 1, \dots, \mathrm{L}$. The dynamics of the $l$-th lamina can be written compactly as
\begin{equation}
    \dot{\mathrm{x}}^{l}=-\mathrm{x}^{l}+[W\mathrm{x}^{l}+U\mathrm{x}^{l-1}]_{0}^{1},
\end{equation}
where $W$ denotes the intra-laminar synaptic matrix~\eqref{eq: EIEsyn}, $U = \mathrm{diag}(w_{EE}, 0, w_{EE})$ the inter-laminar input matrix, and $\mathrm{x}^{0} = (u_{E_{1}}/w_{EE}, 0, u_{E_{2}}/w_{EE})$ encodes the external drive to the bottom layer. As implied by the structure of $U$, feedforward propagation occurs exclusively among excitatory neurons and exploits compensatory inhibitory biases to center the input between layers; instead, the inhibitory feedback remains local to each layer. Each lamina of the neuronal column functions as a $\updelta$-precision lateral inhibitory network under the synaptic parameter constraints
\begin{align}
    w_{EE}&<1/2\qquad\qquad \text{monostability},\nonumber\\
    w_{EI}&\geq 1+w_{II}\qquad \text{functionality}.\nonumber
\end{align}
When the incoming external inputs are subthreshold, i.e., $\upepsilon\leq u_{E_1}<\updelta$ and $-\updelta<u_{E_2}\leq-\upepsilon$ (or viceversa) for $\upepsilon>0$, the layered architecture acts as an amplifier of discriminability: feedforward excitation progressively sharpens the difference between the two input streams until, at a sufficient depth, one layer reaches the WTA regime and performs categorical selection (see Fig.~\ref{fig: Cort}(B)). The minimal number of layers required for successful discrimination can be computed analytically as  the ceiling
\begin{equation}\label{eq: lay}
    \mathrm{L}=  1+\Bigg\lceil\frac{\ln\left(\frac{\upepsilon}{\updelta}\right)}{\ln\left(\frac{1-w_{EE}}{w_{EE}}\right)}\Bigg\rceil \quad{\text{layers depth}}
\end{equation}
This estimate is positive whenever the microcircuits satisfy the dissipation–excitation dominance conditions. Importantly, the number of required layers increases as the input difference $\upepsilon$ decreases, indicating that deeper columns achieve finer categorical precision. Conversely, we directly recover the single-layer $\updelta$-precision WTA E$^{2}$-I network in the limit
\begin{equation}
    \mathrm{L}\xrightarrow[]{\upepsilon\to\updelta}1.
\end{equation}
Therefore, the mismatch between the desired precision $\upepsilon$ and the per-layer achievable precision $\updelta$ effectively links column depth to the computational resolution of neuronal input discrimination.

\section*{Conclusions}

\subsection*{A novel framework for \emph{asymmetric} neural dynamics}
In this manuscript, we have proposed a new biologically plausible framework that unveils the competitive nature of neural interactions in \emph{asymmetric} neural networks. Starting from well-known examples of \emph{symmetric} neural networks in the literature, we showcased a powerful connection from canonical energy-based methods in physics and Lyapunov stability in dynamical systems to proximal gradient methods in optimization and machine learning.
In particular, we have proved how the energy function of \emph{symmetric} firing-rate networks coincides with a composite cost function, arising from the associated proximal gradient dynamics, that includes both the global cost of interaction with other neurons and the aggregate cost of single-neuronal activation. We then extend proximal gradient dynamics to \emph{asymmetric} networks, associating to each neuron its own energy decomposed into the cost of interaction with the other neurons and the cost of self-activation. The joint optimization of the cost functions associated to all the neurons in the network instantiate a game where the neurons act as selfish agents trying to minimize their own energy.

The novel energetic game-theoretic framework has enabled us to provide a new perspective on canonical models from the theoretical neuroscience literature, with a focus on the strategic value of single neuron dynamics. In particular, we showed how the known results about the dynamical regimes of the WC model admit a direct energetic interpretation: global asymptotic convergence towards a unique state regardless of the ambient input is a result of both the excitatory and inhibitory populations converging to a steep zero-sum game saddle. On the other hand, as the system parameters lower the saddle steepness, indecision in the form of limit cycle s arises on the zero-sum game, showing weak competition between the neurons. 

In the same spirit, we study lateral inhibition dynamics for networks of many excitatory neurons and one inhibitory neuron producing enhancement of contrasts in the external input. Specifically, we showed how the dynamics of E$^{2}$-I networks converge to a state representing a discrete categorization of the contrasts in the external input, which also coincides with the unique Nash equilibrium of the game played by the neurons. The finite precision of a single E$^{2}$-I network can be enhanced by combining many of them in a lateral inhibition column organization to progressively sharpen subtle contrasts in the input, reaching a state of discrete categorization in a finite number of layers, for which we provide a closed-form expression. Finally, in the SI we provide a full characterization of stability for generic E-I systems with homogeneous weights and arbitrary network topology, establishing a tight relationship between control-theoretic, graph-theoretic, and neuroscientific concepts.

\subsection*{Mechanism design and randomness}
Starting from known \emph{asymmetric} firing-rate dynamics, we have derived a latent mathematical structure that frames neural activity as a game among interacting neurons, each pursuing its own strategy. Yet, this is only one side of the coin. The other side reveals the complementary field of mechanism design~\citep{GA:73, HL-RS:06} - often described as reverse game theory - which focuses on shaping costs and payoffs to elicit desired collective behaviors. Mechanism design has already transformed domains involving large populations of strategic agents, from online auctions~\citep{DF-GM-PA:20} to voting systems~\citep{NN-RT-TE:07}. Translating these principles into neuroscience could open a powerful new line of inquiry: engineering neural “games” by tailoring neuronal energy functions to yield specific dynamic or computational outcomes. Such a framework would provide a systematic route to synthesize neural architectures that exhibit, by design, desired regimes such as selective amplification, oscillation, or competition.

In addition, further works could leverage statistical tools to study excitatory-inhibitory networks for contrast enhancement in more general contexts. Our results on the dynamic regimes of excitatory-inhibitory networks hinged on the simplifying assumption of weights homogeneity - all the synapses of the same type (e.g., excitatory-to-inhibitory) have the same strength. Allowing instead heterogeneous but sign-constrained synapses would enable a richer description of how distributed parameter variability shapes emergent phenomena. By generalizing the synaptic structure of E$^{k}$-I networks, one could map how sets of synaptic weight configurations govern transitions between winner-take-all, oscillatory, and chaotic regimes - much as phase transitions describe order and disorder in physical systems~\citep{KJ-SH:15, MF-OS:17}. In doing so, statistical methods could bridge microscopic variability with macroscopic computational function, bringing the theory of neural dynamics closer to both biological realism and designable complexity.

\subsection*{Engineering your way into neuroscience}
Neuroscientists observe, physicists explain, and engineers build. In recent years, these three modes of inquiry have increasingly converged to study neural systems. The nervous system is not only an object of observation or explanation - it is a dynamical machine, whose balance, adaptability, and computation can be engineered. In this work, Lyapunov diagonal stability provides a powerful example of how an engineering concept can illuminate a biological principle. It formalizes the delicate equilibrium between excitation and inhibition, revealing how networks maintain stability while amplifying and categorizing information.

In excitatory-inhibitory circuits, stability is more than a static property: it shapes how neurons interact, compete, and cooperate to extract structure from the environment. Expressing this principle in mathematical terms explains existing observations and shows how such dynamics could, in principle, be engineered. The convergence of scientific modes suggests a deeper partnership between disciplines: control theory and dynamical systems offer the rigorous tools to describe robustness and adaptability, while neuroscience provides the architectures and constraints that make those tools meaningful in living systems. And where engineering seeks structure, physics generalizes it to randomness - showing how even in the presence of noise and heterogeneity, order and computation can emerge through collective dynamics.

To fully understand the brain, we may need to learn to build with its principles. Stability, feedback, and balance become the grammar of computation itself. Engineering our way into neuroscience means recognizing that the same laws that make artificial systems stable can also help us understand how biological systems \emph{think}.

\paragraph{Acknowledgments:} This work was supported in part by ARO MURI grant W911NF-24-1-0228 and by Next Generation EU grant C96E22000350007.

\bibliographystyle{unsrtnat}
\bibliography{SB-main,FB,SI}  

@article{VC-AG-AD-GR-FB:23a,
  author =	 {V. Centorrino and A. Gokhale and A. Davydov and G. Russo
                  and F. Bullo},
  title =	 {Positive Competitive Networks for Sparse Reconstruction},
  year =	 2024,
  journal =	 {Neural Computation},
  volume =	 36,
  number =	 6,
  pages =	 {1163–1197},
  nourl =	 {http://arxiv.org/abs/2311.03821},
  doi =		 {10.1162/neco_a_01657},
  keywords =	 {Contraction Theory, Neural Networks, Theoretical Neuroscience},
  funding =	 {FA9550-22-1-0059},
  abstract =	 {We propose and analyze a continuous-time firing-rate
                  neural network, the positive firing-rate competitive
                  network, to tackle sparse reconstruction problems with
                  non-negativity constraints. These problems, which involve
                  approximating a given input stimulus from a dictionary
                  using a set of sparse (active) neurons, play a key role
                  in a wide range of domains, including for example
                  neuroscience, signal processing, and machine
                  learning. First, by leveraging the theory of proximal
                  operators, we relate the equilibria of a family of
                  continuous-time firing-rate neural networks to the
                  optimal solutions of sparse reconstruction
                  problems. Then, we prove that our model is a positive
                  system and give rigorous conditions for the convergence
                  to the equilibrium. Specifically, we show that the
                  convergence: (i) only depends on a property of the
                  dictionary; (ii) is linear-exponential, in the sense that
                  initially the convergence rate is at worst linear and
                  then, after a transient, it becomes exponential. We also
                  prove a number of technical results to assess the
                  contractivity properties of the neural dynamics of
                  interest. Our analysis leverages contraction theory to
                  characterize the behavior of a family of firing-rate
                  competitive networks for sparse reconstruction with and
                  without non-negativity constraints. Finally, we validate
                  the effectiveness of our approach via a numerical
                  example.  }
}

@Article{AG-AD-FB:24d,
  author =	 {A. Gokhale and A. Davydov and F. Bullo},
  title =	 {Proximal Gradient Dynamics: {Monotonicity}, Exponential
                  Convergence, and Applications},
  journal =	 lcss,
  year =	 2024,
  volume =	 8,
  pages =	 {2853-2858},
  doi =		 {10.1109/LCSYS.2024.3516632},
  olddoi =	 {10.48550/arXiv.2409.10664},
  funding =	 {FA9550-22-1-0059},
  keywords =	 {Contraction Theory, Neural Networks}
}

@Article{SB-GB-FB-SZ:24m,
  author =	 {S. Betteti and G. Baggio and F. Bullo and S. Zampieri},
  title =	 {Firing Rate Models as Associative Memory:
                  Excitatory-Inhibitory Balance for Robust Retrieval},
  journal =	 {Neural Computation},
  olddoi =	 {10.48550/arXiv.2411.07388},
  pages =	 {1-32},
  year =	 2025,
  month =	 08,
  doi =		 {10.1162/neco.a.28},
  keywords =	 {Neural Networks, Theoretical Neuroscience},
  funding =	 {AFOSR FA9550-22-1-0059},
  abstract =	 {Firing rate models are dynamical systems widely used in
                  applied and theoretical neuroscience to describe local
                  cortical dynamics in neuronal populations. By providing a
                  macroscopic perspective of neuronal activity, these
                  models are essential for investigating oscillatory
                  phenomena, chaotic behavior, and associative memory
                  processes. Despite their widespread use, the application
                  of firing rate models to associative memory networks has
                  received limited mathematical exploration, and most
                  existing studies are focused on specific
                  models. Conversely, well-established associative memory
                  designs, such as Hopfield networks, lack key biologically
                  relevant features intrinsic to firing rate models,
                  including positivity and interpretable synaptic matrices
                  reflecting the action of long-term potentiation and
                  long-term depression. To address this gap, we propose a
                  general framework that ensures the emergence of rescaled
                  memory patterns as stable equilibria in the firing rate
                  dynamics. Furthermore, we analyze the conditions under
                  which the memories are locally and globally
                  asymptotically stable, providing insights into
                  constructing biologically plausible and robust systems
                  for associative memory retrieval.},
}

@article{ASI:77li,
  title = {Dynamics of pattern formation in lateral-inhibition type neural fields},
  volume = {27},
  ISSN = {1432-0770},
  DOI = {10.1007/bf00337259},
  number = {2},
  journal = {Biological Cybernetics},
  publisher = {Springer Science and Business Media LLC},
  author = {Amari,  S.~I.},
  year = {1977},
  pages = {77–87}
}

@article{ABV-BW-CM:12,
  title = {Parvalbumin-Expressing Interneurons Linearly Transform Cortical Responses to Visual Stimuli},
  volume = {73},
  ISSN = {0896-6273},
  DOI = {10.1016/j.neuron.2011.12.013},
  number = {1},
  journal = {Neuron},
  publisher = {Elsevier BV},
  author = {Atallah,  B.~V. and Bruns,  W. and Carandini,  M. and Scanziani,  M.},
  year = {2012},
  month = jan,
  pages = {159–170}
}

@article{AL:24,
  title = {In Search of Dispersed Memories: Generative Diffusion Models Are Associative Memory Networks},
  volume = {26},
  ISSN = {1099-4300},
  DOI = {10.3390/e26050381},
  number = {5},
  journal = {Entropy},
  publisher = {MDPI AG},
  author = {Ambrogioni,  L.},
  year = {2024},
  month = apr,
  pages = {381}
}

@book{BT-OGJ:98,
  title = {Dynamic Noncooperative Game Theory},
  ISBN = {9781611971132},
  DOI = {10.1137/1.9781611971132},
  publisher = {Society for Industrial and Applied Mathematics},
  author = {Başar,  T. and Olsder,  G.~J.},
  edition = 2,
  year = {1998},
}

@article{BMF-CSA-DGH:21,
  title = {Learning divisive normalization in primary visual cortex},
  volume = {17},
  ISSN = {1553-7358},
  DOI = {10.1371/journal.pcbi.1009028},
  number = {6},
  journal = {PLOS Computational Biology},
  publisher = {Public Library of Science (PLoS)},
  author = {Burg,  M.~F. and Cadena,  S.~A. and Denfield,  G.~H. and Walker,  E.~Y. and Tolias,  A.~S. and Bethge,  M. and Ecker,  A.~S.},
  editor = {Richards,  B.~A.},
  year = {2021},
  month = jun,
  pages = {e1009028}
}

@article{BS-BG-BF-ZS:25fr,
  title = {Firing Rate Models as Associative Memory: Synaptic Design for Robust Retrieval},
  ISSN = {1530-888X},
  DOI = {10.1162/neco.a.28},
  journal = {Neural Computation},
  publisher = {MIT Press},
  author = {Betteti,  S. and Baggio,  G. and Bullo,  F. and Zampieri,  S.},
  year = {2025},
  month = aug,
  pages = {1–32}
}

@inproceedings{CFH:83,
  title={Nonsmooth analysis and optimization},
  author={Clarke, F.~H.},
  booktitle={Proceedings of the international congress of mathematicians},
  volume={5},
  pages={847--853},
  year={1983}
}

@article{CMA-GS:83,
  title = {Absolute stability of global pattern formation and parallel memory storage by competitive neural networks},
  volume = {SMC-13},
  ISSN = {2168-2909},
  DOI = {10.1109/tsmc.1983.6313075},
  number = {5},
  journal = {IEEE Transactions on Systems,  Man,  and Cybernetics},
  publisher = {Institute of Electrical and Electronics Engineers (IEEE)},
  author = {Cohen,  M.~A. and Grossberg,  S.},
  year = {1983},
  month = sep,
  pages = {815–826}
}

@article{CBW-GMJ:90,
  title = {Intrinsic firing patterns of diverse neocortical neurons},
  volume = {13},
  ISSN = {0166-2236},
  DOI = {10.1016/0166-2236(90)90185-d},
  number = {3},
  journal = {Trends in Neurosciences},
  publisher = {Elsevier BV},
  author = {Connors,  B.~W. and Gutnick,  M.~J.},
  year = {1990},
  month = mar,
  pages = {99–104}
}

@article{CA:00,
  title = {Synaptic Mechanisms and Network Dynamics Underlying Spatial Working Memory in a Cortical Network Model},
  volume = {10},
  ISSN = {1460-2199},
  DOI = {10.1093/cercor/10.9.910},
  number = {9},
  journal = {Cerebral Cortex},
  publisher = {Oxford University Press (OUP)},
  author = {Compte,  A.},
  year = {2000},
  month = sep,
  pages = {910–923}
}

@article{DRJ-MKAC:04,
  title = {Neuronal circuits of the neocortex},
  volume = {27},
  ISSN = {1545-4126},
  DOI = {10.1146/annurev.neuro.27.070203.144152},
  number = {1},
  journal = {Annual Review of Neuroscience},
  publisher = {Annual Reviews},
  author = {Douglas,  R.~J. and Martin,  K.~A.~C.},
  year = {2004},
  month = jul,
  pages = {419–451}
}

@book{DP-ALF:05,
  title={Theoretical Neuroscience: Computational and Mathematical Modeling of Neural Systems},
  author={Dayan, P. and Abbott, L.~F.},
  year={2005},
  publisher={MIT press}
}

@article{DF-GM-PA:20,
  title = {Marketing Agencies and Collusive Bidding in Online Ad Auctions},
  volume = {66},
  ISSN = {1526-5501},
  DOI = {10.1287/mnsc.2019.3457},
  number = {10},
  journal = {Management Science},
  publisher = {Institute for Operations Research and the Management Sciences (INFORMS)},
  author = {Decarolis,  F. and Goldmanis,  M. and Penta,  A.},
  year = {2020},
  month = oct,
  pages = {4433–4454}
}

@book{DR:25,
  title = {Graph Theory},
  ISBN = {9783662701072},
  ISSN = {2197-5612},
  DOI = {10.1007/978-3-662-70107-2},
  journal = {Graduate Texts in Mathematics},
  publisher = {Springer Berlin Heidelberg},
  author = {Diestel,  R.},
  year = {2025}
}

@article{EJC-FP-KK:54,
  title = {Cholinergic and inhibitory synapses in a pathway from motor‐axon collaterals to motoneurones},
  volume = {126},
  ISSN = {1469-7793},
  DOI = {10.1113/jphysiol.1954.sp005226},
  number = {3},
  journal = {The Journal of Physiology},
  publisher = {Wiley},
  author = {Eccles,  J.~C. and Fatt,  P. and Koketsu,  K.},
  year = {1954},
  month = dec,
  pages = {524–562}
}

@book{EGB-TDH:10,
  title = {Mathematical Foundations of Neuroscience},
  ISBN = {9780387877082},
  ISSN = {0939-6047},
  DOI = {10.1007/978-0-387-87708-2},
  journal = {Interdisciplinary Applied Mathematics},
  publisher = {Springer New York},
  author = {Ermentrout,  G.~B. and Terman,  D.~H.},
  year = {2010}
}

@book{EAA-MGF:23,
  title = {Game Theory: An Introduction with Step-by-Step Examples},
  ISBN = {9783031375743},
  DOI = {10.1007/978-3-031-37574-3},
  publisher = {Springer International Publishing},
  author = {Espinola-Arredondo,  A. and Muñoz-Garcia,  F.},
  year = {2023}
}

@article{FM-TA:95,
  title = {New conditions for global stability of neural networks with application to linear and quadratic programming problems},
  volume = {42},
  ISSN = {1057-7122},
  DOI = {10.1109/81.401145},
  number = {7},
  journal = {IEEE Transactions on Circuits and Systems I: Fundamental Theory and Applications},
  publisher = {Institute of Electrical and Electronics Engineers (IEEE)},
  author = {Forti,  M. and Tesi,  A.},
  year = {1995},
  month = jul,
  pages = {354–366}
}

@article{FY-CMA-KTG:96,
  title = {Dynamics of a Winner-Take-All Neural Network},
  volume = {9},
  ISSN = {0893-6080},
  DOI = {10.1016/0893-6080(96)00019-6},
  number = {7},
  journal = {Neural Networks},
  publisher = {Elsevier BV},
  author = {Fang,  Y. and Cohen,  M.~A. and Kincaid,  T.~G.},
  year = {1996},
  month = oct,
  pages = {1141–1154}
}

@article{GA:73,
  title = {Manipulation of Voting Schemes: A General Result},
  volume = {41},
  ISSN = {0012-9682},
  DOI = {10.2307/1914083},
  number = {4},
  journal = {Econometrica},
  publisher = {JSTOR},
  author = {Gibbard,  A.},
  year = {1973},
  month = jul,
  pages = {587}
}

@article{GS:78,
  title = {Competition,  decision,  and consensus},
  volume = {66},
  ISSN = {0022-247X},
  DOI = {10.1016/0022-247x(78)90249-4},
  number = {2},
  journal = {Journal of Mathematical Analysis and Applications},
  publisher = {Elsevier BV},
  author = {Grossberg,  S.},
  year = {1978},
  month = nov,
  pages = {470–493}
}

@article{GW-KWM:02,
  title = {Mathematical formulations of Hebbian learning},
  volume = {87},
  ISSN = {0340-1200},
  DOI = {10.1007/s00422-002-0353-y},
  number = {5–6},
  journal = {Biological Cybernetics},
  publisher = {Springer Science and Business Media LLC},
  author = {Gerstner,  W. and Kistler,  W.~M.},
  year = {2002},
  month = dec,
  pages = {404–415}
}

@book{GW-KWM-NR:14,
  title={Neuronal Dynamics: From Single Neurons to Networks and Models of Cognition},
  author={Gerstner, W. and Kistler, W.~M. and Naud, R. and Paninski, L.},
  year={2014},
  publisher={Cambridge University Press}
}

@book{HL-RS:06,
  title = {Designing Economic Mechanisms},
  ISBN = {9780511754258},
  DOI = {10.1017/cbo9780511754258},
  publisher = {Cambridge University Press},
  author = {Hurwicz,  L. and Reiter,  S.},
  year = {2006},
  month = may 
}

@article{HJJ:82,
  title = {Neural networks and physical systems with emergent collective computational abilities.},
  volume = {79},
  ISSN = {1091-6490},
  DOI = {10.1073/pnas.79.8.2554},
  number = {8},
  journal = {Proceedings of the National Academy of Sciences},
  publisher = {Proceedings of the National Academy of Sciences},
  author = {Hopfield,  J.~J.},
  year = {1982},
  month = apr,
  pages = {2554–2558}
}

@article{HJJ:84,
  title = {Neurons with graded response have collective computational properties like those of two-state neurons.},
  volume = {81},
  ISSN = {1091-6490},
  DOI = {10.1073/pnas.81.10.3088},
  number = {10},
  journal = {Proceedings of the National Academy of Sciences},
  publisher = {Proceedings of the National Academy of Sciences},
  author = {Hopfield,  J.~J.},
  year = {1984},
  month = may,
  pages = {3088–3092}
}

@article{HB-DA-HAR:06,
  title = {Neocortical network activity in vivo is generated through a dynamic balance of excitation and inhibition},
  volume = {26},
  ISSN = {1529-2401},
  DOI = {10.1523/jneurosci.5297-05.2006},
  number = {17},
  journal = {The Journal of Neuroscience},
  publisher = {Society for Neuroscience},
  author = {Haider,  B. and Duque,  A. and Hasenstaub,  A.~R. and McCormick,  D.~A.},
  year = {2006},
  month = apr,
  pages = {4535–4545}
}

@book{HRA-JCR:85,
  title = {Matrix Analysis},
  ISBN = {9780511810817},
  publisher = {Cambridge University Press},
  author = {Horn,  R.~A. and Johnson,  C.~R.},
  year = {1985},
  month = dec 
}

@article{HD-WJ:87,
  title = {Hypercubic structures in orthogonal {Hopfield} models},
  volume = {36},
  ISSN = {0556-2791},
  DOI = {10.1103/physreva.36.4968},
  number = {10},
  journal = {Physical Review A},
  publisher = {American Physical Society (APS)},
  author = {Horn,  D. and Weyers,  J.},
  year = {1987},
  month = nov,
  pages = {4968–4974}
}

@article{HB-LY-PB:23,
  title={Energy transformer},
  author={Hoover, B. and Liang, Y. and Pham, B. and Panda, R. and Strobelt, H. and Chau, D.~H. and Zaki, M. and Krotov, D.},
  journal={Advances in neural information processing systems},
  volume={36},
  pages={27532--27559},
  year={2023}
}

@article{IJS-SM:11,
  title = {How Inhibition Shapes Cortical Activity},
  volume = {72},
  ISSN = {0896-6273},
  DOI = {10.1016/j.neuron.2011.09.027},
  number = {2},
  journal = {Neuron},
  publisher = {Elsevier BV},
  author = {Isaacson,  J.~S. and Scanziani,  M.},
  year = {2011},
  month = oct,
  pages = {231–243}
}

@article{KD:86,
  title = {Sequential state generation by model neural networks.},
  volume = {83},
  ISSN = {1091-6490},
  DOI = {10.1073/pnas.83.24.9469},
  number = {24},
  journal = {Proceedings of the National Academy of Sciences},
  publisher = {Proceedings of the National Academy of Sciences},
  author = {Kleinfeld,  D},
  year = {1986},
  month = dec,
  pages = {9469–9473}
}

@book{KHK:02,
      author        = "Khalil, H.~K.",
      title         = "Nonlinear Systems",
      publisher     = "Prentice-Hall",
      address       = "Upper Saddle River, NJ",
      year          = "2002",
      url           = "https://cds.cern.ch/record/1173048",
      note          = "The book can be consulted by contacting: PH-AID: Wallet,
                       Lionel",
}

@article{KJ-SH:15,
  title = {Transition to Chaos in Random Neuronal Networks},
  volume = {5},
  ISSN = {2160-3308},
  DOI = {10.1103/physrevx.5.041030},
  number = {4},
  journal = {Physical Review X},
  publisher = {American Physical Society (APS)},
  author = {Kadmon,  J. and Sompolinsky,  H.},
  year = {2015},
  month = nov 
}

@inproceedings{KD-HJJ:16,
  title =	 {Dense associative memory for pattern recognition},
  booktitle =	 {Advances in neural information processing systems},
  author =	 {Krotov, D. and Hopfield, J.~J.},
  volume =	 29,
  year =	 2016,
  DOI =		 {doi.org/10.48550/arXiv.1606.01164},
}

@article{KL-SJJ-KD:25,
  title = {Neuron–astrocyte associative memory},
  volume = {122},
  ISSN = {1091-6490},
  DOI = {10.1073/pnas.2417788122},
  number = {21},
  journal = {Proceedings of the National Academy of Sciences},
  publisher = {Proceedings of the National Academy of Sciences},
  author = {Kozachkov,  L. and Slotine,  J.~J. and Krotov,  D.},
  year = {2025},
  month = may 
}

@article{JH-HH:04,
  title = {Harnessing nonlinearity: predicting chaotic systems and saving energy in wireless communication},
  volume = {304},
  ISSN = {1095-9203},
  DOI = {10.1126/science.1091277},
  number = {5667},
  journal = {Science},
  publisher = {American Association for the Advancement of Science (AAAS)},
  author = {Jaeger,  H. and Haas,  H.},
  year = {2004},
  month = apr,
  pages = {78–80}
}

@article{ME-GJM:06,
  title = {Variability,  compensation and homeostasis in neuron and network function},
  volume = {7},
  ISSN = {1471-0048},
  DOI = {10.1038/nrn1949},
  number = {7},
  journal = {Nature Reviews Neuroscience},
  publisher = {Springer Science and Business Media LLC},
  author = {Marder,  E. and Goaillard,  J.~M.},
  year = {2006},
  month = jul,
  pages = {563–574}
}

@book{MM-PG-VM:86,
  title = {Spin Glass Theory and Beyond: An Introduction to the Replica Method and Its Applications},
  ISBN = {9789812799371},
  ISSN = {1793-1436},
  DOI = {10.1142/0271},
  journal = {World Scientific Lecture Notes in Physics},
  publisher = {WORLD SCIENTIFIC},
  author = {Mezard,  M. and Parisi,  G. and Virasoro,  M.},
  year = {1986},
  month = nov 
}

@article{MF-OS:17,
  title = {Intrinsically-generated fluctuating activity in excitatory-inhibitory networks},
  volume = {13},
  ISSN = {1553-7358},
  DOI = {10.1371/journal.pcbi.1005498},
  number = {4},
  journal = {PLOS Computational Biology},
  publisher = {Public Library of Science (PLoS)},
  author = {Mastrogiuseppe,  F. and Ostojic,  S.},
  editor = {Latham,  Peter E.},
  year = {2017},
  month = apr,
  pages = {e1005498}
}

@article{MD-DA-IV:24,
  title = {Diversity of Emergent Dynamics in Competitive Threshold-Linear Networks},
  volume = {23},
  ISSN = {1536-0040},
  DOI = {10.1137/22m1541666},
  number = {1},
  journal = {SIAM Journal on Applied Dynamical Systems},
  publisher = {Society for Industrial & Applied Mathematics (SIAM)},
  author = {Morrison,  K. and Degeratu,  A. and Itskov,  V. and Curto,  C.},
  year = {2024},
  month = mar,
  pages = {855–884}
}

@book{NN-RT-TE:07,
  title = {Algorithmic Game Theory},
  ISBN = {9780511800481},
  DOI = {10.1017/cbo9780511800481},
  publisher = {Cambridge University Press},
  author = {Nisan, N. and Roughgarden, T. and Tardos, E. Vazirani, V.~V.},
  year = {2007},
  month = sep 
}

@article{NL:13,
  title = {Interior proximal methods for equilibrium programming: part I},
  volume = {62},
  ISSN = {1029-4945},
  DOI = {10.1080/02331934.2011.625028},
  number = {9},
  journal = {Optimization},
  publisher = {Informa UK Limited},
  author = {Langenberg,  N.},
  year = {2013},
  month = sep,
  pages = {1247–1266}
}

@article{OH-FIM-SES:09,
  title = {Inhibitory Stabilization of the Cortical Network Underlies Visual Surround Suppression},
  volume = {62},
  ISSN = {0896-6273},
  DOI = {10.1016/j.neuron.2009.03.028},
  number = {4},
  journal = {Neuron},
  publisher = {Elsevier BV},
  author = {Ozeki, H. and Finn, I.~M. and Schaffer, E.~S. and Miller, K.~D. and Ferster, D.},
  year = {2009},
  month = may,
  pages = {578–592}
}

@article{PG:86,
  title = {Asymmetric neural networks and the process of learning},
  volume = {19},
  ISSN = {0305-4470},
  DOI = {10.1088/0305-4470/19/11/005},
  number = {11},
  journal = {Journal of Physics A: Mathematical and General},
  publisher = {IOP Publishing},
  author = {Parisi,  G.},
  year = {1986},
  month = aug,
  pages = {L675–L680}
}

@article{PAA-BD-ME:04,
  title = {Similar network activity from disparate circuit parameters},
  volume = {7},
  ISSN = {1546-1726},
  DOI = {10.1038/nn1352},
  number = {12},
  journal = {Nature Neuroscience},
  publisher = {Springer Science and Business Media LLC},
  author = {Prinz,  A.~A. and Bucher,  D. and Marder,  E.},
  year = {2004},
  month = nov,
  pages = {1345–1352}
}

@inbook{QXM-JWL:25,
  title = {Proximal Point Method for Online Saddle Point Problem},
  ISBN = {9789819665792},
  ISSN = {1611-3349},
  DOI = {10.1007/978-981-96-6579-2_27},
  booktitle = {Neural Information Processing},
  publisher = {Springer Nature Singapore},
  author = {Meng,  Q.~X. and Liu,  J.~W.},
  year = {2025},
  pages = {399–414}
}

@article{RM-PT:99,
  title = {Hierarchical models of object recognition in cortex},
  volume = {2},
  ISSN = {1546-1726},
  DOI = {10.1038/14819},
  number = {11},
  journal = {Nature Neuroscience},
  publisher = {Springer Science and Business Media LLC},
  author = {Riesenhuber,  M. and Poggio,  T.},
  year = {1999},
  month = nov,
  pages = {1019–1025}
}

@article{RET-TA:12,
  title = {Cortical free-association dynamics: Distinct phases of a latching network},
  volume = {85},
  ISSN = {1550-2376},
  DOI = {10.1103/physreve.85.051920},
  number = {5},
  journal = {Physical Review E},
  publisher = {American Physical Society (APS)},
  author = {Russo,  E. and Treves,  A.},
  year = {2012},
  month = may 
}

@inproceedings{RH-SB-LJ:21,
  title={Hopfield Networks is All You Need},
  author={Ramsauer, H. and Sch{\"a}fl, B. and Lehner, J. and Seidl, P. and Widrich, M. and Gruber, L. and Holzleitner, M. and Adler, T. and Kreil, D. and Kopp, M.~K. and Klambauer, G. and Brandstetter, J. and Hochreiter, S.},
  year={2021},
  booktitle={International Conference on Learning Representations}
}

@article{SD-KS:75,
  title = {Solvable Model of a Spin-Glass},
  volume = {35},
  ISSN = {0031-9007},
  DOI = {10.1103/physrevlett.35.1792},
  number = {26},
  journal = {Physical Review Letters},
  publisher = {American Physical Society (APS)},
  author = {Sherrington,  D. and Kirkpatrick,  S.},
  year = {1975},
  month = dec,
  pages = {1792–1796}
}

@article{SA-AB-GHC:20,
  title = {Inhibition stabilization is a widespread property of cortical networks},
  volume = {9},
  ISSN = {2050-084X},
  DOI = {10.7554/elife.54875},
  journal = {eLife},
  publisher = {eLife Sciences Publications,  Ltd},
  author = {Sanzeni, A. and Akitake, B. and Goldbach, H.~C. and Leedy, C.~E. and Brunel, N. and Histed, M.~H.},
  year = {2020},
  month = jun 
}

@article{SS-CC:20,
  title = {Inhibitory stabilization and cortical computation},
  volume = {22},
  ISSN = {1471-0048},
  DOI = {10.1038/s41583-020-00390-z},
  number = {1},
  journal = {Nature Reviews Neuroscience},
  publisher = {Springer Science and Business Media LLC},
  author = {Sadeh,  Sadra and Clopath,  Claudia},
  year = {2020},
  month = Nov,
  pages = {21–37}
}

@book{SI:25,
  title = {Mastering Game Theory: A Comprehensive Introduction to Strategic Decision Making},
  ISBN = {9783031820946},
  ISSN = {2662-2890},
  DOI = {10.1007/978-3-031-82094-6},
  journal = {Classroom Companion: Economics},
  publisher = {Springer Nature Switzerland},
  author = {Saglam,  I.},
  year = {2025}
}

@article{TA-ADJ:88,
  title = {Metastable states in asymmetrically diluted {Hopfield} networks},
  volume = {21},
  ISSN = {1361-6447},
  DOI = {10.1088/0305-4470/21/14/016},
  number = {14},
  journal = {Journal of Physics A: Mathematical and General},
  publisher = {IOP Publishing},
  author = {Treves,  A. and Amit,  D.~J.},
  year = {1988},
  month = jul,
  pages = {3155–3169}
}

@article{VC-SH:98,
  title = {Chaotic Balanced State in a Model of Cortical Circuits},
  volume = {10},
  ISSN = {1530-888X},
  DOI = {10.1162/089976698300017214},
  number = {6},
  journal = {Neural Computation},
  publisher = {MIT Press - Journals},
  author = {Vreeswijk,  C. van and Sompolinsky,  H.},
  year = {1998},
  month = aug,
  pages = {1321–1371}
}

@article{VWE-GJL:00,
  title = {Sparse Coding and Decorrelation in Primary Visual Cortex During Natural Vision},
  volume = {287},
  ISSN = {1095-9203},
  DOI = {10.1126/science.287.5456.1273},
  number = {5456},
  journal = {Science},
  publisher = {American Association for the Advancement of Science (AAAS)},
  author = {Vinje,  W.~E. and Gallant,  J.~L.},
  year = {2000},
  month = feb,
  pages = {1273–1276}
}

@article{WHR-CJD:73,
  title = {A mathematical theory of the functional dynamics of cortical and thalamic nervous tissue},
  volume = {13},
  ISSN = {0023-5946},
  DOI = {10.1007/bf00288786},
  number = {2},
  journal = {Kybernetik},
  publisher = {Springer Science and Business Media LLC},
  author = {Wilson,  H.~R. and Cowan,  J.~D.},
  year = {1973},
  month = sep,
  pages = {55–80}
}

@article{YH-ZL-HL:13,
  title = {Nonequilibrium landscape theory of neural networks},
  volume = {110},
  ISSN = {1091-6490},
  DOI = {10.1073/pnas.1310692110},
  number = {45},
  journal = {Proceedings of the National Academy of Sciences},
  publisher = {Proceedings of the National Academy of Sciences},
  author = {Yan,  H. and Zhao,  L. and Hu,  L. and Wang,  X. and Wang,  E. and Wang,  J.},
  year = {2013},
  month = oct 
}

@book{WDB:01,
  title={Introduction to Graph Theory},
  author={West, D.~B.},
  volume={2}
}

@article{DDV-JJS:13,
  title = {A Contraction Theory Approach to Singularly Perturbed Systems},
  volume = {58},
  ISSN = {1558-2523},
  DOI = {10.1109/tac.2012.2211444},
  number = {3},
  journal = {IEEE Transactions on Automatic Control},
  publisher = {Institute of Electrical and Electronics Engineers (IEEE)},
  author = {Del Vecchio, D. and Slotine,  J.~J.~E.},
  year = {2013},
  month = mar,
  pages = {752–757}
}

@article{HD:92,
  title = {Recent directions in matrix stability},
  volume = {171},
  ISSN = {0024-3795},
  DOI = {10.1016/0024-3795(92)90257-b},
  journal = {Linear Algebra and its Applications},
  publisher = {Elsevier BV},
  author = {Hershkowitz,  D.},
  year = {1992},
  month = jul,
  pages = {161–186}
}

@inproceedings{EN-JC:19,
  title           = "Oscillations and coupling in interconnections of
                     two-dimensional brain networks",
  booktitle       = "2019 American Control Conference ({ACC})",
  author          = "Nozari, E. and Cortes, J.",
  publisher       = "IEEE",
  month           =  jul,
  year            =  2019,
  conference      = "2019 American Control Conference (ACC)",
  location        = "Philadelphia, PA, USA"
}

@Book{FB:25nn,
  author =    {Bullo, F.},
  title =     {Lectures on Neural Dynamics},
  year =      2025,
  publisher = {self},
  url =       {https://fbullo.github.io/lnd}
}

@article{HRW-JDW:72,
  title = {Excitatory and Inhibitory Interactions in Localized Populations of Model Neurons},
  volume = {12},
  ISSN = {0006-3495},
  DOI = {10.1016/s0006-3495(72)86068-5},
  number = {1},
  journal = {Biophysical Journal},
  publisher = {Elsevier BV},
  author = {Wilson,  H.~R. and Cowan,  J.~D.},
  year = {1972},
  month = jan,
  pages = {1–24}
}

@article{SG:73ce,
	title = {Contour {Enhancement}, {Short} {Term} {Memory}, and {Constancies} in {Reverberating} {Neural} {Networks}},
	volume = {52},
	issn = {0022-2526, 1467-9590},
	doi = {10.1002/sapm1973523213},
	language = {en},
	number = {3},
	journal = {Studies in Applied Mathematics},
	author = {Grossberg, S.},
	month = sep,
	year = {1973},
	pages = {213--257}
}

@article{NP-SB:13pa,
	title = {Proximal {Algorithms}},
	volume = {Vol. 1},
	language = {en},
	number = {3},
	journal = {Foundations and Trends in Optimization},
	author = {Parikh, N. and Boyd, S.},
	year = {2013},
	pages = {123--231}
}

@article{SHM-MRJ:21pg,
	title = {Proximal gradient flow and {Douglas}–{Rachford} splitting dynamics: {Global} exponential stability via integral quadratic constraints},
	volume = {123},
	issn = {00051098},
	shorttitle = {Proximal gradient flow and {Douglas}–{Rachford} splitting dynamics},
	doi = {10.1016/j.automatica.2020.109311},
	language = {en},
	urldate = {2025-08-26},
	journal = {Automatica},
	author = {Hassan-Moghaddam, S. and Jovanović, M.~R.},
	month = jan,
	year = {2021},
	pages = {109311}
}

@ARTICLE{RTR:76,
  title     = "Monotone Operators and the Proximal Point Algorithm",
  author    = "Rockafellar, R.~T.",
  journal   = "SIAM J. Control Optim.",
  publisher = "Society for Industrial \& Applied Mathematics (SIAM)",
  volume    =  14,
  number    =  5,
  pages     = "877--898",
  month     =  aug,
  year      =  1976
}

@article{RTR:17,
  title = {Variational Analysis of Nash Equilibrium},
  volume = {46},
  ISSN = {2305-2228},
  url = {http://dx.doi.org/10.1007/s10013-017-0260-0},
  DOI = {10.1007/s10013-017-0260-0},
  number = {1},
  journal = {Vietnam Journal of Mathematics},
  publisher = {Springer Science and Business Media LLC},
  author = {Rockafellar,  R. Tyrrell},
  year = {2017},
  month = nov,
  pages = {73–85}
}

@article{DGW:94,
  title = {Tychonoff’s theorem},
  volume = {120},
  ISSN = {1088-6826},
  url = {http://dx.doi.org/10.1090/S0002-9939-1994-1170549-2},
  DOI = {10.1090/s0002-9939-1994-1170549-2},
  number = {3},
  journal = {Proceedings of the American Mathematical Society},
  publisher = {American Mathematical Society (AMS)},
  author = {D. G. Wright},
  year = {1994},
  pages = {985–987}
}

@Article{AC-BG-JC:17,
  title =	 {Saddle-point dynamics: {Conditions} for asymptotic
                  stability of saddle points},
  author =	 {A. Cherukuri and B. Gharesifard and J. Cortes},
  journal =	 sicon,
  year =	 2017,
  volume =	 55,
  number =	 1,
  pages =	 {486-511},
  doi =		 {10.1137/15M1026924},
  fbnote =	 {was AC-BG-JC:15}
}

@book{RT-GC-FD:13,
  title = {Randomized Algorithms for Analysis and Control of Uncertain Systems: With Applications},
  ISBN = {9781447146100},
  ISSN = {0178-5354},
  url = {http://dx.doi.org/10.1007/978-1-4471-4610-0},
  DOI = {10.1007/978-1-4471-4610-0},
  journal = {Communications and Control Engineering},
  publisher = {Springer London},
  author = {R. Tempo and G. Calafiore and F. Dabbene},
  year = {2013}
}

@article{EN-RP-JC:22,
    title = {Structural characterization of oscillations in brain networks with rate dynamics},
    volume = {146},
    issn = {0005-1098},
    url = {https://www.sciencedirect.com/science/article/pii/S0005109822005179},
    doi = {10.1016/j.automatica.2022.110653},
    abstract = {Among the versatile forms of dynamical patterns of activity exhibited by the brain, oscillations are one of the most salient and extensively studied, yet are still far from being well understood. In this paper, we provide various structural characterizations of the existence of oscillatory behavior in neural networks using a classical neural mass model of mesoscale brain activity called linear-threshold dynamics. Exploiting the switched-affine nature of this dynamics, we obtain various necessary and/or sufficient conditions on the network structure and its external input for the existence of oscillations in (i) two-dimensional excitatory–inhibitory networks (E-I pairs), (ii) networks with one inhibitory but arbitrary number of excitatory nodes, (iii) purely inhibitory networks with an arbitrary number of nodes, and (iv) networks of E-I pairs. Throughout our treatment, and given the arbitrary dimensionality of the considered dynamics, we rely on the lack of stable equilibria as a system-based proxy for the existence of oscillations, and provide extensive numerical results to support its tight relationship with the more standard, signal-based definition of oscillations in computational neuroscience.},
    urldate = {2025-03-26},
    journal = {Automatica},
    author = {Nozari, Erfan and Planas, Robert and Cortés, Jorge},
    month = dec,
    year = {2022},
    keywords = {notion},
    pages = {110653},
}

@article{RMG:06,
  title = {Toeplitz and Circulant Matrices: A Review},
  volume = {2},
  ISSN = {1567-2328},
  url = {http://dx.doi.org/10.1561/0100000006},
  DOI = {10.1561/0100000006},
  number = {3},
  journal = {Foundations and Trends{\textregistered} in Communications and Information Theory},
  publisher = {Emerald},
  author = {Gray,  Robert M.},
  year = {2006},
  month = jan,
  pages = {155–239}
}

\newpage
\section*{\Huge Supporting information}

\tableofcontents
\startcontents[sections]

\section{Preliminaries}

\subsection{Notation}
Denote \(\mathbb R^n\) as the \(n\)-dimensional real Euclidean space, \(\mathbb R^{m\times n}\) as the space of all real valued \(m\times n\) matrices, \(\mathbb S^n \subset \mathbb R^{n\times n}\) as the space of all real symmetric matrices, \(\mathbb S^n_{\succ0} \subset \mathbb S^n\) and \(\mathbb S^n_{\succeq0} \subset \mathbb S^n\) as positive definite and positive semidefinite matrices respectively, with \(\mathbb S^n_{\prec 0} \subset \mathbb S^n\) and \(\mathbb S^n_{\preceq 0} \subset \mathbb S^n\) being their negative definite and negative semidefinite matrices respectively.
Let \(C \subseteq \mathbb R^n\) be a set.
Denote the interior of \(C\) by \(\operatorname{int} C\).
For a real function \(f\) denote \(\operatorname{dom} f\) as the domain of \(f\).
We denote by $C^{k}(\mathcal{X};\mathcal{Y})$ the class of $k$-differentiable functions from $\mathcal{X}$ to $\mathcal{Y}$. 
We denote with $l$-Lipschitz the class of real vector functions that are Lipschitz with constant $l>0$. 
Let $f\in C^{1}(\real^{n};\real)$ and denote by $\nabla f(x)\in\real^{n}$ its gradient.
We denote the partial derivative of $f$ with respect to $x_{i}$ as $\nabla_{x_{i}}f(x)$. 
Let $g\in C^{1}(\real^{n};\real^{n})$ and denote with $\mathcal D_{x}g(x)\in\real^{n\times  n}$ its Jacobian.
We denote the identity matrix of dimension $n\in\mathbb N$ as $\mathcal{I}_{n}\in\real^{n\times n}$. 
We use  $A\succ 0$ to denote that $A\in\real^{n\times n}$ is a positive definite matrix. 
Let $B\in\real^{n\times n}$ and we denote its trace operator as $\tr(B)$. 
We denote the weighted inner product as \(\langle\cdot ,\cdot \rangle_{A}:\real^{n} \times \real^{n}\to \real\), for $A\succ 0$; if $A=\mathcal{I}_n$ we recover the standard Euclidean inner product, denoted by $\langle \cdot , \cdot \rangle_{2}$ or simply $\langle \cdot , \cdot \rangle$.
The weighted Euclidean 2-norm (for vectors, or induced, for matrices) is given by \(\lVert \cdot \rVert_A\) for $A\succ 0$; if \(A = \mathcal I_n\), we recover the standard Euclidean 2-norm which we denote by \(\lVert \cdot \rVert_2\) or simply \(\lVert \cdot \rVert\).
The Frobenius norm of a matrix is denoted by \(\lVert\cdot \rVert_F\).
Denote by \([n]\) the set of positive integers \(\{1, \dots , n\}\) for some \(n>0\).
Let \(\vectorzeros[n]\) and \(\vectorones[n]\) be the \(n\)-dimensional all-zeros and all-ones vectors respectively.
Denote the action of the linear-threshold operator on \(z \in \mathbb R^n\) by \([z]_0^1 \coloneqq \min(1, \max(z))\), where the \(\min\) and \(\max\) act elementwise.
For a vector \(v \in \mathbb R^n\), \(\operatorname{diag}(v)\) represents the diagonal matrix with \(v\) as the diagonal, while \(\operatorname{antidiag}(v)\) represents an antidiagonal matrix with \(v\) as the antidiagonal.
When operating on a square matrix \(A\), \(\operatorname{diag}(A)\) should be understood to output the diagonal vector.

\subsection{Game theory}
In contrast to optimization theory, where one tries to minimize a single scalar-valued function, game theory considers the problem of ``simultaneously minimizing several coupled scalar-valued functions.'' 
Given $N$ agents or players, the \(i\)th agent has a cost function $J_i$ which it tries to minimize.
However, each $J_i$ also depends on the actions of the other agents, so, in general, it is not possible to find a set of actions that globally minimizes all $J_i$ simultaneously. 
Instead, the strategy used by each agent plays a key role in deciding what conditions constitute a solution. 
In this work, we use the notion of Nash equilibrium as the ``solution'' to the game. 
We introduce the definition of Nash equilibrium and draw connections to variational inequalities. 
The interested reader is referred to the following two books~\citep{EAA-MGF:23, SI:25} for game-theoretic reference and to~\citep{RTR:17} for the variational inequality perspective of Nash solutions, for a deeper treatment of each subject.

\subsubsection{Best response and Nash Equilibria}
An agent is said to play `best response' when it chooses the action that minimizes its cost in response to each possible action played by the other players.
We make clear the notion of best response in the following definitions under the setup: in a game of \(n\) players let the \(i\)th agent with choice \(x_i \in \mathcal U_i\) and cost \(J_i(x_i, x_{-i})\) playing against a set of agents \(X_{-i}\) with choices \(x_{-i} \in \mathcal U_{-i}\) for \(i \in [n]\).
Here, the subscript \(-i\) on a quantity indicates that the quantity pertains to all players except the \(i\)th player.
Further, let \(\mathcal U_i\) and \(\mathcal U_{-i}\) be compact sets and let \(J_i(x_i, x_{-i})\) be a locally bounded smooth function.

\begin{definition}[Nash equilibrium]
    Nash equilibrium is a game solution such that no \(i\)th player can decrease their cost \(J_i\) by deviating from it unilaterally, i.e., if the Nash equilibrium is given by \(x^*\), it holds that
    \[J_i(x_i^*, x_{-i}^*) \leq J_i(x_i, x_{-i}^*)\]
    for every \(i\)th player, for all \(x_i \in \mathcal U_i\).
\end{definition}

\begin{definition}[Local Nash equilibrium]
A strategy profile $x^*=(x_1^*,\dots,x_N^*)\in \mathcal U_1\times\cdots\times \mathcal U_N$ is a \emph{local} Nash equilibrium if, for each $i$th player, there exists a neighborhood $U_i\subseteq \mathcal U_i$ of $x_i^*$ such that
\[J_i(x_i^*,x_{-i}^*) \le J_i(x_i,x_{-i}^*) \qquad \forall\, x_i\in U_i .\]
\end{definition}

\subsubsection{Zero-sum games and saddle points}
A two-player zero-sum game (ZSG) is a non-cooperative game in which the sum of the cost functions of the two players is identically constant (without loss of generality, zero), which implies that the benefit of one player is strictly the loss of the other player.
\begin{definition}[Saddle point equilibrium]
    A saddle point equilibrium (SPE) for a two-player ZSG is given by the actions that minimize the objective \(J\) for one player (say player 1), and maximize it (or minimize \(-J\)) for the other player, ie. if the SPE is given by \((\hat x_1, \hat x_2)\), it holds that
    \[J(\hat x_1, x_2) \leq J(\hat x_1, \hat x_2) \leq J(x_1, \hat x_2),\]
    for all \(x_i \in \mathcal U_i\), \(i \in \{1, 2\}\).
\end{definition}

\subsection{Graph theory}

Graph theory is the study of graphs, which are mathematical structures used to model pairwise relations between objects. It has wide applications in all applied science, among which we list neuroscience, computer science, biology, social sciences, and many other fields. We hereby provide a concise account of the basic concepts from graph theory. The interested reader is referred to the following references~\citep{WDB:01, DR:25} for a more in-depth treatment.

\subsubsection{Basic Definitions}

\begin{definition}[Graph]
A \textbf{graph} is an ordered pair $(\mathcal{V}, \mathcal{E})$, where:
\begin{itemize}
    \item $\mathcal{V}$ is a non-empty set of \textbf{vertices} (also called \textbf{nodes}),
    \item $ \mathcal{E} \subseteq \{ (i,j) \mid i, j \in \mathcal{V}, i \neq j \} $ is a set of \textbf{edges}, where each edge connects a pair of distinct vertices.
\end{itemize}
\end{definition}

\begin{definition}[Adjacent Vertices and Degree]
Two vertices $ i $ and $ j $ are \textbf{adjacent} if $(i, j) \in \mathcal{E} $.
\end{definition}

\begin{definition}[Simple Graph]
A graph with no loops (edges from a vertex to itself) and no multiple edges (more than one edge connecting two vertices) is called a \textbf{simple graph}.
\end{definition}

We now provide a general definition that bridges a generic dynamical system to its underlying graph, by means of the adjacency matrix.

\begin{definition}[Adjacency matrix of a dynamical system]
Let $x(t) \in \mathbb{R}^N$ evolve according to
\[
\dot{x} = F(x),
\]
where each component $F_i(x)$ may depend on a subset of variables $\{x_j\}_{j=1}^N$.  
The \emph{adjacency matrix} $A \in \mathbb{R}^{N \times N}$ of this system is defined by
\[
A_{ij} =
\begin{cases}
1, & \text{if } F_i \text{ explicitly depends on } x_j,\\[4pt]
0, & \text{otherwise.}
\end{cases}
\]
Thus, $A_{ij}$ encodes the presence, direction, and (when evaluated at a reference state) local strength of the direct influence of unit $j$ on unit $i$.  
The adjacency matrix therefore provides the formal link between the vector field $F$ and the underlying directed graph of dynamical interactions. Specifically
\begin{equation}
    A_{ij}=1\iff (j,i)\in\mathcal{E}\quad\land\quad A_{ij}=0\iff(j,i)\notin\mathcal{E}.
\end{equation}
\end{definition}

\subsection*{Example}
A simple undirected graph~\ref{fig: und-graph}:
\begin{figure}[!tbh]
    \begin{center}
        \begin{tikzpicture}[scale=1.2]
            \node[draw=black, fill=LimeGreen, circle] (A) at (0,0) {A};
            \node[draw=black, fill=LimeGreen, circle] (B) at (2,0) {B};
            \node[draw=black, fill=LimeGreen, circle] (C) at (1,1.5) {C};
            \draw (A) -- (B);
            \draw (A) -- (C);
            \draw (B) -- (C);
        \end{tikzpicture}
    \caption{\textbf{Example of undirected graph.} The edges connecting nodes $A,\ B,\ C$ have no directionality, and consequently each edge can be traversed in both directions.}
    \label{fig: und-graph}
    \end{center}
\end{figure}

We now provide a few definitions of graphs that usually underlie different models in theoretical neuroscience and machine learning, and contextualize them with examples.

\begin{itemize}
    \item \textbf{Directed Graph (Digraph)}: Edges have a direction, meaning that $(i,j)\in\mathcal{E}$ does not imply that $(j,i)\in\mathcal{E}$. An excitatory-inhibitory neural network constitutes an example of directed graph.
    \item \textbf{Undirected Graph}: Edges are unordered pairs \( \{i,j\} \) such that $(i,j)\in\mathcal{E}$ if and only if $(j,i)\in\mathcal{E}$. The famous Hopfield network is an example of undirected graph, since the associated adjacency matrix $A$ is symmetric.
    \item \textbf{Weighted Graph}: Each edge $(i,j)$ has an associated numerical value (weight) $\mathcal{W}(i,j)=w_{ij}$, with $\mathcal{W}:\mathcal{E}\to\real$.
    \item \textbf{Complete Graph}: Every pair of distinct vertices $i,j\in\mathcal{V}$ is connected by an edge.
    \item \textbf{Bipartite Graph}: Vertices can be divided into two sets such that no edge connects vertices of the same set. A known example is Restricted Boltzmann Machines in generative artificial intelligence.
\end{itemize}

\subsubsection{Directed Graphs and Degree Concepts}

\begin{definition}[In-Degree and Out-Degree]
Let $ (\mathcal{V}, \mathcal{E}) $ be a directed graph and $ i \in \mathcal{V} $.
\begin{itemize}
    \item The \textbf{in-degree} of $ i $, denoted $ \deg^{-}(i) $, is the number of edges of the form $ (j, i) \in \mathcal{E} $.
    \item The \textbf{out-degree} of $ i $, denoted $ \deg^{+}(i) $, is the number of edges of the form $ (i, j) \in \mathcal{E} $.
\end{itemize}
\end{definition}

\begin{theorem}
In any directed graph,
\[
\sum_{i \in \mathcal{V}} \deg^+(i) = \sum_{i \in \mathcal{V}} \deg^-(i) = |\mathcal{E}|
\]
That is, the total number of incoming edges equals the total number of outgoing edges, which equals the number of edges in the graph.
\end{theorem}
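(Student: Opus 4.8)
The identity is the directed analogue of the handshake lemma, and the plan is to prove it by double counting: I would express $|\mathcal{E}|$ as a sum over the vertices of the number of edges ``attached at'' each vertex, grouping the edges first by their tail and then by their head, and checking that each grouping is a partition of $\mathcal{E}$.

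First I would use the definition of out-degree, $\deg^{+}(i) = |\{\, j \in \mathcal{G} : (i,j)\in\mathcal{E} \,\}|$, to write
\begin{equation}
    \sum_{i\in\mathcal{G}} \deg^{+}(i) \;=\; \sum_{i\in\mathcal{G}} \bigl| \{\, (i,j) : (i,j)\in\mathcal{E} \,\} \bigr|.
\end{equation}
For each fixed $i$ the set $\{\, (i,j) : (i,j)\in\mathcal{E} \,\}$ consists exactly of the edges whose tail is $i$; as $i$ ranges over $\mathcal{G}$ these sets are pairwise disjoint and their union is all of $\mathcal{E}$, because every directed edge has a unique first coordinate. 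Hence the right-hand side equals $|\mathcal{E}|$, giving $\sum_{i\in\mathcal{G}}\deg^{+}(i) = |\mathcal{E}|$.

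Symmetrically, using $\deg^{-}(j) = |\{\, i \in \mathcal{G} : (i,j)\in\mathcal{E} \,\}|$ and now partitioning $\mathcal{E}$ according to the head (second coordinate) of each edge, the same argument yields $\sum_{j\in\mathcal{G}}\deg^{-}(j) = |\mathcal{E}|$. Chaining the two equalities proves $\sum_{i\in\mathcal{G}}\deg^{+}(i) = |\mathcal{E}| = \sum_{i\in\mathcal{G}}\deg^{-}(i)$, which is the claim. There is no genuine obstacle here; the only point deserving explicit mention is the well-definedness of the two partitions of $\mathcal{E}$, i.e., that each ordered pair in $\mathcal{E}$ has exactly one tail and exactly one head, which is immediate since $\mathcal{E}$ is a set of ordered pairs of distinct vertices. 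If an explicit matrix formulation is preferred, one can instead observe that with the adjacency convention $A_{ij}=1 \iff (j,i)\in\mathcal{E}$ one has $\deg^{+}(j)=\sum_{i}A_{ij}$ and $\deg^{-}(i)=\sum_{j}A_{ij}$, so interchanging the order of summation shows both sums equal $\sum_{i,j}A_{ij}$, which counts the $|\mathcal{E}|$ unit entries of $A$.
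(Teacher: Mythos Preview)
Your proof is correct: the double-counting argument (partitioning $\mathcal{E}$ first by tails, then by heads) is the standard way to prove the directed handshake lemma, and your adjacency-matrix remark is a valid alternative formulation consistent with the paper's convention $A_{ij}=1 \iff (j,i)\in\mathcal{E}$. The paper itself states this theorem without proof, so there is nothing to compare against; your argument would serve perfectly well as the missing justification.
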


\subsection*{Example}
\begin{figure}[!tbh]
    \begin{center}
        \begin{tikzpicture}[->, >=stealth, node distance=2.5cm]
        \node[circle, draw=black, fill=LimeGreen] (A) at (0,0) {A};
        \node[circle, draw=black, fill=LimeGreen] (B) at (3,0) {B};
        \node[circle, draw=black, fill=LimeGreen] (C) at (1.5,2) {C};
        \draw (A) edge[bend left] (B);
        \draw (B) edge[bend left] (A);
        \draw (A) -- (C);
        \draw (C) -- (B);
        \end{tikzpicture}

    \caption{\textbf{Example of a directed graph.} The edges connecting nodes $A,\ B,\ C$ have specified directionality, and can be traversed only according to the paths specified by the arrows.}
    \label{fig: dir-graph}
    \end{center}
\end{figure}

In the digraph above~\ref{fig: dir-graph}:
\begin{itemize}
    \item \( \deg^+(A) = 2 \), \( \deg^-(A) = 1 \)
    \item \( \deg^+(B) = 1 \), \( \deg^-(B) = 2 \)
    \item \( \deg^+(C) = 1 \), \( \deg^-(C) = 1 \)
\end{itemize}
Total in-degrees and out-degrees are both equal to 4, matching the number of edges.

\section{Game-theoretic interpretation of firing rate (FR) dynamics}
\label{sec:game-theory}
In this section, we will focus on a recent interpretation of the firing rate dynamics for a symmetric synaptic matrix, and we will then extend it to the asymmetric case. We start by considering a generic firing rate system of the form
\begin{equation}\label{eq: fr-prx}
    \begin{cases}
        \dot{x}=-Dx+\Phi(Wx+u)\\
        x(0)=x_{0}\in\real^{N}
    \end{cases}
\end{equation}
where $D\in\real^{N\times N}$ is the dissipation matrix, which is diagonal and positive definite $D\succ 0$, $W\in \real^{N\times N}$ is the synaptic matrix, $\Phi:\real^{N}\to\real^{N}$ is the activation function and $u\in\mathcal{U}\subset\real^{N}$ is the external input. 

\begin{remark}
    In the following treatment, for notational simplicity we will resort to the classic notation of optimization theory. Namely, the component of the minimization cost will be
    \begin{itemize}
        \item $J(x,u)\leftarrow \En_{\text{int}}(x,u)$ the new notation for the interaction cost;
        \item $\Gamma(x)\leftarrow \En_{\text{act}}(x)$ the new notation for the activation cost.
    \end{itemize}
\end{remark}

\subsection{Proximal gradient dynamics for symmetric FR networks}
It was proved in~\citep{BS-BG-BF-ZS:25fr} that for a symmetric synaptic matrix $W\in\real^{N\times N}$ and constant input $u\in\real^{N}$, the FR dynamic~\eqref{eq: fr-prox_si} admits a global energy (Lyapunov) function
\begin{equation}\label{eq: En-gen}
    \En(x,u)=-\frac{1}{2}x^{\top}Wx-x^{\top}u+\sum_{i=1}^{N}\int_{0}^{x_{i}}\upphi_{i}^{-1}(D_{ii}z)\ dz
\end{equation}
guaranteeing convergence to the equilibria of the system from any point in state-space. It was also shown in~\citep{VC-AG-AD-GR-FB:23a, AG-AD-FB:24d} that the firing rate dynamics~\eqref{eq: fr-prx} for a neural network with symmetric synaptic matrix $W\in\real^{N\times N}$ can be rewritten as proximal gradient dynamics 
\begin{equation}\label{eq: fr-prox_si}
    \begin{cases}
        \dot x = -Dx+\prox_{\Gamma}(Dx-\nabla J(x,u))\\
        x(0)=x_{0}\in\real^{N}
    \end{cases}
\end{equation}
for a cost function $J:\real^{N}\times\mathcal{U}\to\real$ quantifying the interaction cost among neurons in the network and a scalar function
\begin{align}\label{eq: Gamma}
    \Gamma(x)&=\sum_{i=1}^{N}\upgamma_{i}(x_{i})\\
    \upphi_{i}(x_{i})&=\prox_{\upgamma_{i}}(x_{i})
\end{align}
quantifying the activation cost associated to the neurons of the network. Furthermore, it has been proven in~\citep{VC-AG-AD-GR-FB:23a,AG-AD-FB:24d} that the dynamics~\eqref{eq: fr-prox_si} minimize, for any fixed $u\in\mathcal{U}$, the composite cost function
\begin{equation}\label{eq: composite-cost}
    J(x,u)+\Gamma(Dx)
\end{equation}
Consequently, any $x^{\star}\in\real^{N}$ such that $-Dx^{\star}+\Phi(Wx^{\star}+u)=\vectorzeros[N]$ and such that $x(t)\to x^{\star}$ satisfies $x^{\star}=\arg \min_{x\in\real^{N}} J(x,u)+\Gamma(Dx)$. 

We next exploit the equivalence of the two formulations~\eqref{eq: fr-prx} and~\eqref{eq: fr-prox_si} of the firing rate dynamics to prove that $\En(x,u)=J(x,u)+\Gamma(Dx)$ in the special case when $D=\mathcal{I}_{N}$.
\begin{proposition}[\textbf{Uniqueness of the cost for proximal gradient FR dynamics}]\label{prop: Ener-uni}
    Consider the firing rate dynamics~\eqref{eq: fr-prx} for a symmetric synaptic matrix $W=W^{\top}\in\real^{N\times N}$, and let $\En:\real^{N}\times\mathcal{U}\to\real$ be the associated energy function defined in~\eqref{eq: En-gen}. Consider the equivalent firing rate dynamics~\eqref{eq: fr-prox_si} and the associated cost~\eqref{eq: composite-cost} that is minimized by the proximal gradient dynamics. Then for $D=\mathcal{I}_{N}$ and fixed $u\in\mathcal{U}$
    \begin{equation}
        \En(x,u)= J(x,u)+\Gamma(x)
    \end{equation}
\end{proposition}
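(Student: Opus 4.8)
The plan is to use the equivalence of the two descriptions~\eqref{eq: fr-prx} and~\eqref{eq: fr-prox_si} of the firing rate dynamics, already established in the cited works, to determine $\nabla J$ and each $\gamma_i$ explicitly, and then simply add the pieces. Setting $D = \mathcal{I}_N$ and equating the right-hand sides of~\eqref{eq: fr-prx} and~\eqref{eq: fr-prox_si} gives, for every $x \in \real^N$ and the fixed input $u$,
\[
\prox_{\Gamma}\bigl(x - \nabla J(x,u)\bigr) = \Phi(Wx+u).
\]
Since $\Gamma(x) = \sum_i \gamma_i(x_i)$ with $\prox_{\gamma_i} = \upphi_i$, the operator $\prox_{\Gamma}$ acts componentwise as $\Phi$; and because each $\upphi_i$ is strictly monotone, hence injective, $\Phi$ is injective and may be cancelled on both sides, yielding $x - \nabla J(x,u) = Wx + u$, i.e. $\nabla J(x,u) = (\mathcal{I}_N - W)x - u$.

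Next I would integrate this gradient field. Symmetry of $W$ is exactly what makes $(\mathcal{I}_N - W)x - u$ a conservative field in $x$, so integration gives $J(x,u) = \tfrac12 x^\top(\mathcal{I}_N - W)x - x^\top u$ up to an additive term depending on $u$ alone, which is fixed to zero by the normalization convention under which~\eqref{eq: composite-cost} is stated (equivalently, by evaluating at $x=0$). In parallel, I would recover the activation potential from $\prox_{\gamma_i} = \upphi_i$: the optimality condition defining the proximal map reads $v \in x + \partial\gamma_i(x)$, so $\prox_{\gamma_i} = (\mathrm{Id} + \partial\gamma_i)^{-1}$ and, inverting, $\upphi_i^{-1}(s) = s + \gamma_i'(s)$. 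Hence $\gamma_i'(s) = \upphi_i^{-1}(s) - s$, and with $\gamma_i(0) = 0$,
\[
\gamma_i(x_i) = \int_0^{x_i} \upphi_i^{-1}(s)\,ds - \tfrac12 x_i^2, \qquad \Gamma(x) = \sum_{i=1}^N \int_0^{x_i} \upphi_i^{-1}(s)\,ds - \tfrac12 x^\top x.
\]

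Finally, with $D = \mathcal{I}_N$ so that $\Gamma(Dx) = \Gamma(x)$, adding $J(x,u)$ and $\Gamma(x)$ makes the quadratic terms $\tfrac12 x^\top x$ and $-\tfrac12 x^\top x$ cancel, leaving precisely $-\tfrac12 x^\top W x - x^\top u + \sum_i \int_0^{x_i}\upphi_i^{-1}(s)\,ds$, which is $\En(x,u)$ from~\eqref{eq: En-gen} specialized to $D = \mathcal{I}_N$. I expect the main obstacle to be the bookkeeping of additive constants: the prox and gradient identities determine $J$ and each $\gamma_i$ only up to constants, so the argument must explicitly invoke the normalization conventions (e.g. $\gamma_i(0)=0$, no free additive term in $J$) to obtain equality exactly rather than merely up to a constant; the secondary point requiring care is the injectivity of $\Phi$ used to strip off the proximal map, which rests on the strict monotonicity of the activation functions.
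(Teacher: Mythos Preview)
Your proposal is correct and follows essentially the same route as the paper: identify $\nabla J$ from the equivalence of~\eqref{eq: fr-prx} and~\eqref{eq: fr-prox_si}, integrate using the symmetry of $W$, recover each $\gamma_i$ from the relation $\prox_{\gamma_i}=\upphi_i$, and add. The only cosmetic difference is that the paper verifies an ansatz for $\Gamma$ by checking it satisfies the proximal optimality condition, whereas you derive $\gamma_i$ directly from the resolvent identity $\prox_{\gamma_i}=(\mathrm{Id}+\partial\gamma_i)^{-1}$; your explicit flagging of the injectivity of $\Phi$ and the additive-constant normalizations is more careful than the paper's presentation but not a different argument.
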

\begin{proof}
    We already have an explicit expression for the firing rate energy~\eqref{eq: En-gen}, while both the interaction cost $J(x,u)$ and the activation cost $\Gamma(x)$ only have an abstract definition. Exploiting the equivalence between the two formulation of the models, since $\prox_{\Gamma}(x)=\Phi(x)$ it must hold that
    \begin{equation}
        x-\nabla J(x,u)=Wx+u
    \end{equation}
    or equivalently
    \begin{equation}\label{eq: inter-cost-cond}
        \nabla_{x} J(x,u) = x-Wx-u
    \end{equation}
    Since $W=W^{\top}$, then $\nabla J(x,u)$ is integrable, and we obtain that
    \begin{equation}
        J(x,u) = \frac{1}{2}x^{\top}x-\frac{1}{2}x^{\top}Wx-x^{\top}u.
    \end{equation}
    Resorting now to the definition of proximal operator, we have that it must hold
    \begin{equation}
        \Phi(x)=\prox_{\Gamma}(x)=\arg \min_{z\in\real^{N}} \Gamma(z)+\frac{1}{2}\|x-z\|_{2}^{2}
    \end{equation}
    Considering the ansatz $\Gamma(z)=\sum_{i=1}^{N}\int_{0}^{z_{i}}\upphi_{i}^{-1}(s)\  ds-\frac{1}{2}\|z\|^{2}_{2}$ and differentiating w.r.t. $z\in\real^{N}$, we obtain
    \begin{align}
        \vectorzeros[N] &= \nabla_{z}\left[\sum_{i=1}^{N}\int_{0}^{z_{i}}\upphi_{i}^{-1}(s)\  ds-\frac{1}{2}\|z\|^{2}_{2}+\frac{1}{2}\|x-z\|^{2}_{2}\right]\nonumber\\
        &=\Phi^{-1}(z)-z-(x-z)\nonumber\\
        &=\Phi^{-1}(z)-x
    \end{align}
    which gives $z=\Phi(x)$, as desired. Thus, we finally observe that
    \begin{align}
        J(x,u)+\Gamma(x)&=\frac{1}{2}x^{\top}(\mathcal{I}_{N}-W)x-x^{\top}u+\sum_{i=1}^{N}\int_{0}^{x_{i}}\upphi^{-1}_{i}(z)\ dz-\frac{1}{2}\|x\|_{2}^{2}\nonumber\\
        &=-\frac{1}{2}x^{\top}Wx-x^{\top}u+\sum_{i=1}^{N}\int_{0}^{x_{i}}\upphi^{-1}_{i}(z)\ dz=\En(x,u) 
    \end{align}
\end{proof}
In the general case in which $D\neq \mathcal{I}_{N}$, the energy~\eqref{eq: En-gen} and the cost function associated to the proximal gradient dynamics do not coincide. Indeed, it is easy to observe that in the general case
\begin{equation}
    J(x,u)+\Gamma(Dx)=\frac{1}{2}x^{\top}(D-D^{2})x-\frac{1}{2}x^{\top}Wx-x^{\top}u+\sum_{i=1}^{N}\int_{0}^{D_{ii}x_{i}}\upphi_{i}^{-1}(z)\ dz .
\end{equation}

\subsection{Proximal gradient-play dynamics for asymmetric FR networks}\label{sec:PxGPD}
We now extend the previous treatment to the case of an asymmetric synaptic matrix $W\neq W^{\top}$, where in general we lose the energy characterization of the firing rate dynamics. The single neuron dynamics~\eqref{eq: fr-prx} can then be rewritten, analogously to the symmetric case, as
\begin{equation}
    \dot{x}_{i}=-D_{ii}x_{i}+\prox_{\upgamma_{i}}(D_{ii}x_{i}-\nabla_{x_{i}} J_{i}(x_{i},x_{-i},u_{i}))
\end{equation}
where $J_{i}:\real\times \real^{N-1}\times\real\to\real$ is the interaction cost of neuron $i$, is smooth and bounded on some compact set \(\mathcal X\) for each \(i \in [n]\), and $\upgamma_{i}:\real\to\real$ is the activation cost of the single neuron, and is closed, convex and proper, satisfying
\begin{equation}
    \prox_{\upgamma_{i}}(x)=\upphi_{i}(x)
\end{equation}
We assume the dynamics solve a joint optimization problem, an \(n\)-player game given by
\begin{equation}\label{eq:optim_prob}
    x_i^* = \argmin{x_i \in \mathbb R}{J_i(x_i, x_{-i},u_{i}) + \upgamma_i(D_{ii}x_i)},
\end{equation} 
The solution to~\eqref{eq:optim_prob} will satisfy first-order necessary conditions given by the inclusion: find \(x^* = [x_1^*, \dots, x_n^*]^\top\) such that
\begin{equation}\label{eq:inclusion_prob}
    \vectorzeros[N] \in \tilde\nabla_{x} J(x^*,u) + \partial \Gamma (Dx^*),
\end{equation}
Notice that \(\tilde \nabla J = [\nabla_{x_{1}} J_1, \dots, \nabla_{x_{N}} J_N]^\top\) is not necessarily the gradient of a single scalar function and is therefore referred to as the proximal gradient-play (PxGP) operator.
Following similar steps to~\citep{SHM-MRJ:21pg}, we add and subtract \(Dx^*\) in \eqref{eq:inclusion_prob}, and since \(\operatorname{prox}_\Gamma = (\operatorname{Id} + \partial \Gamma)^{-1}\) is single-valued~\citep{NP-SB:13pa, RTR:76}, we get
\begin{subequations}
\begin{align}\label{eq:inclusion_rearr}
    \vectorzeros[N] &\in -Dx^* + \tilde\nabla J(x^*,u) + (\partial \Gamma + \operatorname{Id})(Dx^*)\\
    Dx^* &= \operatorname{prox}_\Gamma(Dx^* - \tilde\nabla J(x^*,u)). \label{eq:prox_eq}
\end{align}
\end{subequations}
In order to match \eqref{eq:prox_eq} to the firing rate equilibrium that satisfies \(Dx^* = \Phi(Wx^* + u)\), we choose
\begin{equation}\label{eq:pseudograd}
    \tilde\nabla J (x,u) = (D - W)x - u,
\end{equation}
and it follows that
\begin{equation}
    Wx^* + u = (D - \tilde\nabla J) (x^*).
\end{equation}
We then consider the following ODE dynamics, whose equilibria satisfy the fixed point equation~\eqref{eq:prox_eq}:
\begin{equation}\label{eq:pseudograd_dyn}
    \dot x = - Dx + \operatorname{prox}_\Gamma\left(Dx - \tilde\nabla J(x,u)\right)
\end{equation}
We term these dynamics the proximal gradient-play, abbreviated as PxGP dynamics (or PxGPD).
The dynamics in~\eqref{eq:pseudograd_dyn} have a similar form to the dynamics in~\eqref{eq: fr-prox_si}, and as we illustrate next, they enjoy a similar and meaningful energetic interpretation. The energetic interpretation of the proximal gradient-play reveals a balancing structure among neurons, where each neuron competes with the others to minimize its own objective. The individual objectives of the neurons, or energies, can have matching or opposing convexity, thus revealing regimes of global cooperation or competition.

\begin{remark}[\textbf{Vector valued energy}]
    Following an approach similar to the proof of Proposition~\ref{prop: Ener-uni}, we can explicitly compute the two terms $J_{i}(x_{i},x_{-i},u_{i})$ and $\upgamma_{i}$ involved in the single neuron dynamics. We know that
    \begin{equation}
        D_{ii}x_{i}-\nabla_{x_{i}}J_{i}(x_{i},x_{-i},u_{i})=\sum_{j=1}^{N}W_{ij}x_{j}+u_{i}
    \end{equation}
    from which it  follows that
    \begin{equation}
        J_{i}(x_{i},x_{-i},u_{i})=\frac{1}{2}(D_{ii}-W_{ii})x_{i}^{2}-x_{i} \Big(\sum_{j\neq i} W_{ij}x_{j} + u_{i}\Big)
    \end{equation}
    Notice that this is the minimal function that yields the desired partial derivative, but not the unique one. For any other function $K:\real^{N-1}\to\real$ evaluated as $K(x_{-i})$, we also have that
    \begin{equation}
        \nabla_{x_{i}}J(x_{i},x_{-i},u_{i})=\nabla_{x_{i}}[J(x_{i},x_{-i},u_{i})+K(x_{-i})]
    \end{equation}
    Thus, it is also be possible to use $K(x_{-i})$ to complete the square and gain deeper understanding on the role of each single interaction cost. The activation cost is, analogously to Proposition~\ref{prop: Ener-uni}
    \begin{equation}
        \upgamma_{i}(x_{i})=\int_{0}^{x_{i}}\upphi^{-1}_{i}(z) \ dz -\frac{1}{2}x_{i}^{2}
    \end{equation}
    Thus, defining the function $\bar{\Gamma}:\real^{N}\to\real^{N}$ such that $\bar{\Gamma}_{i}(x)=\upgamma_{i}(x_{i})$, we can write the vectorial energy
    \begin{equation}
        \bar{\En}(x,u)=-x\odot \Tilde{W}x-x\odot u+\bar{\Gamma}(x)
    \end{equation}
    where $\Tilde{W}_{ij}=W_{ij}$ for $i\neq j$ and $\Tilde{W}_{ii}=\frac{1}{2}(-D_{ii}+1+W_{ii})$.
\end{remark}

The above treatment is notably different from that in \citep{AG-AD-FB:24d} and \citep{SHM-MRJ:21pg} that discuss the notion of proximal gradient dynamics. 
The latter can be recovered by setting \(D = I\) and assuming that \(\tilde \nabla J\) is a true gradient arising from a scalar \(J\).
Note that \eqref{eq:pseudograd_dyn} is exactly identical to the FR dynamics when \eqref{eq:pseudograd} is plugged in. 
This formalism gives us a bridge to characterize the asymmetric FR network using a game with multiple cost functions rather than a single scalar cost.

Given the energy characterization of the neural cost functions, the neural network dynamics engage in the following proximal gradient play
\begin{equation}
    x^{\star}_{i}=\arg \min_{x_{i}} \En^{i}(x_{i},x_{-i}^{\star},u_{i})
\end{equation}
From the energy characterization, we observe that given an external input $u_{i}$, the neuron strives to minimize an individual cost $\En^{i}$ that jointly accounts for the individual cost of interaction with other neurons $J_{i}(x,u_{i})$ and the individual cost associated to the physical constraints of the system $\upgamma_{i}(x_{i})$. Thus, we have successfully extended the known energy characterization for symmetric FR networks to the general case of asymmetric FR networks.

\section{Stability of E-I networks}
\label{sec:EI}
In this section, we study general excitatory-inhibitory (E-I) networks and, in particular, the necessary and sufficient conditions for the uniqueness of their equilibria and their stability. We start with the useful definition of matrices of class $\mathcal{P}$.
\begin{definition}[\textbf{$\mathcal{P}$-matrix}]
    Let $A\in\real^{N\times N}$. We say $A\in\mathcal{P}$ or $A$ is a $\mathcal{P}$-matrix if all its principal minors are positive.
\end{definition}
We now introduce the concept of Lyapunov diagonal stability.
\begin{definition}[\textbf{Lyapunov diagonal stability}]\label{prp: LDS}
    Let $A\in\real^{N\times N}$ be a square matrix. $A$ is Lyapunov diagonally Stable ($A\in\mathcal{LDS}$) if there exists $P\in\real^{N\times N}$ diagonal, with $P\succ 0$, such that
    \begin{equation}
        AP+PA^{\top}\prec 0
    \end{equation}
\end{definition}
We now relate the previous two classes of matrices by means of the result presented in~\citep{HD:92}.
\begin{proposition}[\textbf{$\mathcal{LDS}\ \Rightarrow \mathcal{P}$}]\label{prop: LDS-to-P}
    Let $A\in\real^{N \times N}$ be $\mathcal{LDS}$. Then $-A\in\mathcal{P}$.
\end{proposition}
 
The sufficient condition ensuring $\mathcal{LDS}$ is more restrictive than the one obtained by simply proving that the matrix in the class $\mathcal{P}$. We now begin with a general characterization of E-I networks. We rely on the following crucial assumption.
\begin{assumption}[Excitatory-inhibitory reciprocity]\label{ass: balance}
    Let $(\mathcal{V},\mathcal{E})$ be the graph associated to the E-I network, with $\mathcal{V}=\{1,\dots,N\}$
    nodes and $\mathcal{E}$ edges. Assume that $\mathcal{V}$ contains at least one excitatory and one inhibitory node, and that we can divide the set of nodes in $\mathcal{G}_{E}=\{1,\dots,N_{E}\}$ set of excitatory neurons and $\mathcal{G}_{I}=\{N_E+1,\dots,N\}$ set of inhibitory neurons, with $N_E+N_I=N$. Then $\forall i \in\mathcal{G}_E$ and $\forall j\in\mathcal{G}_I$
    \begin{equation}
        (i,j)\in\mathcal{E}\quad\iff\quad (j,i)\in\mathcal{E}
    \end{equation}
\end{assumption}
Namely, our treatment will rely on the assumption of a reciprocal E-I network, where every excitatory neuron that excites an inhibitory neuron is inhibited back, and vice-versa.

\subsection{Dynamics and equilibria}
We first define the firing rate dynamics that form the basis of E-I networks.
\begin{definition}[\textbf{Firing-rate Network}]\label{def:FR}
    Let $0 \prec D\in\real^{N\times N}$ be the dissipation matrix and $W\in\real^{N\times N}$ be the synaptic matrix. 
    The dynamics of the firing-rate network are defined as in~\eqref{eq: fr-prx}
    \begin{equation}\label{eq:FR}
        \dot x = -Dx+\Phi(Wx+u)
    \end{equation}
    where $u\in\real^N$ is an external input and $\Phi:\real^{N}\to\real^{N}$ is the activation function.
\end{definition}

\subsubsection{Dynamics of E-I network}

We now define the dynamics of a generic E-I network satisfying the excitatory-inhibitory reciprocity in Assumption~\ref{ass: balance}. Let us define the components of the dynamics. 
\begin{itemize}
    \item Dissipation matrix $D\in\real^{N \times N}$
    \begin{align}\label{eq: dis}
        D&=\diag(\underbrace{d_E,\dots,d_E}_{N_{E}},\underbrace{d_I,\dots,d_I}_{N_{I}})\nonumber\\
        &=
        \begin{pmatrix}
            D_E & \vectorzeros[N_{E}\times N_I]\\
            \vectorzeros[N_{I}\times N_E] & D_I
        \end{pmatrix}
    \end{align}
    \item Synaptic matrix $W\in\real^{N \times N}$
    \begin{equation}\label{eq: syn}
        W=
        \begin{pmatrix}
            W_E^E & W_I^E\\
            W_E^I & W_I^I
        \end{pmatrix}
    \end{equation}
    where $W_X^Y$ contains the edges going from $\mathcal{G}_X$ to $\mathcal{G}_Y$, for $X,Y\in\{E,I\}$.
\end{itemize}
\begin{figure}[!ht]
    \centering
    \includegraphics[width=0.5\linewidth]{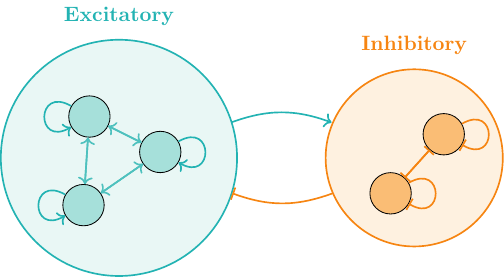}
    \caption{Schematic representation of an excitatory-inhibitory network. The network is comprised of $N_{E}$ excitatory neurons and $N_{I}$ inhibitory neurons with arbitrary same-type connectivity and directed different-type connectivity.}
    \label{fig:NetworkEI}
\end{figure}

We now make a common simplifying assumption in the study of excitatory-inhibitory networks.
\begin{assumption}[\textbf{Synaptic homogeneity}]\label{ass: hom}
    The synaptic weights in the E-I network are homogeneous and depend only on the sending and receiving population of neurons. That is, the synaptic weights for each type of connection are identical,
    \begin{subequations}
        \begin{align}
            W_{ij}&=w_{EE}>0\qquad (i,j)\in \mathcal{E}_{EE}\quad \text{ excitatory-to-excitatory}\\
            W_{ij}&=-w_{EI}<0\qquad (i,j)\in \mathcal{E}_{EI}\quad \text{inhibitory-to-excitatory}\\
            W_{ij}&=w_{IE}>0\qquad (i,j)\in \mathcal{E}_{IE}\quad \text{ \: excitatory-to-inhibitory}\\
            W_{ij}&=-w_{II}<0\qquad (i,j)\in \mathcal{E}_{II}\quad \text{ inhibitory-to-inhibitory}
        \end{align}
    \end{subequations}
    where $\mathcal{E}_{XY}$, for $X,Y\in\{E,I\}$, denotes the subset of edges from population $X$ to population $Y$.
\end{assumption}
\begin{remark}[\textbf{Random connectivity}]
    Notice that from the previous assumption we are not requiring the graph $(\mathcal{V},\mathcal{E})$ to be completely connected, but we are allowing for sparsity. Indeed, for any indices $i\in\mathcal{G}_{X}$ and $j\in\mathcal{G}_{Y}$, for $X,Y\in\{E,I\}$, such that $(i,j)\notin \mathcal{E}$ it follows that $W_{ij}=0$.
\end{remark}
The dynamics of an excitatory-inhibitory network can thus be defined.
\begin{definition}[\textbf{E-I Network}]\label{def: EIdyn}
    For the firing rate dynamics defined in \eqref{eq:FR}, let the dissipation matrix \(D\in\real^{N\times N}\) have the structure described in~\eqref{eq: dis} and the synaptic matrix $W\in\real^{N\times N}$ be as in~\eqref{eq: syn}. 
    The dynamics of the E-I network is then given by
    \begin{equation}\label{eq: EIdyn}
        \dot x = -Dx+\Phi(Wx+b+u)
    \end{equation}
    where $b\in\real^N$ is called the internal bias and $u\in\real^N$ is the external input.
\end{definition}

The choice of the activation function is allowed to be fairly generic, but we will restrict ourselves to a specific class. 
Results henceforth presented will consider E-I networks with activation function satisfying the following assumption.
\begin{assumption}[\textbf{Slope-restriction and diagonality}]\label{ass: sl-d}
    The activation function $\Phi$ is continuous. In addition, the activation function $\Phi$ is
    \begin{enumerate}[label=A{\arabic*})]
        \item\label{it: diaghom} diagonal and homogeneous: $\Phi_{i}(x)=\upphi(x_{i})$;
        \item\label{it: posbou} positive and bounded: $\upphi: \real \to [0, M]$, for $M>0$;
        \item\label{it: weak} weakly increasing: for all scalars $y\geq z$, we have $\upphi(y)\geq\upphi(z)$;
        \item\label{it: nonexp} non-expansive: for all scalars $y\geq z$, we have $\upphi(y)-\upphi(z)\leq y-z$.
    \end{enumerate}
\end{assumption}

\begin{remark}\label{rem: Lip}
    Notice that by using the facts that $\upphi$ is continuous, ($A3$) weakly increasing, and ($A4$) non-expansive, we obtain
    \begin{equation}
        \frac{|\upphi(x)-\upphi(z)|}{|x-z|}\leq 1\qquad x,z\in\real,\; x\neq z
    \end{equation}
    Therefore, we additionally have that $\upphi$ is \(1\)-Lipschitz.
\end{remark}
We extend a result proved under Lemma 3 of the Appendix of~\citep{SB-GB-FB-SZ:24m} to the case of \(D \neq I\).
\begin{lemma}\label{lem:invariant}
    Let Assumption~\ref{ass: sl-d} hold. Then the set $\mathcal{X}:=\{x\in\real^{N}:\ Dx\in[0,M]^{N}\}$ is forward-invariant for the firing rate dynamics~\eqref{eq: EIdyn}.
\end{lemma}
\begin{proof}
    Let $y=Dx$, so that $\dot{y}=D\dot{x}$. The $y$-dynamics are
    \begin{equation}
        \dot{y}=-Dy+D\Phi(WD^{-1}y+b+u).
    \end{equation}
    From Assumptions~\ref{it: diaghom} and~\ref{it: posbou} we have that
    \begin{align}
        \dot{y}_i&=D_{ii}\bigl(-0+\upphi([WD^{-1}y]_{i}+b_i+u_{i})\bigr)\geq 0\qquad \forall y\in[0,M]^{N},\ y_{i}=0\nonumber\\
        \dot{y}_i&=D_{ii}\bigl(-M+\upphi([WD^{-1}y]_{i}+b_i+u_{i})\bigr)\leq 0\qquad \forall y\in[0,M]^{N},\ y_{i}=M
    \end{align}
    since $D_{ii}>0$. Thus $[0,M]^N$ is forward-invariant for the $y$-dynamics, and taking $x=D^{-1}y$ we conclude.
\end{proof}
\begin{remark}[Absorbing invariance]
    Notice that for all $y\in\real^{N}$ such that $y_i<0$ then we have that $f_{FR}(y)_i>0$. Conversely, for all $y\in\real^{N}$ such that $y_{i}>M$ we have $f_{FR}(y)_i<0$. Therefore, $\mathcal{X}$ is also absorbing for the dynamics~\eqref{eq: EIdyn}, in the sense that for all $x_{0}\in\real^{N}$ we have $\lim_{t\to +\infty} x^{x_0}(t)\in \mathcal{X}$, where $x^{x_0}()$ denotes the flow of the solution of~\eqref{eq: EIdyn} for $x_0\in\real^N$ initial condition. 
\end{remark}

\subsubsection{Uniqueness of equilibria}
First, we arrive at a necessary condition for unique equilibrium of E-I networks. 

\begin{proposition}[\textbf{Necessary $\mathcal{P}$-condition}]\label{prop: EIuniq}
    $D-W\in\mathcal{P}$ only if $w_{EE}<d_{E}$.
\end{proposition}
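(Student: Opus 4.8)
The plan is to exploit the fact that membership in $\mathcal{P}$ requires \emph{all} principal minors of $D-W$ to be positive, and to extract a contradiction with $w_{EE} \ge d_E$ by looking at a cleverly chosen small principal submatrix. The natural candidate is a $2\times 2$ (or possibly $1\times 1$) principal submatrix built from excitatory indices, since the excitatory block of $D-W$ has diagonal entries $d_E - w_{EE}$. If a single excitatory neuron has no self-excitation missing (i.e. $(i,i) \in \mathcal{E}_{EE}$, so $W_{ii} = w_{EE}$), then the $1\times 1$ principal minor indexed by $i$ is exactly $d_E - w_{EE}$, and positivity forces $w_{EE} < d_E$ immediately. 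So the only subtlety is the case where the diagonal self-excitation entries are absent, or more precisely we should not assume the graph contains self-loops; in that generic case $W_{ii}$ could be $0$ and the $1\times 1$ argument says nothing.

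First I would therefore set up the argument using a $2\times 2$ principal submatrix indexed by two distinct excitatory neurons $i, j \in \mathcal{G}_E$ that are connected to each other (which we may assume exist and are connected in a nontrivial E-I network; if the excitatory neurons are mutually disconnected one falls back to the $1\times 1$ case or the structure degenerates). For such a pair the relevant principal submatrix of $D-W$ has the form $\begin{pmatrix} d_E - w_{EE} & -w_{EE} \\ -w_{EE} & d_E - w_{EE}\end{pmatrix}$ when both the self-loops and the cross edges carry weight $w_{EE}$ under Assumption~\ref{ass: hom}, or with a zero in place of one of the off-diagonal $w_{EE}$'s if an edge is absent. Its determinant is $(d_E - w_{EE})^2 - w_{EE}^2 = d_E(d_E - 2w_{EE})$ in the fully-connected subcase, and $(d_E-w_{EE})^2 - 0$ or similar in sparser subcases. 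The key observation is that to keep both this determinant and the trace $2(d_E - w_{EE})$ sign-consistent with $\mathcal{P}$ (all principal minors positive, so in particular the diagonal $1\times 1$ minors $d_E - W_{ii}$ and the $2\times 2$ minor positive), one is forced into $d_E > w_{EE}$: if $w_{EE} \ge d_E$ then either a diagonal entry $d_E - W_{ii} \le 0$ (when $W_{ii} = w_{EE}$), contradicting a $1\times 1$ minor being positive, or — in the loopless case — the $2\times 2$ minor involving the cross term becomes $(d_E - 0)(\cdot)$ type and one checks it is nonpositive when $w_{EE} \ge d_E$. I would organize this as a short case analysis on whether excitatory self-loops are present.

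The step I expect to be the main obstacle is handling the \emph{loopless / sparse} case cleanly: when there are no self-loops, $W_{ii} = 0$, so the diagonal entry $d_E - W_{ii} = d_E > 0$ is harmless, and I must genuinely use a $2\times 2$ (or larger) principal minor over a pair of mutually-excitatory neurons to force the inequality — and I must argue that such a pair exists, i.e. that the network is not so sparse among excitatory neurons that no excitatory-excitatory edge exists at all. The honest resolution is probably that the proposition implicitly assumes at least one excitatory neuron with a self-excitatory connection (this is the standard modeling convention, matching $W$ in~\eqref{eq: EIEsyn} and~\eqref{eq:EI-2dim} where $w_{EE}$ sits on the diagonal), in which case the one-line $1\times 1$-minor argument suffices; I would state that assumption explicitly and then the proof is immediate: the principal minor of $D-W$ indexed by that neuron equals $d_E - w_{EE}$, and $D - W \in \mathcal{P}$ forces it to be positive, hence $w_{EE} < d_E$. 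I would present the $1\times 1$ argument as the main proof and remark that the conclusion persists under mild connectivity assumptions even without self-loops via the $2\times 2$ determinant computation above.
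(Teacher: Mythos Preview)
Your proposal is correct and its core argument---the $1\times 1$ principal minor at an excitatory index equals $d_E - w_{EE}$, and positivity forces $w_{EE} < d_E$---is exactly the paper's proof. The paper's version is the one-liner you arrive at in your final paragraph: it assumes (consistently with every concrete synaptic matrix in the paper, e.g.~\eqref{eq: SynW_SI_EkI} and the Wilson--Cowan model) that self-excitation sits on the diagonal, so the case analysis on loopless/sparse excitatory subgraphs you sketch is unnecessary here, though your instinct to flag that assumption is sound.
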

\begin{proof}
    Suppose that $w_{EE}\geq d_{E}$. Among the principal minors of order 1 of the matrix $D-W$ we have 
    \begin{equation}
        d_{E}-w_{EE}\leq 0
    \end{equation}
    and this would contradict the definition of matrix in class $\mathcal{P}$.
\end{proof}

We now extend results on the homeomorphism of the Hopfield network dynamics in~\citep[Theorem 3]{FM-TA:95} to the firing rate model.

\begin{proposition}[\textbf{Bijectivity of the firing rate map}]\label{prop: surjFR}
    Let the firing rate map be $x\mapsto F(x)=-Dx+\Phi(Wx+b+u)$, for fixed bias $b\in\real^{N}$ and external input $u\in\mathcal{U}$. Assume $D-W\in\mathcal{P}$ and the activation function $\Phi$ satisfies assumption~\ref{ass: sl-d} (in particular, it is continuous,~\ref{it: diaghom} diagonal,~\ref{it: posbou} positive and bounded,~\ref{it: weak} weakly increasing, and~\ref{it: nonexp} slope-restricted). Then, $F:\real^{N}\to\real^{N}$ is a homeomorphism.
\end{proposition}
\begin{proof}
    From Assumptions~\ref{it: diaghom}-\ref{it: nonexp}, the generalized derivative of $\upphi$ at $z\in\real$ is the convex hull of the set of limits
    \begin{equation}
        \partial \upphi(z):=\lim_{k\to +\infty} \frac{\partial \upphi}{\partial z}(z+h_{k})
    \end{equation}
    where $\upphi$ is differentiable at each point $z+h_{k}\in\real$ of the subsequence $h_{k}\xrightarrow[]{k\to +\infty} 0$. In particular, from Remark~\ref{rem: Lip} and Assumption~\ref{it: weak} for all $z\in\real$ we have that $v\in[0,1]$ for all $v\in\partial\upphi(z)$. Exploiting now Assumption~\ref{it: diaghom} we have that the generalized Jacobian $\partial\Phi$ is a set of diagonal matrices, and in particular all the elements $\Sigma(x)\in\partial\Phi(Wx+b+u)$ of the generalized Jacobian satisfy
    \begin{equation}\label{eq: gen-jac}
        \vectorzeros[N\times N]\preceq \Sigma(x)\preceq \mathcal{I}_{N}\qquad \forall x\in\real^{N}.
    \end{equation}
    Consider the generalized Jacobian of the firing rate map evaluated at $x\in\real^{N}$
    \begin{equation}\label{eq: jacF}
        \partial F(x):=\{-D+\Sigma(x)W:\ \Sigma(x)\in\partial\Phi(Wx+b+u)\}.
    \end{equation}
    Since $D-W\in\mathcal{P}$, we know that $\det(-D+W)\neq0$. Using now~\citep[Lemma 2]{FM-TA:95}, we have that for any diagonal matrix $K\in\real^{N\times N}$ such that $\vectorzeros[N\times N]\preceq K \preceq \mathcal{I}_{N}$, $\det(-D+KW)\neq 0$. Exploiting now the bound in~\eqref{eq: gen-jac}, we get for all $x\in\real^N$
    \begin{equation}
        \det(-D+\Sigma(x)W)\neq0\qquad \forall \Sigma(x)\in\partial\Phi(Wx+b+u).
    \end{equation}
    Therefore, at all points of differentiability the Jacobian of the map $F$ is full rank, and so is the generalized Jacobian on the zero measure set of points of non-differentiability for $F$. Consequently, the map $F$ is locally invertible everywhere, by \citep[Theorem 15]{CFH:83} on the local invertibility of locally Lipschitz functions.

    Finally, observe that the map $x\mapsto F(x)$ is radially unbounded, i.e., for any sequence $\{x_{k}\}_{k\in\mathbb{N}}$ such that $\|x_{k}\|\xrightarrow[]{k\to +\infty}+\infty$ 
    \begin{align}
        \lim_{k\to +\infty} \|F(x_k)\| &= \lim_{k\to +\infty} \|-Dx_k+\Phi(Wx_{k}+b+u)\|\nonumber\\
        &\geq \lim_{k\to +\infty} |\|\Phi(Wx_{k}+b+u)\|-\|Dx_{k}\||=+\infty.
    \end{align}
    since $D$ is full rank and $\Phi$ is bounded. Therefore, applying~\citep[Theorem 1]{FM-TA:95} on homeomorphisms induced by $C^0$ functions, the map $x\mapsto F(x)$ induces a homeomorphism from $\real^{N}$ to $\real^{N}$.
\end{proof}

An immediate consequence of the this result is that there exists a unique $x^{\star}\in\real^{N}$ such that $F(x^{\star})=\vectorzeros[N]$, and therefore a unique equilibrium point for the E-I firing rate dynamics.

\subsection{Stability of the equilibrium points}
\subsubsection{Lyapunov diagonal stability}
To characterize the stability of the unique equilibrium point, we find it convenient to study the condition known as Lyapunov diagonal stability ($\mathcal{LDS}$) of the system matrix $-D+W$. The choice is inspired by the approach followed by Forti \& Tesi~\citep{FM-TA:95}.

\begin{theorem}[\textbf{General E-I network - $\mathcal{LDS}$}]\label{thm: LDS}
    Given the E-I network dynamics in~\eqref{eq: EIdyn}, suppose that the excitatory-inhibitory balance assumption~\ref{ass: balance} and the synaptic homogeneity assumption~\ref{ass: hom} hold. Let $0\leq k^{E}_{in}\leq N_E$ be the maximal in-degree of $W_E^E$ and $0\leq k^{E}_{out}\leq N_E$ be the maximal out-degree of $W_E^E$. Analogously, let $0\leq k^{I}_{in}\leq N_I$ be the maximal in-degree of $W_I^I$ and $0\leq k^{I}_{out}\leq N_I$ be the maximal out-degree of $W_I^I$. If 
    \begin{subequations}
        \begin{align}
        d_{I}&>w_{II}[-4+(k^{I}_{in}+k^{I}_{out})]/2 \label{eq:LDS-II}
        \\
        d_E&>w_{EE}(k^{E}_{in}+k^{E}_{out})/2  \label{eq:LDS-EE}
    \end{align}   
    \end{subequations}    
    then $W-D\in\mathcal{LDS}$. 
\end{theorem}

\begin{proof}
    Define a positive definite diagonal matrix $P\in\real^{N\times N}$ as follows.
    \begin{align}
        P&=\diag(\underbrace{p_E,\dots,p_E}_{N_E},\underbrace{p_I,\dots,p_I}_{N_I})
        \notag\\
        &=
        \begin{pmatrix}
            P_E & \vectorzeros[N_{E}\times N_I]\\
            \vectorzeros[N_{I}\times N_E] & P_I
        \end{pmatrix} \succ 0
    \end{align}
    where $p_E, p_I>0$. Following the approach by 
    Forti~\&~Tesi~\citep{FM-TA:95}, we impose the $\mathcal{LDS}$ condition by requiring $(D-W)^{\top}P+P(D-W)\succ 0$, where we have multiplied \(-1\) on both sides. The matrix in this linear-matrix inequality (LMI) is symmetric and results in the following coefficients for the different blocks:
    \begin{itemize}
        \item Excitatory-to-excitatory block $(D_E-W_E^E)^{\top}P_E+P_E(D_E-W_E^E)$:
        \begin{subequations}            
            \begin{align}
                \{(D_E-W_E^E)^{\top}P_E+P_E(D_E-W_E^E)\}_{ii}&=2p_{E}(d_{E}-w_{EE})\\
                \{(D_E-W_E^E)^{\top}P_E+P_E(D_E-W_E^E)\}_{ij}&=-k_{ij}p_E w_{EE}
            \end{align}
    \end{subequations}
            where $k_{ij}=2$ if both $\{W_E^E\}_{ij}$ and $\{W_E^E\}_{ji}$ are non-zero, $k_{ij}=1$ if either one of the two is zero and the other one non-zero, and $k_{ij}=0$ if they are both zero.
        \item Inhibitory-to-inhibitory $(D_I-W_I^I)^{\top}P_I+P_I(D_I-W_I^I)$:
        \begin{subequations} 
            \begin{align}
                \{(D_I-W_I^I)^{\top}P_I+P_I(D_I-W_I^I)\}_{ii}&=2p_{I}(d_{I}+w_{II})\\
                \{(D_I-W_I^I)^{\top}P_I+P_I(D_I-W_I^I)\}_{ij}&=k_{ij}p_I w_{II}
            \end{align}
            \end{subequations} 
            where $k_{ij}=2$ if both $\{W_I^I\}_{ij}$ and $\{W_I^I\}_{ji}$ are non-zero, $k_{ij}=1$ if either one of the two is zero and the other one non-zero, and $k_{ij}=0$ if they are both zero.
        \item Excitatory-to-inhibitory $-(W_E^I)^{\top}P_I-P_E(W_I^E)$:\\

        Whenever the elements of the blocks are non-zero, they are always of the form
        \begin{equation}
           \{-(W_E^I)^{\top}P_I-P_E(W_I^E)\}_{ij}=p_E w_{EI} - p_I w_{IE}
        \end{equation}
    \end{itemize}
   It is immediate to observe that by choosing $p_E=w_{EI}^{-1}$ and $p_I = w_{IE}^{-1}$ the extra-diagonal blocks are zero, and therefore we are left with evaluating the positive definiteness of a block-diagonal matrix. Precisely, by Gershgorin circle theorem~\citep{HRA-JCR:85},

   we have that the inhibitory-to-inhibitory block $(D_I-W_I^I)^{\top}P_I+P_I(D_I-W_I^I)$ has eigenvalues in the interval with center $2p_{I}(d_{I}+w_{II})$ and radius $p_{I}w_{II}[-2+(k^{I}_{in}+k^{I}_{out})]$.
   Thus, the condition for all the eigenvalues to be positive is
    \begin{equation}
        d_{I}+w_{II}>w_{II}[-2+(k^{I}_{in}+k^{I}_{out})]/2
    \end{equation}

    Considering then the excitatory-to-excitatory block $(D_E-W_E^E)^{\top}P_E+P_E(D_E-W_E^E)$, by using our hypothesis on the maximal in-degree and out-degree on the graph and exploiting Gershgorin circle theorem again, the maximal interval containing the eigenvalues of the matrix  has center $2p_{E}(d_{E}-w_{EE})$ and radius $p_{E}w_{EE}[-2+(k^{E}_{in}+k^{E}_{out})]$.    

   Thus, the condition for all the eigenvalues to be positive is
    \begin{equation}
        d_E-w_{EE}(k^{E}_{in}+k^{E}_{out})/2>0
    \end{equation}
\end{proof}

Notice in particular that, from Proposition~\ref{prop: LDS-to-P} we have that $W-D\in\mathcal{LDS}$ implies $D-W\in\mathcal{P}$, and using Proposition~\ref{prop: surjFR} we have that the dynamics in~\eqref{eq: EIdyn} have a unique equilibrium point.

\subsubsection{Global stability}

We aim to study the global stability of a generic E-I network. Global stability can be either exponential or just asymptotic, with the former being a stronger notion than the latter.  The standard way of proving global asymptotic stability (GAS) is through Lyapunov theory. Specifically, proving GAS requires the identification of a suitable Lyapunov function that has negative total time derivative along the trajectories of the dynamical system. Unfortunately, finding suitable Lyapunov functions is more of an art than a science, and most of the results for recurrent neural networks focus on the Hopfield model.  The recent Lyapunov function proposed in~\citep{SB-GB-FB-SZ:24m} works for the symmetric firing rate model, but the approach fails for E-I networks with their asymmetric synaptic matrix~$W$.  Next, we show how to establish convergence to equilibria for a specific class of asymmetric firing-rate networks, which include networks where the dissipation matrix is a scalar matrix, i.e., $D=d\mathcal{I}_{N}$ for some $d>0$.

\begin{theorem}[\textbf{Global convergence for $\mathcal{LDS}$ networks}]\label{thm: EIglob}
    Consider the firing rate model~\eqref{eq:FR} with dissipation matrix $D$, synaptic matrix $W$ and Lipschitz, diagonal activation function $\Phi$. If
    \begin{itemize}
        \item $W-D\in\mathcal{LDS}$ and $D\succ 0$.
        \item $W$ and $D$ commute, i.e., $WD=DW$.
        \item each $\upphi_{i}$ is weakly increasing~\ref{it: weak} and non-expansive~\ref{it: nonexp}.

    \end{itemize}
    Then, for each constant $u\in\real^{N}$, there is a unique equilibrium point $x^{\star}$ that is globally asymptotically stable.
\end{theorem}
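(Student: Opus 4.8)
The plan is to assemble three pieces. First, uniqueness of the equilibrium comes straight from the machinery already developed: since $-D+W\in\mathcal{LDS}$, Proposition~\ref{prop: LDH-to-P} gives $D-W\in\mathcal P$, and because $\Phi$ is diagonal and $1$-Lipschitz (non-expansiveness), Proposition~\ref{prop: surjFR} shows $x\mapsto -Dx+\Phi(Wx+u)$ is a homeomorphism, hence there is a unique equilibrium $x^\star$. Second, since $\Phi$ is bounded, $\frac{\de}{\de t}\|x\|^2\le -2d_{\min}\|x\|^2+c\|x\|$ for a constant $c$, so every trajectory is forward-complete and eventually enters, and thereafter stays in, a fixed compact ball; this makes LaSalle's invariance principle available. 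Throughout, let $P=\mathrm{diag}(p_1,\dots,p_N)\succ 0$ be the diagonal certificate of $-D+W\in\mathcal{LDS}$, so $PW+W^\top P\prec 2PD$.

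Local exponential stability of $x^\star$ I would get from the lemma: \emph{if $-D+W\in\mathcal{LDS}$ then $-D+\Lambda W$ is Hurwitz for every diagonal $\Lambda$ with $\vectorzeros[N\times N]\preceq\Lambda\preceq\mathcal I_N$.} When $\Lambda\succ 0$, take the still-diagonal, still-positive certificate $Q=P\Lambda^{-1}$; a one-line computation gives $Q(-D+\Lambda W)+(-D+\Lambda W)^\top Q=PW+W^\top P-2PD\Lambda^{-1}$, and since $\Lambda\preceq\mathcal I_N$ forces $\Lambda^{-1}\succeq\mathcal I_N$, hence $PD\Lambda^{-1}\succeq PD$, this is $\preceq PW+W^\top P-2PD\prec 0$. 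When $\Lambda$ is singular, order its zero entries last: $-D+\Lambda W$ is block upper-triangular with lower block $-D_S\prec 0$ and upper block $-D_{\bar S}+\Lambda_{\bar S}W_{\bar S\bar S}$, whose matrix $-D_{\bar S}+W_{\bar S\bar S}$ is a principal submatrix of an $\mathcal{LDS}$ matrix, hence $\mathcal{LDS}$, hence covered by the first case. Evaluating the Jacobian of the firing-rate map at $x^\star$ — its nonlinear part being the diagonal slope matrix $\Lambda^\star$ with entries in $[0,1]$ by non-expansiveness — the lemma gives that $x^\star$ is locally exponentially stable.

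For global attractivity, set $y=x-x^\star$ and $h_i(s)=\upphi_i\!\big((Wx^\star+u)_i+s\big)-\upphi_i\!\big((Wx^\star+u)_i\big)$, so $\dot y=-Dy+h(Wy)$ with each $h_i$ weakly increasing, non-expansive, and $h_i(0)=0$; in particular $s\,h_i(s)\ge 0$ and $h_i(s)^2\le s\,h_i(s)$. Now invoke $WD=DW$: left-multiplying by $W$ yields $\dot z=-Dz+Wh(z)$ for $z:=Wy$. Take $V(z)=\sum_i p_i\int_0^{z_i}h_i(s)\,\de s\ge 0$; then $\dot V=-h(z)^\top PDz+\tfrac12 h(z)^\top(PW+W^\top P)h(z)$, and the $\mathcal{LDS}$ inequality gives, whenever $h(z)\neq\vectorzeros[N]$, $\dot V< h(z)^\top PD\big(h(z)-z\big)=\sum_i p_iD_{ii}h_i(z_i)\big(h_i(z_i)-z_i\big)\le 0$, while $\dot V=0$ when $h(z)=\vectorzeros[N]$. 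By LaSalle every trajectory approaches the largest invariant subset of $\{h(Wy)=\vectorzeros[N]\}=\{\Phi(Wx+u)=Dx^\star\}$; on that set the dynamics collapse to the linear flow $\dot y=-Dy$, whose only bounded invariant subset is $\{0\}$ (any other point escapes the compact ball in backward time). Hence $x(t)\to x^\star$ from every initial condition, and combined with local asymptotic stability this is precisely global asymptotic stability.

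I expect the main obstacle to be that the natural, energy-like Lyapunov function $V$ depends on $Wy$ only, hence is merely positive \emph{semidefinite} in the state (genuinely flat wherever the activation saturates), so on its own it certifies attractivity but not Lyapunov stability; the remedy is the separate $\mathcal{LDS}\Rightarrow$ Hurwitz lemma for the linearization, whose crux is the small but essential observation that rescaling the diagonal certificate $P$ by the slope factor $\Lambda^{-1}\succeq\mathcal I_N$ can only strengthen the dissipation block. If one prefers to avoid differentiating $\Phi$ at $x^\star$, the same conclusion follows by upgrading $V$ to a Lurié--Postnikov function $V(z)+\tfrac r2\,y^\top Py$ with $r$ chosen from the spectral margin in $PW+W^\top P\prec 2PD$, at the cost of a more delicate estimate.
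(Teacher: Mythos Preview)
Your proof is correct and takes a genuinely different route from the paper's. Both share the uniqueness argument and the same core estimate on the Lur'e integral $\sum_i p_i\int_0^{[Wy]_i}\tilde\upphi_i(s)\,\de s$: the commutativity $WD=DW$ is used to push $D$ through $W$, and non-expansiveness yields the sector inequality $h_i(z_i)\big(h_i(z_i)-z_i\big)\le 0$, so that the $\mathcal{LDS}$ certificate controls the derivative. Where the two diverge is in converting this semidefinite decrease into GAS. The paper augments the Lur'e term with $\tfrac12\|y\|^2$ to obtain a single strict, radially unbounded Lyapunov function; the cross term $y^\top\tilde\Phi(Wy)$ is absorbed by splitting $D=d_{\min}\mathcal I_N+D_0^2$, completing the square, and then rescaling $P$ so that $P(-D+W)+(-D+W)^\top P\prec -\tfrac12 D_0^{-2}$, which yields $\dot V\le -d_{\min}\|y\|^2$ directly. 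You instead decouple the task: LaSalle on a compact absorbing ball gives global attractivity, and your tidy lemma ``$-D+\Lambda W$ is Hurwitz for every diagonal $0\preceq\Lambda\preceq\mathcal I_N$'' (via the rescaled certificate $P\Lambda^{-1}$ and principal-submatrix inheritance of $\mathcal{LDS}$) supplies local stability.

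Your approach is more modular, and the Hurwitz lemma is of independent interest. The paper's approach, on the other hand, avoids two ingredients you lean on that are not in the theorem's hypotheses: boundedness of $\Phi$ (needed for your absorbing ball) and differentiability of $\Phi$ at $x^\star$ (needed for the linearization). You correctly flag the second and sketch the Lur'e--Postnikov remedy $V(z)+\tfrac r2\,y^\top y$; that is essentially what the paper carries out in full, so your fallback and the paper's proof converge.
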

\begin{proof}
    Since $W-D\in\mathcal{LDS}$, then $D-W\in\mathcal{P}$ and the map $x\mapsto -Dx+\Phi(Wx+u)$ is bijective from $\real^{N}$ to itself (see Proposition~\ref{prop: surjFR}). Therefore, there exists a unique $x^{\star}\in\real^{N}$ such that $-Dx^{\star}+\Phi(Wx^{\star}+u)=\vectorzeros[N]$, establishing existence and uniqueness of the equilibrium.

    Given the pair $(x^{\star},u)$, perform the change of variables $y=x-x^{\star}$, so that the firing rate dynamics can be rewritten as
    \begin{align}
        \dot y &= \dot x\nonumber\\
        &= -D(y+x^{\star})+\Phi(W(y+x^{\star})+u)\nonumber\\
        &= -Dy+\Phi(W(y+x^{\star})+u)-\Phi(Wx^{\star}+u)=-Dy+\Tilde\Phi(Wy)
    \end{align}
    where from the second to the third line we have used the equivalence $Dx^{\star}=\Phi(Wx^{\star}+u)$ and where $\Tilde\Phi(z)=\Phi(z+Wx^{\star}+u)-\Phi(Wx^{\star}+u)$. The new function $\Tilde\Phi$ is diagonal, with each entry defined as $\Tilde\Phi_{i}(z)=\Tilde\upphi(z_{i})=\upphi(z_{i}+(Wx^{\star}+u)_{i})-\upphi((Wx^{\star}+u)_{i})$. In particular, notice that the assumptions on $\upphi$ imply that $\Tilde\upphi$ is Lipschitz, weakly increasing~\ref{it: weak}, and non-expansive~\ref{it: nonexp}.

    In addition, it satisfies $\Tilde\upphi(0)=0$.

    Consider now the Lyapunov function
    \begin{align}
        V(y) &=\frac{1}{2}\|y\|^{2}+V_{W}^{\upphi}(y) 
        \notag
        \\
        &=\frac{1}{2}\|y\|^{2}+\sum_{i=1}^{N}P_{ii}\int_{0}^{[Wy]_{i}}\Tilde\upphi(z)\ dz
    \end{align}

    where $P_{ii}>0$ are the diagonal coefficient of a matrix $P\succ 0$. Then computing the total time derivative of $V_{W}^{\upphi}$
    along the trajectories of the \textit{shifted} firing rate system we have
    \begin{align}\label{eq: ineqV1}
        \frac{d}{dt}V_{W}^{\upphi}(y)_{|y=y(t)}&=\sum_{i=1}^{N}P_{ii}\Tilde\upphi([Wy]_{i})[W\dot y]_{i}\nonumber\\
        &=\Tilde\Phi(Wy)^{\top}PW\dot y\nonumber\\
        &=\Tilde\Phi(Wy)^{\top}PW\left[-Dy+\Tilde\Phi(Wy)\right]\nonumber\\
        &=-\Tilde\Phi(Wy)^{\top}PWDy+\Tilde\Phi(Wy)^{\top}PW\Tilde\Phi(Wy)\nonumber\\
        &\stackrel{\mathmakebox[\widthof{=}]{\scriptstyle \text{comm.}}}{=}-\Tilde\Phi(Wy)^{\top}PDWy+\Tilde\Phi(Wy)^{\top}PW\Tilde\Phi(Wy)\nonumber\\
        &\stackrel{\mathmakebox[\widthof{=}]{\scriptstyle\text{non-exp.}}}{\leq}-\Tilde\Phi(Wy)^{\top}PD\Tilde\Phi(Wy)+\Tilde\Phi(Wy)^{\top}PW\Tilde\Phi(Wy)\nonumber\\
        &=\frac{1}{2}\Tilde\Phi(Wy)^{\top}\left[P(-D+W)+(-D+W)^{\top}P\right]\Tilde\Phi(Wy)
    \end{align}

    The non-expansiveness property has been used by noticing that for each of the components of $\Tilde\Phi$, we have $\Tilde\upphi(z_{i})\leq z_{i}$ for $z_{i}>0$, and $\Tilde\upphi(z_{i})\geq z_{i}$ for $z_{i}<0$. Exploiting the diagonality of $PD$, we  obtain that $P_{ii}D_{ii}\Tilde\upphi([Wy]_{i})[Wy]_{i}\geq P_{ii}D_{ii}\Tilde\upphi([Wy]_{i})^{2}$ for all $i=1,\dots,N$ and all $y\in\real^{N}$.
    We now focus on the quadratic term - the contribution of which makes $V(y)$ radially unbounded. We start by defining $d_{\min}=\min_{i=1,\dots,N}\ d_{i}/2$ and observe that $D-d_{\min}\mathcal{I}_{N}\succ 0$. We set $D_{0}=(D-d_{\min}\mathcal{I}_{N})^{1/2}$. Then
    \begin{align}
        \frac{d}{dt}\frac{1}{2}{\|y\|^{2}_{2}}_{|y=y(t)}&=y^{\top}\dot y
        \notag
        \\
        &=-y^{\top}Dy+y^{\top}\Tilde{\Phi}(Wy)
        \notag
        \\
        &=-d_{\min}\|y\|_{2}^{2}-y^{\top}D_{0}^{2}y+y^{\top}\Tilde\Phi(Wy).
    \end{align}
    Consider now $y^{\top}\Tilde{\Phi}(Wy)$ as the cross term in the expansion of the square of the sum of two terms: specifically, from
    \begin{equation}
        \|D_{0}y-\frac{1}{2}D_{0}^{-1}\Tilde\Phi(Wy)\|^{2}_{2}=y^{\top}D_{0}^{2}y-y^{\top}\Tilde\Phi(Wy)+\frac{1}{4}\Tilde\Phi(Wy)^{\top}D_{0}^{-2}\Tilde\Phi(Wy)
    \end{equation}
    we obtain the following bound
    \begin{align}\label{eq: ineqV2}
        \frac{d}{dt}\frac{1}{2}{\|y(t)\|_{2}^{2}}_{|y=y(t)}&=-d_{\min}\|y\|^{2}_{2}-\|D_{0}y-\frac{1}{2}D_{0}^{-1}\Tilde\Phi(Wy)\|^{2}_{2}+\frac{1}{4}\Tilde\Phi(Wy)^{\top}D_{0}^{-2}\Tilde\Phi(Wy)\nonumber\\
        &\leq -d_{\min}\|y\|_{2}^{2}+\frac{1}{4}\Tilde\Phi(Wy)^{\top}D_{0}^{-2}\Tilde\Phi(Wy)
    \end{align}

    Combining now the inequalities~\eqref{eq: ineqV1} and~\eqref{eq: ineqV2} we obtain
    \begin{align}
        \frac{d}{dt}V(y)_{|y=y(t)}\leq -d_{\min}\|y\|_{2}^{2}+\frac{1}{4}\Tilde\Phi(Wy)^{\top}D_{0}^{-2}\Tilde\Phi(Wy)+\frac{1}{2}\Tilde\Phi(Wy)^{\top}\left[P(-D+W)+(-D+W)^{\top}P\right]\Tilde\Phi(Wy)
    \end{align}
    and since $W-D\in\mathcal{LDS}$, we can find $P\succ 0$ such that 
    \begin{equation}\label{eq: ineqM}
        \left[P(-D+W)+(-D+W)^{\top}P\right]\prec -\frac{1}{2}D_{0}^{-2}.
    \end{equation}

    Finally, using the inequality~\eqref{eq: ineqM} we get
    \begin{equation}
        \frac{d}{dt}V(y)_{y=y(t)}\leq -d_{\min}\|y\|_{2}^{2}
    \end{equation}
    Therefore, the point $y=\vectorzeros[N]$ is GAS and, in the original coordinates, $x^{\star}$ is GAS.
\end{proof}

The same Lyapunov diagonal stability condition for the system matrix $-D+W$ is used by Forti \& Tesi~\citep{FM-TA:95} to ensure the existence of a Lyapunov function for \emph{asymmetric} systems of the Hopfield type. In that work, the authors rely on a similar Lur'e-Postnikov Lyapunov function, and are able to prove global asymptotic stability for general dissipation matrices $D$, not restricting to the class that commutes with $W$. The same could not be replicated in our case due to the different position of the non-linearity $\Phi$ in the dynamics, which makes the general treatment more complicated. For these reasons, we propose a conjecture and subsequent numerical validation on the global asymptotic stability of general \emph{asymmetric} firing-rate networks with $W-D\in\mathcal{LDS}$, thereby dropping the commutativity requirement $DW=WD$.

\begin{conjecture}[\(\mathcal{LDS}\) implies global asymptotic stability of the firing rate]\label{conj: GAS}
    Consider the firing-rate network in \eqref{eq:FR} satisfying Assumption~\ref{ass: sl-d} and let the dissipation matrix \(D\) and synaptic matrix \(W\) satisfy \(W-D \in \mathcal{LDS}\). Then, for each input \(u\), the unique equilibrium point of the network is globally asymptotically stable on \(\mathcal X\).
\end{conjecture}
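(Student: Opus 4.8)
The natural point of departure is the proof of Theorem~\ref{thm: EIglob}, which already establishes precisely this statement under one extra hypothesis, $WD=DW$ (the slope restrictions on $\Phi$ — weakly increasing and non-expansive, Assumption~\ref{ass: sl-d} — and $D\succ0$ diagonal are standing). So the plan is to keep that proof's skeleton and surgically remove the commutativity crutch. First, $-D+W\in\mathcal{LDS}$ yields $D-W\in\mathcal{P}$ by Proposition~\ref{prop: LDH-to-P}, hence the firing-rate map is a homeomorphism with a unique equilibrium $x^\star$ by Proposition~\ref{prop: surjFR}; shifting by $y=x-x^\star$ reduces the claim to global asymptotic stability of the origin for $\dot y=-Dy+\tilde\Phi(Wy)$, with $\tilde\Phi$ diagonal, $\tilde\Phi(0)=0$, and each $\tilde\upphi_i$ obeying $0\le\tilde\upphi_i(s)^2\le s\,\tilde\upphi_i(s)\le s^2$. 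One then seeks a Lur'e--Postnikov Lyapunov candidate $V(y)=\tfrac12 y^\top Q y+\sum_{i=1}^N q_i\int_0^{[Cy]_i}\tilde\upphi_i(z)\,dz$ with $Q\succ0$, $q_i>0$ and an auxiliary matrix $C$; the choice $Q=\mathcal{I}_N$, $C=W$, $q_i=P_{ii}$ recovers verbatim the function used in Theorem~\ref{thm: EIglob}.

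Along trajectories, commutativity is spent in exactly one place: the cross term $-\tilde\Phi(Wy)^\top PWDy$ becomes $-\tilde\Phi(Wy)^\top PDWy$, which collapses componentwise through $s\,\tilde\upphi_i(s)\ge\tilde\upphi_i(s)^2$ to $-\tilde\Phi(Wy)^\top PD\tilde\Phi(Wy)$; after completing the square in $(y,\tilde\Phi(Wy))$ as in~\eqref{eq: ineqV2}, the leftover $\tilde\Phi$-quadratic is $\tilde\Phi(Wy)^\top[P(-D+W)+(-D+W)^\top P+\tfrac14 D_0^{-2}]\tilde\Phi(Wy)$, which $\mathcal{LDS}$ renders negative definite for a suitable diagonal $P$. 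The plan for the general case is to use the extra freedom in $Q,C,q$ to absorb the mismatch $P(WD-DW)$, reducing the negativity of $\dot V$ to one linear matrix inequality, and then to prove that LMI feasible whenever $-D+W\in\mathcal{LDS}$. An equivalent, perhaps cleaner framing: recast the shifted system as the Lur'e loop $\dot e=-De+w$, $w=\Lambda(t)We$ with $\Lambda(t)=\operatorname{diag}(\uplambda_1(t),\dots,\uplambda_N(t))$ the incremental-slope matrix and each $\uplambda_i(t)\in[0,1]$, and invoke the Kalman--Yakubovich--Popov lemma with a diagonal Zames--Falb/Popov multiplier (the continuous analogue of the integral term above); again the target reduces to existence of such a multiplier under $\mathcal{LDS}$. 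The passage to \emph{global} stability is then routine and mirrors Theorem~\ref{thm: EIglob}: $V$ is radially unbounded since the quadratic part dominates while $\int_0^s\tilde\upphi_i\ge0$; trajectories are bounded because $\Phi$ is bounded, so the flow enters a compact forward-invariant box; and $\dot V\le-d_{\min}\|y\|_2^2$ (or LaSalle's invariance principle on $\{\dot V=0\}$ if the LMI is only non-strict) finishes the argument.

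The hard part is that last reduction — showing diagonal stability of $-D+W$ suffices against \emph{every} diagonal slope-$[0,1]$ nonlinearity. The crude relaxation that treats $\uplambda_i(t)$ as an arbitrary measurable $[0,1]$-valued function — the linear differential inclusion $\dot e\in\{(-D+\Lambda W)e\}$ over diagonal $0\preceq\Lambda\preceq\mathcal{I}_N$ — is strictly lossy: already for a $2\times2$ sign-indefinite $W$ one can have $-D+W\in\mathcal{LDS}$ while the inclusion admits no common diagonal quadratic Lyapunov function, and the same gap defeats the naive ``strongly-monotone pseudogradient in a diagonal metric'' argument. So the proof must genuinely exploit that the $\uplambda_i$ are incremental slopes of monotone maps rather than free multipliers — exactly what a Popov/Zames--Falb term buys over the static-diagonal certificate of Theorem~\ref{thm: EIglob} — and constructing it when $PWD\neq PDW$ is where I expect the real difficulty to lie. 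A sensible intermediate checkpoint, already settled, is scalar dissipation $D=d\,\mathcal{I}_N$, where $WD=DW$ holds automatically and Theorem~\ref{thm: EIglob} applies; from there one might attempt a homotopy in $D$, or, on the E--I side, exploit that Assumption~\ref{ass: balance} renders $PW$ block-symmetric for the canonical $P$ of Theorem~\ref{thm: LDH} — which is what promotes the vector energy~\eqref{eq: En-vc} to the scalar weighted energy underlying the zero-sum game~\eqref{eq:EI-2Dim-ZS-game} — while noting that $\mathcal{LDS}$ alone does not impose this reciprocity, so no such scalar energy exists in general and one is pushed back onto the multiplier/LMI route.
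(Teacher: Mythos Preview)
The paper does not prove this statement: it is explicitly labeled a \emph{conjecture} and is presented ``without proof'' with only Monte Carlo numerical evidence in its support (the table of success/failure rates over $27{,}000$ sampled $(D,W)$ pairs). There is therefore no paper proof to compare your proposal against.

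As for your proposal itself, it is a sketch rather than a proof, and you say as much. The easy half is fine and is already in the paper as Theorem~\ref{thm: EIglob}: uniqueness via $\mathcal{LDS}\Rightarrow\mathcal{P}$ and Proposition~\ref{prop: surjFR}, the shift to the origin, and the Lur'e--Postnikov candidate all go through. The substantive gap is precisely the step you flag as ``the hard part'': you need to exhibit, for \emph{every} $(D,W)$ with $-D+W\in\mathcal{LDS}$, a choice of $(Q,C,q)$ (equivalently a diagonal Popov/Zames--Falb multiplier) making the resulting LMI feasible, and you do not do this --- you only argue that the naive common-quadratic relaxation over $0\preceq\Lambda\preceq\mathcal{I}_N$ is too lossy and that the monotone structure must be used. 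That observation is correct and useful, but it is a diagnosis of why the problem is nontrivial, not a resolution. Your suggested fallbacks (homotopy in $D$ from the scalar case, or exploiting the block structure under Assumption~\ref{ass: balance}) are reasonable avenues but remain programmatic. In short: the paper has no proof, and your proposal correctly localizes the obstruction that keeps the statement a conjecture, without overcoming it.
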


The numerical evidence for the above conjecture is presented in the form of probability estimation via Monte Carlo simulation. To do this, we need to sample from the set of matrix pairs \((D, W)\) such that \(W-D \in \mathcal{LDS}\). The next result provides useful information about this set and how to sample it uniformly.

\begin{proposition}[Topological properties of sets of matrix pairs yielding \(\mathcal{LDS}\)]\label{prop: LDS topology}
    Consider the following:
    \begin{enumerate}[label=\roman*)]
        \item 
        The set of matrix pairs \((D, W)\) given by
        \begin{equation}
            \mathscr P = \{(D, W) \in \operatorname{diag}\left(\mathbb R^{n}_{>0}\right)\times \mathbb R^{n\times n} : W-D \in \mathcal{LDS}\}
        \end{equation}
        is nonconvex (for \(n\ge2\)), not closed on \(\operatorname{diag}\left(\mathbb R^{n}_{>0}\right)\times \mathbb R^{n\times n}\), and a blunt cone.
        \item
        For any given diagonal matrices \(D\succ0\) and \(\Lambda\succ 0\) and margin \(\upgamma>0\) such that the set 
        \begin{equation}
            \mathscr P_{\upgamma, \Lambda, D} = \{W \in \mathbb R^{n\times n} : (W-D)^\top \Lambda + \Lambda(W-D) \preceq -\upgamma I\}
        \end{equation}
        is nonempty, \(\mathscr P_{\upgamma, \Lambda, D}\) is closed, unbounded, and convex.

        \item
        For any given diagonal matrices \(D\succ0\) and \(\Lambda\succ 0\), margin \(\upgamma>0\), and \(R>0\), the set 
        \begin{equation}
            \mathscr P_{\upgamma, \Lambda, D} \cap \{W \in \mathbb R^{n\times n} : \lVert W \rVert_F \leq R \}
        \end{equation}
        is convex and compact.
    \end{enumerate}

\end{proposition}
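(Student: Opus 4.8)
The plan is to treat the three parts in increasing order of difficulty, disposing of (ii) and (iii) by elementary convex analysis and concentrating the real work on the nonconvexity claim in (i). For (ii), I would first record that the map $\Psi\colon W\mapsto (W-D)^{\top}\Lambda+\Lambda(W-D)=W^{\top}\Lambda+\Lambda W-2\Lambda D$ is affine and continuous from $\mathbb R^{n\times n}$ into the space of symmetric $n\times n$ matrices, and that the target set $\mathcal C:=\{S=S^{\top}:S\preceq-\upgamma I\}$ is closed and convex, being the translate by $-\upgamma I$ of the closed convex cone of negative semidefinite matrices. Hence $\mathscr P_{\upgamma,\Lambda,D}=\Psi^{-1}(\mathcal C)$ is closed and convex as the preimage of a closed convex set under an affine map. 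For unboundedness, assuming the set is nonempty I would fix $W_{0}\in\mathscr P_{\upgamma,\Lambda,D}$ and move along the ray $W_{0}+tH$, $t\ge 0$, with $H:=-\Lambda^{-1}$ (diagonal, invertible since $\Lambda\succ 0$): then $H^{\top}\Lambda+\Lambda H=-2I\preceq 0$, so $\Psi(W_{0}+tH)=\Psi(W_{0})+t(H^{\top}\Lambda+\Lambda H)\preceq-\upgamma I$ for all $t\ge 0$, the whole ray lies in $\mathscr P_{\upgamma,\Lambda,D}$, and $\|W_{0}+tH\|_{F}\to\infty$. (When $n\ge 2$ one in fact obtains an unbounded affine subspace by also translating along $\Lambda^{-1}S$ with $S$ skew-symmetric, since then $H^{\top}\Lambda+\Lambda H=0$.)

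Part (iii) is then immediate: the Frobenius ball $\{W:\|W\|_{F}\le R\}$ is closed, bounded and convex, so its intersection with the closed convex set $\mathscr P_{\upgamma,\Lambda,D}$ is a closed, bounded, convex subset of a finite-dimensional space, hence compact and convex. For part (i), the cone and bluntness assertions are quick: if $(D,W)\in\mathscr P$ with $(W-D)P+P(W-D)^{\top}\prec 0$ for some diagonal $P\succ 0$, and $t>0$, then $tD\succ 0$ is still diagonal and the same $P$ certifies $tW-tD=t(W-D)\in\mathcal{LDS}$, so $(tD,tW)\in\mathscr P$; thus $\mathscr P$ is closed under positive scaling, i.e.\ a cone, and it is blunt because $(0,0)\notin\mathscr P$ (the zero matrix is not positive definite, and $0\notin\mathcal{LDS}$ since $0\cdot P+P\cdot 0=0\not\prec 0$).

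The heart of the argument is nonconvexity. Because $W-D$ is linear in $(D,W)$ and every $A\in\mathcal{LDS}$ arises as $A=W-D$ with $D=I$ and $W=A+I$, it suffices to produce two diagonally stable matrices whose midpoint is not diagonally stable. I would take $A_{1}=\big(\begin{smallmatrix}-1&4\\0&-1\end{smallmatrix}\big)$, with diagonal certificate $P=\operatorname{diag}(5,1)$ since $A_{1}P+PA_{1}^{\top}=\big(\begin{smallmatrix}-10&4\\4&-2\end{smallmatrix}\big)\prec 0$, and $A_{2}=\big(\begin{smallmatrix}-1&0\\4&-1\end{smallmatrix}\big)$, with certificate $P=\operatorname{diag}(1,5)$. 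Their midpoint is $M=\big(\begin{smallmatrix}-1&2\\2&-1\end{smallmatrix}\big)$, and $\det(-M)=1-4=-3<0$, so $-M\notin\mathcal P$ and therefore $M\notin\mathcal{LDS}$ by Proposition~\ref{prop: LDH-to-P}. Padding with a $-I_{n-2}$ block extends this to every $n\ge 2$, because a block-diagonal concatenation of $\mathcal{LDS}$ matrices is $\mathcal{LDS}$ (use the block-diagonal certificate) while the $\{1,2\}$-principal minor of $-(M\oplus(-I_{n-2}))$ stays $-3<0$; then $(I,A_{1}+I)$ and $(I,A_{2}+I)$ lie in $\mathscr P$ but their midpoint $(I,M+I)$ does not. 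Finally, for non-closedness, fix $D=I$ and set $W_{t}=(1-t)I$ for $t>0$; then $W_{t}-D=-tI\in\mathcal{LDS}$, so $(I,W_{t})\in\mathscr P$, yet $(I,W_{t})\to(I,I)$ as $t\to 0^{+}$ and $(I,I)\notin\mathscr P$ because $I-I=0\notin\mathcal{LDS}$, while $(I,I)$ still belongs to $\operatorname{diag}(\mathbb R^{n}_{>0})\times\mathbb R^{n\times n}$ (the escape is in the $W$-coordinate toward the boundary of $\mathcal{LDS}$, not in the $D$-coordinate).

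I expect the nonconvexity in (i) to be the only real obstacle: it is the one claim that cannot be read off from general principles and requires exhibiting a verified counterexample — two diagonally stable matrices with a non-diagonally-stable average — together with the short checks that the proposed $P$'s are valid certificates and that the midpoint fails the $\mathcal P$-condition. Once this $2\times 2$ computation is pinned down, the block-diagonal lift to arbitrary $n$ and the remaining assertions (cone, bluntness, non-closedness) follow with essentially no further work, and (ii)--(iii) are routine.
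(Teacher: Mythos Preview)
Your proof is correct and follows the same overall strategy as the paper: an explicit $2\times 2$ counterexample for nonconvexity, the affine-preimage argument for (ii), and intersection with the Frobenius ball for (iii). A few differences are worth noting. For nonconvexity the paper uses a one-parameter family $A_1=\big(\begin{smallmatrix}-1&a\\0&-2\end{smallmatrix}\big)$, $A_2=A_1^{\top}$ and lets $|a|$ be large, whereas you pin down concrete numbers and then invoke $-M\notin\mathcal P$ via Proposition~\ref{prop: LDH-to-P}; both work, and yours is slightly crisper. For unboundedness of $\mathscr P_{\upgamma,\Lambda,D}$, the paper exhibits a specific unbounded sequence of anti-diagonal matrices, while your ray $W_0+t(-\Lambda^{-1})$ gives the same conclusion with a one-line computation. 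The most substantive difference is non-closedness: the paper lets $D=n^{-1}I\to 0$ so that the limit $(0,W)$ escapes the ambient set $\operatorname{diag}(\mathbb R^{n}_{>0})\times\mathbb R^{n\times n}$, which only shows non-closedness in the surrounding Euclidean space; your sequence $(I,(1-t)I)\to(I,I)$ has its limit \emph{inside} $\operatorname{diag}(\mathbb R^{n}_{>0})\times\mathbb R^{n\times n}$, so you actually establish non-closedness in the relative topology as the statement is most naturally read. In that respect your argument is the stronger one.
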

\begin{proof}
    \begin{enumerate}[label=\roman*)]
        \item \textit{Nonconvexity} (for \(n\ge2\)). We work in dimension \(n=2\); the general case follows by taking direct sums with stable diagonal blocks \(-I_{n-2}\).
        Take \(D_1=D_2=\mathcal I_2\) and
        \[
        A_1=\begin{pmatrix}-1 & a\\ 0 & -2\end{pmatrix},\qquad
        A_2=\begin{pmatrix}-1 & 0\\ a & -2\end{pmatrix}.
        \]
        Both \(A_1\) and \(A_2\) are triangular with eigenvalues \(-1\) and \(-2\), hence Hurwitz. An explicit diagonal certificate for \(A_1\) is \(\Lambda_1=\operatorname{diag}(1,\,a^2+1)\), for which
        \[A_1^\top\Lambda_1+\Lambda_1 A_1 = \begin{pmatrix}-2 & a\\ a & -4(a^2+1)\end{pmatrix}.\]
        This is negative definite since the \((1,1)\)-entry is \(-2<0\) and the Schur complement is \(-4(a^2+1)+\tfrac{a^2}{2}=-\tfrac{7a^2}{2}-4<0\) for all \(a\). By the same construction with \(\Lambda_2=\operatorname{diag}(a^2+1,\,1)\), we get \(A_2\in\mathcal{LDS}\) for any \(a\in\mathbb R\).

        Now consider the midpoint \(\bar A = \tfrac{1}{2}(A_1+A_2) = \begin{pmatrix}-1 & a/2\\ a/2 & -2\end{pmatrix}\).
        Its determinant is
        \[
        \det(\bar A) = 2 - \frac{a^2}{4}.
        \]
        For \(|a|>2\sqrt{2}\) we have \(\det(\bar A)<0\), so \(\bar A\) has a positive eigenvalue and cannot be Hurwitz, hence \(\bar A\notin\mathcal{LDS}\).
        Therefore \((\mathcal I_2,\,\mathcal I_2+\bar A)\notin\mathscr P\) even though \((D_1,W_1),(D_2,W_2)\in\mathscr P\), contradicting convexity.
        Thus \(\mathscr P\) must be nonconvex for \(n\ge2\).
        
        To show that \(\mathscr P\) is not closed, consider the sequence within the domain: let \(D = \mathcal{I}_n\) and \(W_k = (1-1/k)\mathcal{I}_n\) for \(k \in \mathbb N\).
        Then \(W_k - D = -(1/k)\mathcal{I}_n \in \mathcal{LDS}\) for all \(k\) (with Lyapunov certificate \(\Lambda = \mathcal{I}_n\)), so \((D, W_k) \in \mathscr P\) for all \(k\).
        However, \(W_k \to \mathcal{I}_n\) as \(k\to\infty\), and the limit point \((D, \mathcal{I}_n)\) satisfies \(W - D = 0 \notin \mathcal{LDS}\), so \((\mathcal{I}_n, \mathcal{I}_n) \notin \mathscr P\).
        Thus \(\mathscr P\) is not closed in \(\operatorname{diag}\left(\mathbb R^{n}_{>0}\right)\times \mathbb R^{n\times n}\).

        \(\mathscr P\) is a cone due to the fact that for \(\upalpha >0\) and any \((D, W) \in \mathscr P\), say with diagonal Lyapunov solution \(\Lambda\), \((\upalpha D, \upalpha W) \in \mathscr P\) with diagonal Lyapunov solution \(\frac{1}{\upalpha}\Lambda\), which can be verified by the Lyapunov inequality.

        It is a blunt cone due to \(\upalpha = 0\) not preserving the \(\mathcal{LDS}\) property.
        
        \item \textit{Convexity.} Given $W_1$, $W_2 \in \mathscr P_{\upgamma, \Lambda, D}$ and \(\uplambda\in[0,1]\), linearity of the Lyapunov map gives
        \[(-D + \uplambda W_1 + (1-\uplambda)W_2)^\top\Lambda + \Lambda(-D + \uplambda W_1 + (1-\uplambda)W_2) \preceq -\upgamma I,\]
        so \(\mathscr P_{\upgamma, \Lambda, D}\) is convex.

        \textit{Unboundedness.} Let \(W_0\in\mathscr P_{\upgamma,\Lambda,D}\) be any feasible point.
        For \(n\ge2\), pick any nonzero skew-symmetric matrix \(S\) (\(S^\top=-S\)) and define \(W_t = W_0 + t\Lambda^{-1}S\) for \(t\in\mathbb R\).
        Then
        \[\Lambda(W_t-D)+(W_t-D)^\top\Lambda = \Lambda(W_0-D)+(W_0-D)^\top\Lambda + tS + tS^\top = \Lambda(W_0-D)+(W_0-D)^\top\Lambda\preceq -\upgamma I,\]
        so every \(W_t\in\mathscr P_{\upgamma,\Lambda,D}\), while \(\|W_t\|_F\to\infty\) as \(|t|\to\infty\).
        For \(n=1\), if \(w_0\) is feasible then \(w_0-t\) is feasible for all \(t\ge0\), which is also unbounded.

        \textit{Closedness.} Consider the map
        \(\Psi_{\upgamma, \Lambda, D}: \mathbb R^{n\times n} \rightarrow \mathbb S^{n}\) given by
        \[\Psi_{\upgamma, \Lambda, D}(W) = (W-D)^\top\Lambda + \Lambda(W-D) + \upgamma I.\]
        \(\Psi_{\upgamma, \Lambda, D}\) is continuous and \(\mathscr P_{\upgamma, \Lambda, D} = \Psi_{\upgamma, \Lambda, D}^{-1}(\mathbb S^n_{\preceq 0})\). Since \(\mathbb S^n_{\preceq 0}\) is closed and the preimage of a closed set under a continuous map is closed, \(\mathscr P_{\upgamma, \Lambda, D}\) is closed.
        
        \item Convexity follows from (ii) and the fact that the intersection of convex sets is convex (the Frobenius ball \(\{\|W\|_F\le R\}\) is convex). Compactness follows because \(\mathscr P_{\upgamma,\Lambda,D}\) is closed by (ii), the Frobenius ball is compact, and a closed subset of a compact set is compact.

    \end{enumerate}
\end{proof}

The noncompactness of the set \(\mathscr P\) means that it cannot be sampled uniformly. Even \(\mathscr P_{\upgamma, \Lambda, D}\), the slice of \(\mathscr P\) corresponding to a fixed $D$ and $\Lambda$ with a given margin \(\upgamma>0\), is not compact. Therefore, 

based on Proposition~\ref{prop: LDS topology}(iii), we restrict our sample space to a reasonable and well-controlled compact subset, and include other parameters for sampling inputs.

\begin{definition}[Sample space for Monte Carlo]\label{def:sample_space}
    Given integer \(n > 0\),  $0<\updelta_0 \ll 1$, and \(u_0>0\), let \(Q_\Lambda =  Q_D = [\updelta_0, 1]^n\) and \(Q_U = [-u_0, u_0]^n\). Let \(D = \operatorname{diag}(\bf d)\), with \({\bf d} = [d_1, \dots, d_n] \in Q_D\), be the dissipation matrix, and let 
 \(\mathcal X = \{x \in \mathbb R^n | Dx \in [0, 1]^n \}\). For \(\upgamma>0\) and \(R > 0\), let the sample space be
    \begin{align}
        \mathcal{M}_\upgamma(\updelta_0, u_0) = &\{(u, x_0, D, \Lambda, W) \in (\mathbb R^n)^2\times (\mathbb S^n_{\succ 0})^2 \times \mathbb R^{n\times n}:
      \nonumber  \\ 
         &u\in Q_U, x_0 \in \mathcal X, D \in \operatorname{diag}(Q_D), \Lambda \in \operatorname{diag}(Q_\Lambda),\nonumber\\
        ~ &W\in \mathbb R^{n\times n}~\text{s.t.}~(W-D)^\top \Lambda + \Lambda (W-D) \preceq -\upgamma I, 
        \lVert W\rVert_F \leq R\} 
    \end{align}
\end{definition}

The set \(\mathcal{M}_\upgamma(\updelta_0, u_0)\) is appropriate to be sampled from uniformly due to its compactness, as we state next.

\begin{lemma}[Compactness of sampling set for Monte Carlo]
    Let \(0<\updelta_0 \ll 1,  u_0>0, \upgamma>0, \text{ and } R>0\). 
    Then, \(\mathcal{M}_\upgamma(\updelta_0, u_0)\) is compact.
\end{lemma}

\begin{proof}
    The result follows from Proposition~\ref{prop: LDS topology} and the application of Tychonoff's theorem~\citep{DGW:94}.
\end{proof}

Let \(\Delta_1, \dots, \Delta_{N}\) be \(N\) uniform samples of 5-tuples \((u_k, x_{0,k}, D_k, \Lambda_k, W_k)\) from \(\mathcal{M}(\updelta_0, u_0)\).
For each \(k\), let \(\upgamma_k:\mathcal{M}(\updelta_0, u_0) \rightarrow \{0, 1\}\) be the ground truth indicator given by
\begin{equation}
    \upgamma(\Delta_k) = \begin{cases}
        1, \quad \text{if GAS under } \Delta_k\\
        0, \quad \text{otherwise}
    \end{cases}
\end{equation}
Let \(\operatorname{FR}(\Delta)\) be the firing-rate network with dissipative and synaptic weights given by \(\Delta = (D, W)\), respectively.
The following result gives us the minimum number of samples needed for a given statement to hold probabilistically under a demanded guarantee.

\begin{lemma}[Probability estimation via Monte Carlo]\label{lem:prob-guarantee}
    Let \(A\) be the event `For every initial condition on \(\mathcal X\) and every input with entries between \(\pm u_0\), the trajectory of every firing-rate network with diagonal dissipativity \(D\succ0\) and synaptic matrix \(W\) satisfying the \(\mathcal{LDS}\) condition converges to a unique equilibrium point.'
    We define
    \begin{equation}
        p = \mathbb P\Big[A\Big]    
    \end{equation}
    and the estimate thereof
    \begin{equation}
        \hat p = \frac{1}{N} \sum_{k=1}^{N} \hat\upgamma(\Delta_k)
    \end{equation}
    where \(\hat \upgamma(\Delta_k)\) is the outcome of a sufficiently long numerical simulation of the trajectory of \(\operatorname{FR}(\Delta_k)\) that serves as an estimate of \(\upgamma(\Delta_k)\).
    Given \(\upepsilon>0\) and \(\updelta>0\), it holds that
    \begin{equation}
        \mathbb P[|\hat p - p| < \upepsilon] \geq 1-\updelta
    \end{equation} if
    \begin{equation}
        N \geq \ceil*{\frac{1}{2\upepsilon^2} \ln\frac{2}{\updelta}}.
    \end{equation}
\end{lemma}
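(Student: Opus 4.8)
The plan is to recognize the estimator $\hat p$ as an empirical mean of independent, identically distributed, bounded random variables and to invoke a Hoeffding-type concentration inequality.

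First I would observe that, since the $5$-tuples $\Delta_1,\dots,\Delta_N$ are drawn independently and identically from $\mathcal{M}(\updelta_0,u_0)$, the derived outcomes $\hat\upgamma(\Delta_1),\dots,\hat\upgamma(\Delta_N)$ are themselves i.i.d.; each takes values in $\{0,1\}\subset[0,1]$, and under the standing hypothesis that a sufficiently long numerical simulation correctly detects whether the trajectory converges to a unique equilibrium—so that $\hat\upgamma(\Delta_k)$ coincides with the ground-truth indicator $\upgamma(\Delta_k)$—it has mean $\mathbb{E}[\hat\upgamma(\Delta_k)]=\mathbb{P}[A]=p$. Hence $\mathbb{E}[\hat p]=p$, i.e.\ $\hat p$ is an unbiased estimator of $p$.

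Next I would apply Hoeffding's inequality to the bounded i.i.d.\ sum $N\hat p=\sum_{k=1}^{N}\hat\upgamma(\Delta_k)$. Since each summand lies in an interval of length $1$, for every $t>0$ one has $\mathbb{P}\big[\,|N\hat p - Np|\ge t\,\big]\le 2\exp(-2t^2/N)$. Choosing $t=N\upepsilon$ gives $\mathbb{P}\big[\,|\hat p - p|\ge \upepsilon\,\big]\le 2\exp(-2N\upepsilon^2)$, and passing to the complementary event, $\mathbb{P}\big[\,|\hat p - p|<\upepsilon\,\big]\ge 1-2\exp(-2N\upepsilon^2)$.

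Finally I would force the right-hand side above $1-\updelta$ by imposing $2\exp(-2N\upepsilon^2)\le\updelta$, equivalently $2N\upepsilon^2\ge\ln(2/\updelta)$, equivalently $N\ge\frac{1}{2\upepsilon^2}\ln\frac{2}{\updelta}$; since $N$ is an integer this holds as soon as $N\ge\ceil*{\frac{1}{2\upepsilon^2}\ln\frac{2}{\updelta}}$, which is exactly the claimed sample-complexity bound. The only non-routine point is the modeling hypothesis implicit in the lemma's phrasing, namely that the numerical surrogate $\hat\upgamma$ faithfully reproduces the true GAS indicator $\upgamma$ so that $\hat p$ is unbiased for $p$; granting this, the remainder is a direct specialization of the standard Hoeffding sample-complexity estimate with no further obstacle.
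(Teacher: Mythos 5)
Your proof is correct and is clearly the intended argument: the paper states this lemma without providing a proof, and the bound $N \geq \lceil \tfrac{1}{2\upepsilon^2}\ln\tfrac{2}{\updelta}\rceil$ is precisely the standard Hoeffding sample-complexity estimate for an empirical mean of i.i.d.\ $\{0,1\}$-valued variables, which is exactly what you derive. You also correctly flag the one genuine hypothesis being smuggled in — that the numerical surrogate $\hat\upgamma$ agrees with the ground-truth indicator $\upgamma$ so that $\hat p$ is unbiased for $p$ — which the paper likewise leaves implicit.
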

Lemma~\ref{lem:prob-guarantee} follows from a straightforward application of Hoeffding's inequality, with a similar application derived in Section 8.3 of~\citep{RT-GC-FD:13}.

We now discuss the setup and findings of the Monte Carlo simulation.
\begin{table}[htb]
    \centering
   \begin{tikzpicture}[
  font=\sffamily\small,
  table/.style={
    matrix of nodes,
    nodes in empty cells,
    nodes={
      minimum height=0.7cm,
      text depth=0pt,
      text width=2.5cm,
      align=center,
      anchor=center
    },
    column sep=-\pgflinewidth,
    row sep=-\pgflinewidth,
    draw=gnblue6!70,
    thick
  },
  blocktitle/.style={
    font=\bfseries,
    text width=1.8cm,
    align=center,
    fill=gray!10,
    draw=gnblue6!70,
    thick
  },
  blockline/.style={line width=0.8pt},
  innerframe/.style={
    rounded corners=3pt,
    thick,
    draw=gnblue6!70!gnblue6,
    fill=gnblue6!5
  }
]

\matrix (tbl) [table] {
  & \textbf{Activation function} & \textbf{Success rate} \\  
  & ReLU & $1.0$ \\
  & Saturation$_{0}^{1}$ & $1.0$ \\
  & Sigmoid & $1.0$ \\
  & Logsumexp & $1.0$ \\
  & ReLU & $1.0$ \\
  & Saturation$_{0}^{1}$ & $1.0$ \\
  & Sigmoid & $1.0$ \\
  & Logsumexp & $1.0$ \\
  & ReLU & $1.0$ \\
  & Saturation$_{0}^{1}$ & $1.0$ \\
  & Sigmoid & $1.0$ \\
  & Logsumexp & $1.0$ \\
};

\begin{scope}[on background layer]
  \filldraw[innerframe]
    ([shift={(-0.2cm,0.2cm)}]tbl.north west)
    rectangle
    ([shift={(0.2cm,-0.2cm)}]tbl.south east);
\end{scope}

\node[blocktitle, minimum height=4*0.4cm, anchor=center]
  at (-3,2.35) {$n=3$};
\node[blocktitle, minimum height=4*0.4cm, anchor=center]
  at (-3,-0.35) {$n=4$};
\node[blocktitle, minimum height=4*0.4cm, anchor=center]
  at (-3,-3.05) {$n=5$};

\draw[blockline, draw=gnblue6!70] (tbl-1-1.south west) -- (tbl-1-3.south east); 
\draw[blockline, draw=gnblue6!70] (tbl-5-2.south west) -- (tbl-5-3.south east); 
\draw[blockline, draw=gnblue6!70] (tbl-9-2.south west) -- (tbl-9-3.south east); 

\end{tikzpicture}
    \caption{Monte Carlo results for the conjecture on firing-rate global asymptotic stability}
    \label{tab:MC}
\end{table}
\paragraph{Numerical validation of Conjecture~\ref{conj: GAS}.}
We have numerically tested Conjecture~\ref{conj: GAS} using a Python implementation available in the provided Zenodo code repository. 
Following the requirements derived in Lemma~\ref{lem:prob-guarantee}, we performed \(N = 27{,}000\) independent simulations, whose results we summarize in Table~\ref{tab:MC}, based on which we claim that the conjecture is true with observed probability \(\hat p = 1\) within error \(\epsilon = 0.01\) with high confidence \(1 - \updelta = 0.99\).
The parameters used for all the samples were \(\updelta_0 = 0.01\) and \(u_0 = 10\). 
The Frobenius bound on \(W\) was chosen to be \(R = 100\).
The sample drawn from $\mathcal{M}_\upgamma(\updelta_0, u_0)$ was retained only if the obtained \(W-D\) satisfies the Lyapunov Diagonal Stability (\(\mathcal{LDS}\)) condition for the solution \(\Lambda\), until a total of \(N\) \(\mathcal{LDS}\) samples were obtained.
For each combination of network dimension \(n \in \{3,4,5\}\) and activation function (ReLU, Saturation, Sigmoid, Logsumexp) we drew \(N = 27{,}000\) samples.

For each admitted sample, we evaluated GAS through a two-step procedure. First, we computed the equilibrium point \(x^\star\) using the nonlinear solver \texttt{scipy.fsolve} with tolerance \(10^{-5}\). Second, we integrated the dynamical system using the LSODA integrator (\texttt{scipy.integrate.LSODA}) over the time window \([0,\,T]\) with \(T = 5000\), producing an asymptotic state \(x^{\infty}\). 
The system was classified as GAS whenever
\[
\|x^\star - x^{\infty}\|_{\infty} < 10^{-4}.
\]
This threshold was chosen to ensure reliable numerical discrimination between true convergence and long-transient behavior.
In conclusion, the Monte Carlo simulation probabilistically establishes the nontrivial claim that under the \(\mathcal LDS\) condition, the firing-rate network is GAS for a wide range of activation functions satisfying Assumption~\ref{ass: sl-d}, at least for network dimension \(n<6\), though it likely holds for all \(n\).

\subsection{Wilson-Cowan model}
The Wilson-Cowan (WC) model~\citep{HRW-JDW:72} is a famous dynamical system capturing the interaction between an excitatory and an inhibitory neuron. 
Despite its simplicity, the model displays rich parameter and input dependent behavior that can result in a variety of dynamical outcomes.
\begin{figure}[!ht]
    \centering
    \includegraphics[width=0.6\linewidth]{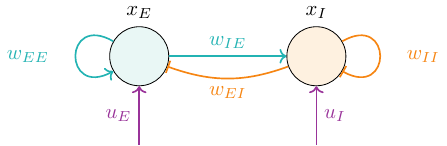}
    \caption{Schematic depiction of the Wilson-Cowan model, with one excitatory and one inhibitory neuron receiving external inputs.}
    \label{fig:WilsonCowan}
\end{figure}

The mathematical expression of the WC model for the saturating activation function is
\begin{equation}\label{eq: W-C}\tag{W-C}
    \begin{pmatrix}
        \dot x_{E}\\
        \dot x_{I}
    \end{pmatrix}
    =
    -D
    \begin{pmatrix}
        x_{E}\\
        x_{I}
    \end{pmatrix}
    +
    \Bigg[\begin{matrix}\underbrace{
        \begin{pmatrix}
            w_{EE} & -w_{EI}\\
            w_{IE} & -w_{II}
        \end{pmatrix}}_{\coloneqq W}
        \begin{pmatrix}
        x_{E}\\
        x_{I}
    \end{pmatrix}\vspace{-1em}
    +
    \underbrace{\begin{pmatrix}
        u_{E}\\
        u_{I}
    \end{pmatrix}}_{\coloneqq u}
    \end{matrix}\Bigg]_{0}^{1}
    \vspace{1em}
\end{equation}
where $D=\diag(d_{E},d_{I})$ with $d_{E},d_{I}>0$ and \(w_{XY}>0\) for all \(X, Y \in \{E, I\}\).
This WC model being a firing-rate model naturally inherits all of its equilibrium properties, in particular, Proposition~\ref{prop: surjFR} applies.
The linear-threshold model is given by
\begin{equation}\label{eq:FR_LTN} \tag{LTN}
    \dot x = -Dx + [Wx + u]_0^1,
\end{equation}
where specifically \(\Phi(\cdot) = [\cdot]_0^1\), with the clipping operator applied componentwise.
We henceforth call this system the Linear-Threshold Network (LTN).
We now discuss the dynamical behavior of this system, which is known to be able to show sustained oscillations and stable fixed-point behavior, and the conditions under which they arise.

\subsubsection{Conditions for limit cycle}
This section extends the results of~\citep{EN-JC:19,EN-RP-JC:22} to the \(D \neq I\) case.
For convenience, we define
\begin{equation}\label{eq:Delta_def}
\Delta\coloneqq\det(D-W)=w_{EI}w_{IE}-(w_{EE}-d_E)(w_{II}+d_I),
\end{equation}
and the input set
\begin{equation}\label{eq:Uint_def}
\mathcal U_\mathrm{int} = \Big\{(u_E, u_I)\ \Big|\ 0 < u_E < \tfrac{w_{EI}}{d_I} - \left(\tfrac{w_{EE}}{d_E} - 1\right),\quad 0 < \left(d_I + w_{II}\right) u_E - w_{EI}u_I < \tfrac{\Delta}{d_E}\Big\}.
\end{equation}
The following result provides the necessary and sufficient conditions for the emergence of a limit cycle. The results of~\citep{EN-JC:19,EN-RP-JC:22} are recovered by setting \(D = I\). We will provide later a novel game-theoretic interpretation of these conditions.

\begin{theorem}[Necessary and sufficient conditions for a stable limit cycle]\label{th:limcyc}
Consider the system~\eqref{eq: W-C} on \(\mathcal X\).
It admits a (nontrivial) stable limit cycle in \(\mathcal X\) if and only if
\begin{enumerate}[label=\roman*)]
\item \(U_\mathrm{int} \neq \emptyset\) and \(u \in \mathcal U_\mathrm{int}\), and
\item \(w_{EE} > w_{II} + d_E + d_I\).
\end{enumerate}
\end{theorem}
\begin{proof}
    The system~\eqref{eq: W-C} is a continuous planar switched-affine system on the positively invariant compact box \(\mathcal X\). 
    As in the planar E-I result of~\citep{EN-RP-JC:22}, the problem therefore reduces to characterizing when the system has no asymptotically stable equilibrium.
    We first determine over which switching regions the dynamics are stable. If \(x_E\) is inactive or saturated (call those regions \(\Omega_{(0, 0)}\), \(\Omega_{(0, l)}\), \(\Omega_{(0, s)}\), \(\Omega_{(s, 0)}\), \(\Omega_{(s, l)}\), \(\Omega_{(s, s)}\); the subscripts are \(0\), for inactive, \(l\) for active, and \(s\) for saturated), the Jacobian is one of
    \begin{equation}\label{eq:jacs1}
        \begin{pmatrix}
        -d_E & 0\\
        0 & -d_I
        \end{pmatrix},
        \qquad
        \begin{pmatrix}
        -d_E & 0\\
        w_{IE} & -(w_{II}+d_I)
        \end{pmatrix},
    \end{equation}
    both of which are Hurwitz for all admissible parameters.
    If \(x_E\) is active and \(x_I\) is inactive or saturated (call those regions \(\Omega_{(l, 0)}\), \(\Omega_{(l, s)}\)), then the Jacobian is
    \begin{equation}
        \begin{pmatrix}
        w_{EE}-d_E & -w_{EI}\\
        0 & -d_I
        \end{pmatrix},
    \end{equation}
    which is Hurwitz if and only if \(w_{EE}<d_E\).
    On the other hand, when both are active (\(\Omega_{(l, l)}\)), we have the Jacobian
    \begin{equation}
        \begin{pmatrix}
        w_{EE}-d_E & -w_{EI}\\
        w_{IE} & -(w_{II} + d_I)
        \end{pmatrix},
    \end{equation}
    whose trace is \(w_{EE}-w_{II}-d_E-d_I\) and determinant is \(\Delta\).
    
    So, \(w_{EE} \geq d_E\) is necessary for a stable limit cycle, as otherwise all switching regions are stable, as there is at least one fixed point (and in this case, a unique one).
    Next, to exclude the equilibria that arise in naturally stable regions (with Jacobians~\eqref{eq:jacs1}), we write out the candidate equilibria and exclude them case-by-case from appearing within their corresponding regions.
    Their equilibrium candidates are
    \begin{equation}
        x^*_{(0,0)}=\binom00,\qquad
        x^*_{(0,\ell)}=\binom{0}{\frac{u_I}{d_I+w_{II}}},\qquad
        x^*_{(0,s)}=\binom{0}{\frac1{d_I}},
    \end{equation}
    \begin{equation}
        x^*_{(s,0)}=\binom{\frac1{d_E}}0,\qquad
        x^*_{(s,\ell)}=\binom{\frac1{d_E}}{\frac{u_I+w_{IE}/d_E}{d_I+w_{II}}},\qquad
        x^*_{(s,s)}=\binom{\frac1{d_E}}{\frac1{d_I}}.
    \end{equation}
    
    Substituting the candidates into their expected equilibrium conditions, the candidates do not belong to their corresponding region if and only if the following hold:
    \begin{equation}\label{eq:aux1}
        u_E>0,
    \end{equation}
    \begin{equation}\label{eq:aux2}
        u_E<\frac{w_{EI}}{d_I}-\frac{w_{EE}-d_E}{d_E},
    \end{equation}
    \begin{equation}\label{eq:aux3}
        u_E>\min\Bigl\{\frac{w_{EI}}{d_I},\frac{w_{EI}}{d_I+w_{II}}u_I\Bigr\},
    \end{equation}
    \begin{equation}\label{eq:aux4}
        u_E<-\frac{w_{EE}-d_E}{d_E}+\max\Bigl\{0,\frac{w_{EI}}{d_I+w_{II}}\Bigl(u_I+\frac{w_{IE}}{d_E}\Bigr)\Bigr\}.
    \end{equation}
    Under~\eqref{eq:aux2}, one has \(u_E<\frac{w_{EI}}{d_I}\), so~\eqref{eq:aux3} is equivalent to
    \begin{equation}\label{eq:aux3b}
        (d_I+w_{II})u_E-w_{EI}u_I>0,
    \end{equation}
    which 
    we violate exactly when \(u \in \mathcal U_\mathrm{int}\).
    Also, because~\eqref{eq:aux1} holds and \(w_{EE}\ge d_E\), the zero branch of the maximum in~\eqref{eq:aux4} is impossible. Hence~\eqref{eq:aux4} is equivalent to
    \begin{equation}\label{eq:aux4b}
        (d_I+w_{II})u_E-w_{EI}u_I<\frac{\Delta}{d_E}.
    \end{equation}
    Combining~\eqref{eq:aux1}-\eqref{eq:aux4b}, the six stable regions contain no equilibrium candidate if and only if \(u\in\mathcal U_{\mathrm{int}}\).

    It remains to check the all-active region \(\Omega_{(l,l)}\) contains an equilibrium under \(u \in \mathcal U_\mathrm{int}\).
    The all-active equilibrium, if it exists, will be given by 
    \[x_\mathrm{int} = (D-W)^{-1}u = \frac{1}{\Delta} \binom{(d_I+w_{II})u_E-w_{EI}u_I}{w_{IE}u_E+(d_E-w_{EE})u_I}.\]
    Substituting this into the equilibrium condition, we get that \(u\in \mathcal U_\mathrm{int}\) implies that \(x_\mathrm{int} \in \Omega_{(l,l)}\), so that the Jacobian at \(x_\mathrm{int}\), namely \(-D + W\), cannot be Hurwitz.
    For \(-D+W\) to be Hurwitz it is necessary and sufficient that its trace be negative and its determinant be positive, i.e.,
    \[
    \operatorname{tr}(-D+W) = w_{EE}-w_{II}-d_E-d_I < 0
    \quad\text{and}\quad
    \operatorname{det}(-D+W) = \Delta = (d_E-w_{EE})(d_I+w_{II})+w_{EI}w_{IE} > 0.
    \]
    Under \(w_{EE}\geq d_E\), the first condition requires \(w_{EE}-w_{II}-d_E-d_I<0\) while the second, \(\Delta>0\), must hold for the interior equilibrium \(x_\mathrm{int}\) to exist (since \((D-W)^{-1}\) is used above). Hence \(-D+W\) is not Hurwitz if and only if the trace condition \(w_{EE}-w_{II}-d_E-d_I>0\) is violated.
    This gives us (ii).
\end{proof}

Now that we have the necessary and sufficient conditions for the existence of limit cycles for the 2-dimensional WC model, we can prove the following result.

\begin{proposition}[\textbf{Uniqueness of the equilibrium of the Wilson-Cowan}]
\label{prop: uniq/stab WC}
    Let $w_{EE}<d_{E}$. 
    Then for any input $u\in\real^2$ and bias $b\in\real^{2}$, system~\eqref{eq: W-C} admits a globally asymptotically stable equilibrium point.
\end{proposition}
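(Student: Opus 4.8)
The plan is to cast the Wilson--Cowan model~\ref{eq: W-C} as a two--dimensional firing--rate network in the sense of Definition~\ref{def:FR}, with dissipation $D=\diag(d_E,d_I)\succ 0$, synaptic matrix
\[
W=\begin{pmatrix} w_{EE} & -w_{EI}\\ w_{IE} & -w_{II}\end{pmatrix},
\]
componentwise saturated activation $\Phi_i(\cdot)=[\cdot]_0^1$, and \emph{constant} effective input $\tilde u:=b+u\in\real^2$ (the bias $b$ only translates the argument of $\Phi$ and is dynamically inert). With this identification the statement is precisely Theorem~\ref{thm: EIglob} read at $N=2$, so the work reduces to verifying that $(D,W,\Phi)$ meets that theorem's three hypotheses: $-D+W\in\mathcal{LDS}$ with $D\succ 0$; diagonality, weak monotonicity and non-expansiveness of $\Phi$; and the commutativity $WD=DW$.

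The activation hypotheses hold for free: taking steepness $m=1$, the map $[\cdot]_0^1$ is diagonal, takes values in $[0,1]$, is weakly increasing, and is $1$-Lipschitz hence non-expansive, so Assumption~\ref{ass: sl-d} is satisfied. The substantive step is the $\mathcal{LDS}$ certificate, and here I would reuse the block-decoupling trick from the proof of Theorem~\ref{thm: LDH}: choose the positive diagonal $P=\diag(w_{EI}^{-1},w_{IE}^{-1})$. A short computation shows that this choice kills the off-diagonal entries of the Lyapunov matrix,
\[
(-D+W)^{\top}P+P(-D+W)=\diag\!\left(\frac{2(w_{EE}-d_E)}{w_{EI}},\;-\frac{2(d_I+w_{II})}{w_{IE}}\right),
\]
which is negative definite exactly when $w_{EE}<d_E$ (the second diagonal entry being automatically negative since $d_I,w_{II}>0$). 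Thus $-D+W\in\mathcal{LDS}$, i.e.\ this is Theorem~\ref{thm: LDH} in the degenerate case $N_E=N_I=1$. By Proposition~\ref{prop: LDH-to-P} we get $D-W\in\mathcal P$ (which one can also read off directly, the three principal minors of $D-W$ being $d_E-w_{EE}>0$, $d_I+w_{II}>0$, and $(d_E-w_{EE})(d_I+w_{II})+w_{EI}w_{IE}>0$; cf.\ Proposition~\ref{prop: EIuniq} for the converse), so Proposition~\ref{prop: surjFR} makes the firing--rate map $x\mapsto -Dx+\Phi(Wx+\tilde u)$ a homeomorphism of $\real^2$ and there is a unique equilibrium $x^\star$.

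The one hypothesis of Theorem~\ref{thm: EIglob} that is not automatic here is the commutativity $WD=DW$, and I expect it to be the main obstacle: for genuinely non-scalar $D$ it holds only if $d_E=d_I$. For the normalized Wilson--Cowan model $D=d_E\mathcal I_2$ — the case of~\eqref{eq:EI-2dim} — commutativity is trivial and Theorem~\ref{thm: EIglob} closes the proof directly, giving that $x^\star$ is globally asymptotically stable for every constant $\tilde u=b+u$. For the stated general diagonal $D$ I would either reopen the Lyapunov computation in the proof of Theorem~\ref{thm: EIglob} — commutativity enters there only to replace $\tilde\Phi(Wy)^{\top}PWDy$ by $\tilde\Phi(Wy)^{\top}PDWy$, and with the explicit diagonal $P$ above and $N=2$ it remains to check that the residual antisymmetric cross term produced by $PWD\neq PDW$ is dominated by the strictly negative term $-y^{\top}Dy$ already carried by the energy $V$, a finite planar estimate — or, more directly, argue on planar structure: from $w_{EE}<d_E$ every element $-D+KW$ of the Clarke generalized Jacobian of $F$, with $K=\diag(k_E,k_I)$ and $k_i\in[0,1]$, has negative trace ($k_Ew_{EE}-d_E<0$ and $-k_Iw_{II}-d_I<0$) and positive determinant ($(d_E-k_Ew_{EE})(d_I+k_Iw_{II})+k_Ek_Iw_{EI}w_{IE}>0$), hence is Hurwitz, and everywhere-Hurwitzness together with the already-established uniqueness of the equilibrium yields global asymptotic stability via the planar Markus--Yamabe-type criterion. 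Either way, $x^\star$ is globally asymptotically stable, which is the claim.
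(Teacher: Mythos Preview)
Your uniqueness argument is essentially the paper's: both verify $D-W\in\mathcal P$ (you via the explicit $\mathcal{LDS}$ certificate $P=\diag(w_{EI}^{-1},w_{IE}^{-1})$, the paper by computing the three minors directly) and then invoke the homeomorphism of Proposition~\ref{prop: surjFR}.

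For global asymptotic stability your route differs from the paper's, and neither of your two branches is fully closed. The paper does \emph{not} go through Theorem~\ref{thm: EIglob}; instead it exploits planarity: (i) it checks that every piecewise Jacobian $-D+\Sigma W$, $\Sigma\in\diag\{0,1\}^2$, is Hurwitz, giving local stability of the unique equilibrium; (ii) it rules out limit cycles by noting that Theorem~\ref{th:nashnexist} (condition (i)) makes $w_{EE}\geq d_E$ necessary for the no--Nash/limit--cycle regime, which contradicts the hypothesis; (iii) it then applies Poincar\'e--Bendixson on the forward-invariant compact box $[0,d_E^{-1}]\times[0,d_I^{-1}]$, so every trajectory must converge to the unique locally stable equilibrium. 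This is elementary and entirely self-contained. Your first branch---reopening the Lyapunov computation of Theorem~\ref{thm: EIglob} and absorbing the $PWD\neq PDW$ residual---is only sketched (``it remains to check\ldots''), and that cross-term estimate is exactly the content you would need to supply. Your second branch, the Markus--Yamabe-type criterion from everywhere-Hurwitz generalized Jacobians, is correct in spirit (and your trace/determinant computation for $-D+KW$ with $K\in\diag([0,1]^2)$ is right), but it imports a substantial external theorem---and here the vector field is only piecewise linear, so you would need a nonsmooth planar version rather than the classical $C^1$ Markus--Yamabe result. The paper's Poincar\'e--Bendixson argument avoids both issues at the cost of being dimension-specific.
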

\begin{proof}
    Under the hypothesis $w_{EE}<d_{E}$ we have that the matrix $D-W$ has principal minors:
    \textbf{Order 1:}
    \begin{align}
            d_{E}-w_{EE}&>0\\
            d_{I}+w_{II}&>0
    \end{align}
    \textbf{Order 2:}
    \begin{equation}
            (d_{E}-w_{EE})(d_{I}+w_{II})+w_{EI}w_{IE}>0
    \end{equation}
    and consequently, $D-W\in\mathcal{P}$. 
    Thus, Proposition~\ref{prop: surjFR} applies and we have a unique equilibrium.

    The unique equilibrium is locally stable, as seen by noticing that all the possible Jacobians, which are of the form \(\mathcal D_{x}(\Sigma)=-D + \Sigma W\) where \(\Sigma\in \operatorname{diag}\{0, 1\}^2\), are Hurwitz.
    Using Theorem~\ref{th:limcyc}, we rule out limit cycles as we have assumed \(w_{EE} < d_E\), contrary to what is necessary for limit cycles to appear.
    Since \eqref{eq: W-C} is forward invariant on the compact set \(\mathcal X = [0, 1/d_E]\times [0, 1/d_I]\), all its trajectories, by application of the Poincar\`e-Bendixson theorem, must go to either a limit cycle or an equilibrium point.
    Combined with local stability at the unique equilibrium and having ruled out limit  cycles, we conclude GAS.
\end{proof}

We now discuss a special two-player game associated with the two-population WC model.

\subsubsection{Zero-sum game for the Wilson-Cowan}
Once again recall~\eqref{eq:pseudograd}, which we rewrite as follows:
\begin{align}\label{eq:pseudogradient_WC}
    \tilde\nabla J (x,u) = (D - W)x - u
    = M\Lambda^{-1}\underbrace{\left(\Lambda M \left((D - W)x - u\right)\right)}_{=:N(x)},
\end{align}
where \(M = \operatorname{diag}(1, -1)\) and \(\Lambda = \operatorname{diag}(1/w_{EI}, 1/w_{IE})\). 
Then specifically for the PxGP corresponding to the system~\eqref{eq: W-C},
\begin{align}
    N(x)
    &=
    \begin{pmatrix}
        \tfrac{1}{w_{EI}} & 0\\
        0 & -\tfrac{1}{w_{IE}}
    \end{pmatrix}
    \begin{pmatrix}
        d_E - w_{EE} & w_{EI}\\
        -w_{IE} & d_I + w_{II}
    \end{pmatrix}
    \binom{x_E}{x_I}
    -
    \begin{pmatrix}
        \tfrac{1}{w_{EI}} & 0\\
        0 & -\tfrac{1}{w_{IE}}
    \end{pmatrix}
    \binom{u_E}{u_I}
    \notag
    \\
    &=
    \underbrace{\begin{pmatrix}
        \tfrac{d_E- w_{EE}}{w_{EI}} & 1\\[2pt]
        1 & -\tfrac{d_I + w_{II}}{w_{IE}}
    \end{pmatrix}}_{=:H}
    \binom{x_E}{x_I}
    -
    \binom{\tfrac{u_E}{w_{EI}}}{-\tfrac{u_I}{w_{IE}}}.
    \label{eq:sym_zsg_grad}
\end{align}
The benefit is that~\eqref{eq:sym_zsg_grad} is an affine map with symmetric linear part \(H\), hence \(N(x)\) arises as the true gradient of a scalar function. 
Define the quadratic cost
\begin{equation}\label{eq:ZSG_cost}
    J_\mathrm{ZSG}(x)
    =
    \frac{1}{2}\,x^\top H x
    -
    \binom{\tfrac{u_E}{w_{EI}}}{-\tfrac{u_I}{w_{IE}}}^\top x,
    \qquad x=\binom{x_E}{x_I}.
\end{equation}
Using the scalar \(J_\mathrm{ZSG}\), define the constrained zero-sum game
\begin{equation}\label{eq:ZSG_game}
    \min_{x_E\in \mathcal{X}_E}\max_{x_I \in \mathcal X_I} \ J_\mathrm{ZSG}(x_E, x_I)
\end{equation}
on \(\mathcal X = \mathcal X_E \times \mathcal X_I \coloneqq [0, 1/d_E]\times [0, 1/d_I]\).
We define the (unconstrained) descent/ascent strategy
\begin{align}\label{eq:strategy_desc/asc}
    \dot x = -\, M\Lambda^{-1}\nabla J_\mathrm{ZSG}(x) \eqqcolon -F(x).
\end{align}
The dynamics~\eqref{eq:strategy_desc/asc} is a minimax optimization, a weighted version of the typical gradient descent/ascent (see~\citep{AC-BG-JC:17}).
When the graph of~\eqref{eq:ZSG_cost} is a saddle surface, a minimax optimizer ideally should converge to the saddle point of the game.
We interpret all the findings concerning the WC dynamics in terms of the ZSG, to postulate what game behavior induces a particular dynamical behavior, and also show the connection between the proposed weighted strategy~\eqref{eq:strategy_desc/asc} and the system dynamics~\eqref{eq: W-C}.

\subsubsection{Cost functions that lead to global convergence}
We now interpret the LDS condition with regards to the zero-sum cost surface~\eqref{eq:ZSG_cost} and the WC model achieving convergence to the saddle point equilibrium of the game~\eqref{eq:ZSG_game}
regardless of the initial state.
This corresponds to monostable behavior: the network disregards any former activity and settles at a unique equilibrium decided by the inputs.
Recall that trajectories of~\eqref{eq: W-C} approach the unique equilibrium asymptotically, as proved in Proposition~\ref{prop: uniq/stab WC}, under \(W-D \in \mathcal LDS\).
For the WC model, the \(\mathcal{LDS}\) condition is exactly enforced by \(w_{EE} < d_E\).
An interesting observation is that the geometric condition of the graph of the ZSG cost in~\eqref{eq:ZSG_cost}

being a saddle surface is necessary but not sufficient for convergence of WC trajectories to the saddle. In fact, for the two-dimensional \(H\) matrix, to ensure indefiniteness of the Hessian of the cost, we only need to ensure \(\operatorname{det} H< 0\), which gives us
\begin{equation}
    (d_E - w_{EE})(d_I + w_{II}) + w_{EI}w_{IE} > 0
\end{equation}
which in terms of \(w_{EE}\) is
\begin{equation}\label{eq:saddle_cond_wEE}
    w_{EE} < d_E + \frac{w_{EI}w_{IE}}{d_I + w_{II}}.
\end{equation}
Satisfying~\eqref{eq:saddle_cond_wEE} ensures that the graph of the ZSG cost is a saddle surface, and since it is a weaker condition than \(w_{EE} < d_E\), it is necessary but not sufficient for the graph of the cost function to be a saddle surface for convergence to a unique equilibrium.
The condition \(w_{EE} < d_E\) itself in the game perspective is that the ZSG cost in~\eqref{eq:ZSG_cost} 

is strictly convex in \(x_E\) and strictly concave in \(x_I\), as the second derivative with respect to \(x_E\) is strictly positive (\(d_E - w_{EE} > 0\)) and with respect to \(x_I\) is strictly negative (\(-(w_{II} + d_I) < 0\)).
Next, we look at another regime of dynamical behavior, which takes us into the inhibitory-stabilized regime.

\subsubsection{Cost functions that induce a limit cycle}
We next interpret the oscillatory regime of the WC dynamics through the geometry of the zero-sum cost~\eqref{eq:ZSG_cost}. 
Depending on the application, this regime may represent a useful rhythmic mechanism, as in a central pattern generator, or an undesirable lack of convergence in a feedforward computational circuit. 
We do not re-derive the limit cycle conditions from the game formulation itself, but rather show that the same parameter regime identified in Theorem~\ref{th:limcyc} has a natural interpretation in terms of the associated quadratic cost, as shown in the convergent regime above.

From Theorem~\ref{th:limcyc}, the limit-cycle regime of~\eqref{eq: W-C} is characterized by the conditions
\begin{equation}
u\in\mathcal U_{\mathrm{int}},
\qquad
w_{EE}>w_{II}+d_E+d_I.
\end{equation}
The first condition ensures that the affine equilibrium
\begin{equation}\label{eq:xint}
x_{\mathrm{int}}=(D-W)^{-1}u=\frac1{\Delta}
\binom{(d_I+w_{II})u_E-w_{EI}u_I}{w_{IE}u_E+(d_E-w_{EE})u_I}
\end{equation}
lies in the interior of the active--active region. In particular, \(u\in\mathcal U_{\mathrm{int}}\) implies \(\Delta>0\). Using~\eqref{eq:sym_zsg_grad}, we then obtain \(\operatorname{det}H < 0\), as earlier, so the zero-sum cost \(J_{\mathrm{ZSG}}\) still has a saddle geometry in the oscillatory regime.
However, unlike the globally convergent regime, the cost is no longer convex in the minimizing variable \(x_E\). Indeed,
\begin{equation}
\frac{\partial^2 J_{\mathrm{ZSG}}}{\partial x_E^2}
=\frac{d_E-w_{EE}}{w_{EI}}<0
\end{equation}
whenever \(w_{EE}>d_E\), while
\begin{equation}
\frac{\partial^2 J_{\mathrm{ZSG}}}{\partial x_I^2}
=-\frac{d_I+w_{II}}{w_{IE}}<0
\end{equation}
always holds. 
Thus, although \(J_{\mathrm{ZSG}}\) remains saddle-shaped as a quadratic form, it loses the convex-concave structure that characterized the monostable regime. 
In particular, the excitatory coordinate no longer behaves as a minimizing direction.
This loss of convexity is consistent with the WC dynamics entering the oscillatory regime. In the active--active mode, the Jacobian is
\begin{equation}
\begin{pmatrix}
w_{EE}-d_E & -w_{EI}\\
w_{IE} & -(w_{II}+d_I)
\end{pmatrix},
\end{equation}
whose trace is \(w_{EE}-w_{II}-d_E-d_I\).
Therefore, under the condition \(w_{EE}>w_{II}+d_E+d_I\), the interior active-active equilibrium is unstable. 
The game-theoretic interpretation is then the following: the associated ZSG cost continues to possess saddle geometry, but its curvature is no longer aligned with the minimization/maximization roles of the two coordinates. 
This mirrors the loss of monostable convergence in the WC dynamics and is precisely the regime in which the planar system exhibits the limit-cycle behavior of Theorem~\ref{th:limcyc}.
\section{Functionality of E-I networks}
We now leverage the results of the previous section to study specific excitatory-inhibitory architectures beyond the WC model. 
In particular, we study a new model of lateral inhibition in the cortex resulting in winner-take-all (WTA) behavior, and eventually describe a more general columnar architecture that amplifies even small differences in input signals. 
We continue to consider only the subset of FR models that use the linear-threshold activation function.

\subsection{Lateral inhibition and winner-take-all dynamics in E$^{k}$-I models}
\subsubsection{The simple E$^{2}$-I network of binary decisions}
We start from the simplest setting of two excitatory and one inhibitory neuron implementing a biologically plausible WTA network. The two excitatory neurons are identical in their self-excitation and connectivity parameters. 
Furthermore, the two excitatory neurons are not interconnected and are given the same input except for a small difference that they are expected to detect and discriminate.
We make these notions concrete in this section, starting with defining the network.

\begin{definition}[\textbf{E$^{2}$-I network}]
    Let $D=\diag(d_{E},d_{I},d_{E})\succ 0$ be the dissipation matrix and  
    \begin{equation}\label{eq: SynW_SI_E2I}
        W = \begin{pmatrix}
            w_{EE} & -w_{EI} & 0\\
            w_{IE} & -w_{II} & w_{IE}\\
            0 & -w_{EI} & w_{EE}
        \end{pmatrix} \in \real^{3\times 3} 
    \end{equation}
    be the synaptic matrix.
    We define the E$^{2}$-I network dynamics as
    \begin{equation}\label{eq:E2I}\tag{E$^{2}$-I}
        \dot x =-Dx+[Wx+b+u]_{0}^{1}
    \end{equation}
    with state $x = (x_{E_{1}},x_{I},x_{E_{2}})$, internal bias $b\in\real^{3}$, and external input $u = (u_{E_1},\, u_I,\, u_{E_2})\in\real^{3}$.
\end{definition}

The first and third components of the neural state represent the activity of the two competing excitatory neurons, while the second component represents the activity of the inhibitory neuron. A schematic of this network is presented in Figure~\ref{fig:E$^{2}$I_net}.
\begin{figure}[!ht]
    \centering
    \includegraphics[width=.75\linewidth]{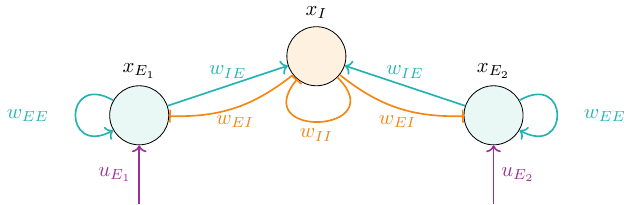}
    \caption{Schematic representation of the E$^{2}$-I network, with the two excitatory neurons on the sides and a central inhibitory neuron. 
    In the analyzed model, the only neurons that receive an external input are the excitatory neurons.}
    \label{fig:E$^{2}$I_net}
\end{figure}

We are interested in establishing WTA behavior, where the excitatory unit receiving the greater external input is the one that is ultimately fully activated, while  the one receiving the smaller external input is fully inhibited by its interaction with the inhibitory neuron. 
Maintaining the same ordering of the neural units, we are interested in equilibrium points proportional to
\begin{align}
    \upnu_{+}&=(1,\upalpha,0)\\
    \upnu_{-}&=(0,\upalpha,1)
\end{align}
or equivalently in terms of the true equilibria $\upxi_{+}=D^{-1}\upnu_{+}$ or $\upxi_{-}=D^{-1}\upnu_{-}$ for some \(\upalpha \in [0, 1]\).

\begin{assumption}[\textbf{E$^2$I inhibitory neuron insulation and normalized inputs}]\label{ass: insu}
    Let $u_{I}\in\real$ be the input to the inhibitory neuron and  $u_{E_{1}},u_{E_{2}}\in\real$ the inputs to the excitatory neurons. 
    We assume that
    \begin{enumerate}
        \item[(i)] the inhibitory neuron is insulated from events external to the E$^{2}$-I network, i.e., $u_{I}=0$;
    \item[(ii)] given actual external signals $\bar u_{E_{1}},\bar u_{E_{2}}$, we consider inputs to the excitatory neurons centered at zero namely, $u_{E_{1}}=\bar u_{E_{1}}-\bar u$ and $u_{E_2}=\bar u_{E_{2}}-\bar u$, with $\bar u=(\bar  u_{E_{1}}+\bar u_{E_{2}})/2$. 

    \end{enumerate}
\end{assumption}
Assumption~\ref{ass: insu} (ii) may be realized by absorbing the \textit{known} fixed mean into the internal bias \(b\).
We now formalize what we expect from the E$^2$I network for it to demonstrate lateral inhibition.

\begin{definition}[\textbf{Finite-Precision Lateral Inhibition (FLI)}]\label{def: fli}
    The network in~\eqref{eq:E2I} is laterally inhibitive with finite precision \(\updelta >0\) if, for excitatory inputs $(u_{E_{1}}, u_{E_{2}})\in\mathcal U \subset \real^{2}$ for an open set \(\mathcal U\) satisfying Assumption~\ref{ass: insu} and $u_{E_{1}}> \updelta,\ u_{E_{2}}< -\updelta$ (resp. $u_{E_{2}}>\updelta,\ u_{E_{1}}<- \updelta$), $\upxi_{+}=D^{-1}\upnu_{+}$ (resp. $\upxi_{-}=D^{-1}\upnu_{-}$) is the unique globally asymptotically stable equilibrium for the network activity.
\end{definition}

\subsubsection{Parameters of E$^{2}$-I networks leading to FLI}\label{sec:param_E2I}
Here we characterize the admissible parameter region that guarantees existence (and eventually under the \(\mathcal{LDS}\) condition, uniqueness and stability) of the desired FLI equilibrium, namely one proportional to $(1,\alpha,0)$ or $(0,\alpha,1)$ with $\alpha\in[0,1]$.
For simplicity, we set \(b = (b_E, b_I, b_E)^\top\).

\begin{lemma}[\textbf{Parameter conditions for existence of the FLI equilibrium}]\label{lem:param-existence}
Consider the system~\eqref{eq:E2I}.
For fixed $\updelta>0$, define input \(u\) satisfying Assumption~\ref{ass: insu} such that \(u_{E_1} \geq \updelta\), \(u_{E_2} \leq -\updelta\), and \(u_{I} = 0\), and take any bias \(b=(b_E,b_I,b_E)^\top\).
Let \(\upnu_{+}=(1,\alpha,0)^\top\) and \(\upxi_{+}=D^{-1}\upnu_{+}\).  
Then \(\upxi_{+}\) is an equilibrium of~\eqref{eq:E2I} for some \(\alpha\in[0,1]\) if and only if
\begin{equation}\label{eq:bounds-cond-lemma}
d_E^{-1}w_{EE}-d_I^{-1}\alpha w_{EI}+b_E+\updelta\ge 1,\qquad -d_I^{-1}\alpha w_{EI}+b_E-\updelta\le 0,
\end{equation}
and \(\alpha\) satisfies the inhibitory fixed-point condition
\begin{equation}\label{eq:I-balance-clamped}
\alpha=\big[d_E^{-1}w_{IE}-d_I^{-1}\alpha w_{II}+b_I\big]_0^1.
\end{equation}
Equivalently,~\eqref{eq:I-balance-clamped} holds if and only if one of the following cases holds:
\begin{equation}\label{eq:I-balance-cases}
\begin{cases}
d_E^{-1}w_{IE}-d_I^{-1}\alpha w_{II}+b_I=\alpha, \quad&0<\alpha<1,\\
d_E^{-1}w_{IE}+b_I\le 0, &\alpha=0,\\
d_E^{-1}w_{IE}-d_I^{-1}w_{II}+b_I\ge 1, &\alpha=1.
\end{cases}
\end{equation}
The corresponding conditions for \(\upxi_{-}=D^{-1}(0,\alpha,1)^\top\) under input \(u\) such that \(u_{E_1} \leq -\updelta\), \(u_{E_2} \geq \updelta\) are identical by symmetry.
\end{lemma}
\begin{proof}
The equilibrium condition \(\dot x=0\) is equivalent to \(y=[WD^{-1}y+b+u]_0^1\) with \(y=Dx\).  
Imposing \(y=\upnu_{+}\) yields the two excitatory saturation requirements, which are exactly~\eqref{eq:bounds-cond-lemma}.
For the inhibitory coordinate, using the second row of \(W\) in~\eqref{eq: SynW_SI_E2I} gives \(\alpha=[w_{IE}x_{E_1}-w_{II}x_I+w_{IE}x_{E_2}+b_I]_0^1=[d_E^{-1}w_{IE}\cdot 1-d_I^{-1}w_{II}\alpha+d_E^{-1}w_{IE}\cdot 0+b_I]_0^1\), which is~\eqref{eq:I-balance-clamped}.  
Finally,~\eqref{eq:I-balance-cases} is the saturation case split of~\eqref{eq:I-balance-clamped}.
\end{proof}

We are ready to provide a criterion to establish finite-precision lateral inhibition.

\begin{proposition}[\textbf{Conditions for FLI}]\label{prop: FLI}
Assume the hypotheses of Lemma~\ref{lem:param-existence} holds for $\updelta>0$ and let $d_E=d_I=d$ with $w_{EE}<d$.  
Then the network~\eqref{eq:E2I} is FLI with finite precision $\updelta$.
\end{proposition}
\begin{proof}
    Notice that Lemma~\ref{lem:param-existence} guarantees that $\upxi_{+}$ as an equilibrium exists (resp. $\upxi_{-}$).
    Note that $k_{in}^{E}=1$, $k_{out}^{E}=1$, and therefore the condition~\eqref{eq:LDS-EE}  becomes $d-w_{EE}(k_{in}^E+k_{out}^{E})/2=d-w_{EE} > 0$, which holds by hypothesis $w_{EE}<d$. On the other hand, the condition~\eqref{eq:LDS-II} is trivially satisfied, as the network just has one inhibitory neuron. 
    By Theorem~\ref{thm: LDS}, we conclude that \(W-D \in \mathcal{LDS}\) and, by Theorem~\ref{thm: EIglob}, that \(\upxi_{+}\) (resp. \(\upxi_{-}\)) is GAS.
\end{proof}

Even though Proposition~\ref{prop: FLI} establishes finite-precision lateral inhibition only for the scalar dissipation case $D=d\mathcal{I}_{3}$, the empirical evidence about Conjecture~\ref{conj: GAS} strongly suggests that the same finite-precision lateral inhibition holds also for general, positive diagonal matrices $D\in\real^3$ as long as $W-D\in\mathcal{LDS}$.
We now study the admissible parameter ranges of $(d_E,w_{EE},d_I,w_{EI})$ so that it is easy to pick parameters and understand the effect of changing them.
\begin{lemma}[\textbf{Admissible excitatory self-coupling range}]\label{lem:wEE-range}
Assume~\eqref{eq:bounds-cond-lemma} holds 

and $w_{EE}<d_E$.  Then
\begin{equation}\label{eq:wEEsp}
    1-2\updelta \le d_{E}^{-1}w_{EE} < 1.
\end{equation}
\end{lemma}
\begin{proof}
From the first inequality in~\eqref{eq:bounds-cond-lemma},
$d_{E}^{-1}w_{EE}-d_{I}^{-1}\alpha w_{EI}\ge 1-\updelta-b_E$.  
From the second inequality,
$-d_{I}^{-1}\alpha w_{EI}\le \updelta-b_E$.  
Substituting yields $d_{E}^{-1}w_{EE}\ge 1-2\updelta$, and the result follows. 
\end{proof}

\begin{remark}[\textbf{Interdependence of synaptic weights and FLI precision}]\label{rem:crit_prec}
If $\updelta>1/2$, then $1-2\updelta<0$. Since $d_{E}^{-1}w_{EE}>0$,~\eqref{eq:wEEsp} reduces to $0<d_{E}^{-1}w_{EE}<1$, which 

is independent of~$\updelta$.  
Thus $\updelta=1/2$ is the critical precision beyond which the admissible excitatory self-coupling region is unaffected by the FLI requirement.  
In particular, when $|u_{E_1}-u_{E_2}|>1$, satisfying $\mathcal{LDS}$ is sufficient.
\end{remark}

We next bound the inhibitory-to-excitatory weight.
 
\begin{lemma}[\textbf{Bounds on inhibitory-to-excitatory coupling}]\label{lem:wEI-range}
Assume~\eqref{eq:bounds-cond-lemma} holds and $w_{EE}<d_{E}$.  
Then
\begin{equation}
    0 \le d_{I}^{-1}\alpha w_{EI} < b_E+\updelta.
\end{equation}
\end{lemma}
\begin{proof}
Since $d_{E}^{-1}w_{EE}<1$, the first inequality in~\eqref{eq:bounds-cond-lemma} gives $d_{E}^{-1}w_{EE}-d_{I}^{-1}\alpha w_{EI}\ge 1-\updelta-b_E$, hence $d_{I}^{-1}\alpha w_{EI}\le d_{E}^{-1}w_{EE}-(1-\updelta-b_E)<b_E+\updelta$, where the strict inequality uses $d_{E}^{-1}w_{EE}<1$.
Nonnegativity follows from the biological sign constraint $w_{EI}\ge 0$.
\end{proof}

\subsubsection{Fast I Slow E system}

Up to now we have dealt with a~\ref{eq:E2I} system where all the neural units evolve with the same unitary timescale. However, the dynamics of different neurons may be characterized by different timescales. Here, we are particularly interested in studying the asymptotics of a system where the timescale for the inhibitory neuron is much faster than that for excitatory neurons.

\begin{definition}[\textbf{Fast I Slow E (FISE) E$^{2}$-I}]\label{def: FISE-E2I}
    Let $D=\diag(d_{E},d_{I},d_{E})$ and define $0<\varepsilon\ll 1$. Then, for $b,u\in\real^{3}$, we define the FISE E$^{2}$-I system as
    \begin{equation}\label{eq: FISE-E2I}\tag{FISE}
        \begin{pmatrix}
            \dot x_{E_{1}}\\
            \varepsilon\dot x_{I}\\
            \dot x_{E_{2}}
        \end{pmatrix}
        =
        -D
        \begin{pmatrix}
            x_{E_{1}}\\
            x_{I}\\
            x_{E_{2}}
        \end{pmatrix}
        +
        \begin{bmatrix}
            \begin{pmatrix}
                w_{EE} & -w_{EI} & 0\\
                w_{IE} & -w_{II} & w_{IE}\\
                0 & -w_{EI} & w_{EE}
            \end{pmatrix}
            \begin{pmatrix}
            x_{E_{1}}\\
            x_{I}\\
            x_{E_{2}}
        \end{pmatrix}
        + b + u
        \end{bmatrix}_{0}^{1}
    \end{equation}
\end{definition}

\begin{remark}[Asymptotic behavior of the system~\ref{eq: FISE-E2I}]\label{rem:asymptotic-FISE}
    Let $\Lambda_{\varepsilon}=\diag(1,\varepsilon,1)$, with $\varepsilon>0$, be the timescale matrix associated to the system~\eqref{eq: FISE-E2I}, and define $f_{\text{E$^{2}$I}}$ and $f_{\text{FISE}}$ as the vector fields in~\eqref{eq:E2I} and in~\eqref{eq: FISE-E2I}, respectively. It follows that for every $x\in\real^3$
    \begin{equation}
        f_{\text{FISE}}(x)=\Lambda_{\varepsilon}^{-1}f_{\text{E$^{2}$I}}(x).
    \end{equation}
    It then easily follows that the systems~\ref{eq:E2I} and~\ref{eq: FISE-E2I} have the same equilibrium points. In particular, if $W-D\in\mathcal{LDS}$, then also system~\ref{eq: FISE-E2I} has a unique equilibrium point. However, knowledge of system~\ref{eq:E2I} being GAS does not transfer to system~\ref{eq: FISE-E2I} being GAS as well.
\end{remark}

Taking the limit $\varepsilon\to 0^+$ in the system~\eqref{eq: FISE-E2I}, we obtain the following reduced model
\begin{equation}\tag{R-E$^{2}$-I}
   \begin{aligned}\label{eq: red_E2I}
        \dot{\bar{x}}_{E_{1}}&=-d_{E}\bar x_{E_{1}}+[w_{EE}\bar x_{E_{1}}-w_{EI}x_{I}^{\star}(\bar x_{E_{1}},\bar x_{E_{2}})+b_{1}+u_{E_{1}}]_{0}^{1}\\
        \dot{\bar{x}}_{E_{2}}&=-d_{E}\bar x_{E_{2}}+[w_{EE}\bar x_{E_{2}}-w_{EI}x_{I}^{\star}(\bar x_{E_{1}},\bar x_{E_{2}})+b_{3}+u_{E_{2}}]_{0}^{1}
\end{aligned} 
\end{equation}
where $x_{I}^{\star}:\real^{2}\to\real$ is the solution to the equation $0=-d_{I}x_{I}^{\star}(\bar{x}_{E_{1}},\bar x_{E_{2}})+[w_{IE}(\bar x_{E_{1}}+\bar x_{E_{2}})-w_{II}x_{I}^{\star}(\bar{x}_{E_{1}},\bar x_{E_{2}})+b_{2}+u_{I}]_{0}^{1}$. Next, we show that such the function $x^{\star}_{I}:\real^{2}\to\real$ is well defined, and study the asymptotics of systems~\ref{eq: FISE-E2I} and~\ref{eq: red_E2I} under slightly stronger assumptions than $\mathcal{LDS}$.

\begin{proposition}[\textbf{Asymptotics of the reduced model}]\label{prop:asymptotics-reduced}
    Let $d_{E}>w_{EE}$ and $d_{I}>w_{II}$. Let $x_{E}:\real_{\geq 0}\to\real^{2}$ be the trajectory of the excitatory neurons of system~\ref{eq: FISE-E2I} and $\bar x_{E}:\real_{\geq 0}\to\real^{2}$ the trajectory of system~\ref{eq: red_E2I}. Then
    \begin{enumerate}
        \item[(i)] it holds in general that
        \begin{equation}
            \limsup_{t\to +\infty} \|\bar x_{E}(t)-x_{E}(t)\|_{2}\leq\varepsilon h
        \end{equation}
        where $h>0$ is a finite constant.

        \item[(ii)] if, in addition, the system~\ref{eq: FISE-E2I} is GAS with the excitatory components of the equilibrium given by $x_{E}^{\star} = (x^{\star}_{E_{1}},x^{\star}_{E_{2}}) \in\real^2$, then for any choice of initial conditions $\bar x_{E}(0),x_{E}(0)\in\real^2$,
    \end{enumerate}
    \begin{equation}
        \limsup_{t\to +\infty} \|\bar x_{E}(t)-x_{E}^{\star}\|_{2}=0.
    \end{equation}
\end{proposition}
\begin{proof}
    We first show (i). Notice that, when $d_{E}>w_{EE}$, we have $W-D\in\mathcal{LDS}$, and consequently the existence, location, and uniqueness of the equilibrium $x^{\star}$ of~\ref{eq:E2I} extends also to the time-scaled system~\ref{eq: FISE-E2I}, cf. Remark~\ref{rem:asymptotic-FISE}. Define now the scalar functions
    \begin{align}
        f(x,z) &= -d_{E}x + [w_{EE}x-w_{EI}z+b+u]_{0}^{1}\\
        g(y,z) &= -d_{I}z + [w_{IE}y-w_{II}z+b+u]_{0}^{1}
    \end{align}
    for $x\in[0,1]$, $z\in[0,1]$, and $y\in[0,2]$. It is then easy to check that 

    \begin{align}
        (f(x_{1},z)-f(x_{2},z))(x_{1}-x_{2})&=-d_{E}(x_{1}-x_{2})^{2}+([w_{EE}x_{1}-w_{EI}z+b+u]_{0}^{1}-[w_{EE}x_{2}-w_{EI}z+b+u]_{0}^{1})(x_{1}-x_{2})\nonumber\\
        &\leq -d_{E}(x_{1}-x_{2})^{2}+|(w_{EE}x_{1}-w_{EI}z+b+u-w_{EE}x_{2}+w_{EI}z-b-u)(x_{1}-x_{2})|\nonumber\\
        &\leq -(d_{E}-w_{EE})(x_{1}-x_{2})^{2}\qquad \forall z\in[0,1] ,
    \end{align}
    where in the first inequality we have used Assumption~\ref{it: weak} on the monotonicity and Assumption~\ref{it: nonexp} on the non-expansiveness of the activation function.
    Thus,  the scalar function $f$ associated to the excitatory neurons is partially contracting in the 2-norm w.r.t. the variable $x$ and with contraction rate $c_{E}=(d_{E}-w_{EE})>0$ for all $z\in[0,1]$. Analogously
    \begin{align}
        (g(y,z_{1})-g(y,z_{2}))(z_{1}-z_{2})&=-d_{I}(z_{1}-z_{2})^{2}+([w_{IE}y-w_{II}z_{1}+b+u]_{0}^{1}-[w_{IE}y-w_{II}z_{2}+b+u]_{0}^{1})(z_{1}-z_{2})\nonumber\\
        &\leq -d_{I}(z_{1}-z_{2})^{2}+|(w_{IE}y-w_{II}z_{1}+b+u-w_{IE}y+w_{II}z_{2}-b-u)(z_{1}-z_{2})|\nonumber\\
        &\leq -(d_{I}-w_{II})(z_{1}-z_{2})^{2}\qquad \forall z\in[0,1].
    \end{align}
    Thereby, the scalar function $\varepsilon^{-1}g$ associated to the inhibitory neuron is partially contracting in the 2-norm w.r.t. the variable $z$ and with contraction rate $c_{I}=\varepsilon^{-1}(d_{I}-w_{II})$ for all $y\in[0,2]$.
    Notice that in our case we have $y=x_{E_{1}}+x_{E_{2}}$. Since $g$ is partially contracting w.r.t. to $z$, then for any $y\in[0,2]$ there exists a unique $z(y)$ such that $0=g(y,z(y))$ \citep[Theorem 2]{DDV-JJS:13}.
    Identifying now with $x_{E}(t)=(x_{E_{1}}(t),x_{E_{2}}(t))$ the trajectory of the excitatory neurons in~\eqref{eq: FISE-E2I} and exploiting \citep[Theorem 3]{DDV-JJS:13}, we have
    \begin{align}
        \|\bar x_{E}(t)-x_{E}(t)\|_{2}&\leq \uplambda e^{-c_{E}t}\|\bar x_{E}(0)-x_{E}(0)\|_{2}+\varepsilon[k(c_{E},c_{I},\varepsilon,t)+h(c_{E},t)]\nonumber\\
        &\leq \uplambda e^{-c_{E}t}\|\bar x_{E}(0)-x_{E}(0)\|_{2}+\varepsilon[\kappa(c_{E},c_{I}) e^{-c_{E}t}+h(c_{E},t)]
    \end{align}
    where $\kappa(c_{E},c_{I})>0$ is finite, and $h(c_{E},t)=h(1-e^{-c_{E}t})>0$ is finite $\forall t>0$, with $h>0$ (see \citep[Theorem 3]{DDV-JJS:13} for additional details). Let $h=\sup_{t\geq 0} h(c_{E},t)$ and taking the limit, we have
    \begin{align}
        \limsup_{t\to+\infty}\|\bar x_{E}(t)-x_{E}(t)\|_{2} \leq\varepsilon h.
    \end{align}

    Next, we show (ii). If the system~\ref{eq: FISE-E2I} is GAS, its GAS equilibrium point will be the same as that of~\ref{eq:E2I}, and in particular the excitatory components of the equilibrium $x_{E}^{\star} = (x^{\star}_{E_{1}},x^{\star}_{E_{2}}) \in\real^2$ will be independent of $\varepsilon>0$. Since also the dynamics of~\eqref{eq: red_E2I} are independent of $\varepsilon$, then it holds that
    \begin{align}
        \limsup_{t\to+\infty}\|\bar x_{E}(t)-x_{E}^{\star}\|_{2}
        &\leq \limsup_{t\to+\infty}\|\bar x_{E}(t)-x_{E}(t)\|_{2}+\underbrace{\lim_{t\to+\infty}\|x_{E}(t)-x_{E}^{\star}\|_{2}}_{= 0} \leq \varepsilon h
    \end{align}
    and taking $\varepsilon\to 0^+$ we conclude the result.
\end{proof}

\paragraph{Biological underpinnings of the reduced E$^{2}$-I model and visualization of the reduced excitatory energies.}
\begin{figure}[!tb]
    \centering
    \subfloat{\includegraphics[width=\linewidth]{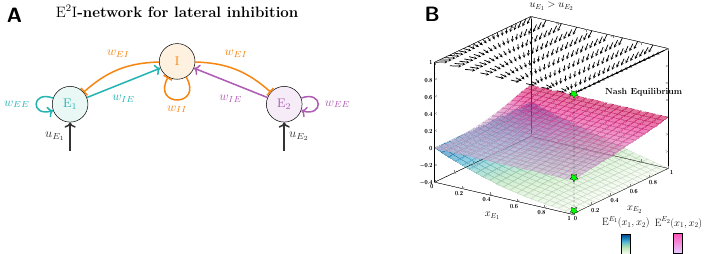}}     
    \caption{\textbf{Schematic of a E$^{2}$-I excitatory-inhibitory circuit and visualization of the interacting energies under $\mathcal{LDS}$ constraints.} (A) Schematic of a minimal E$^{2}$-I circuit composed of two excitatory neurons ($E_{1}$, $E_{2}$) interacting through a shared inhibitory interneuron. Each excitatory neuron receives an external input ($u_{E_{1}}$, $u_{E_{2}}$), which is forwarded to the inhibitory neuron. The interneuron in turn relays equal inhibitory feedback to both excitatory cells, effectively suppressing the activity of the neuron receiving the weaker excitatory drive. (B) Energies associated with the reduced excitatory subsystem obtained by eliminating the inhibitory variable through $x_{I}\equiv \bar{x}_{I}(x_{E_{1}},x_{E_{2}})$, under the condition $u_{E_{1}}>u_{E_{2}}$. Both excitatory neurons are characterized by convex individual energy landscapes, yet $\mathrm{E}^{E_{2}}$ exhibits a local maximum around $x_{E_{2}}=1$, driving its activity toward the inactive state ($x_{E_{2}}=0$). Conversely, $\mathrm{E}^{E_{1}}$ is minimized at $x_{E_{1}}=1$, promoting activation. The joint dynamics settle at the Nash equilibrium defined by the intersection of these tendencies—$x_{E_{1}}=1$, $x_{E_{2}}=0$—in full agreement with the theoretical predictions for E$^{2}$-I circuits under the $\mathcal{LDS}$ condition.}
    \label{fig: EIE}
\end{figure}
Excitatory neurons and inhibitory neurons are known for their different frequency of firing with respect to the same stimulus~\citep{CBW-GMJ:90}, with inhibitory neurons being capable of higher, sustained rates of firing. In mathematical terms, the different firing frequency between the two classes of neurons translates in the inhibitory time constant $\uptau_{I}$ being smaller than the excitatory time constant $\uptau_{E}$. \begin{equation}\label{eq: FISE-time}
        \begin{pmatrix}
            \uptau_{E}\dot x_{E_{1}}\\
            \uptau_{I}\dot x_{I}\\
            \uptau_{E}\dot x_{E_{2}}
        \end{pmatrix}
        =
        -D
        \begin{pmatrix}
            x_{E_{1}}\\
            x_{I}\\
            x_{E_{2}}
        \end{pmatrix}
        +
        \begin{bmatrix}
            \begin{pmatrix}
                w_{EE} & -w_{EI} & 0\\
                w_{IE} & -w_{II} & w_{IE}\\
                0 & -w_{EI} & w_{EE}
            \end{pmatrix}
            \begin{pmatrix}
            x_{E_{1}}\\
            x_{I}\\
            x_{E_{2}}
        \end{pmatrix}
        + b + u
        \end{bmatrix}_{0}^{1}
    \end{equation}
Under the time re-parametrization $\uptau_{E} s = t$ and setting $\varepsilon=\uptau_{I}/\uptau_{E}$, we can rewrite system~\eqref{eq: FISE-time} in $s$-time as presented in~\eqref{eq: FISE-E2I}. Exploiting time-separation for the dynamics, we can take the limit $\varepsilon\to 0$ and consider the reduced model where each excitatory neuron has dynamics
\begin{equation}
    \dot{x}_{E_{i}}=-d_Ex_{E_{i}}+[w_{EE}x_{E_{i}}-w_{EI}\bar{x}_{I}(x_{E_{1}},x_{E_{2}})+u_{E_i}]_{0}^{1}
\end{equation}
for $i=1,2$. In the reduced model, $\bar{x}_{I}(x_{E_{1}},x_{E_{2}})$ is the solution to the equation $0=-d_{I}\bar{x}_{I}(x_{E_{1}},x_{E_{2}})+[w_{IE}( x_{E_{1}}+x_{E_{2}})-w_{II}\bar{x}_{I}(x_{E_{1}},x_{E_{2}})+b_{2}+u_{I}]_{0}^{1}$. In the specific case where we have absence of self-inhibition for the inhibitory neuron $w_{II}=0$, we derive the explicit inhibitory contribution
\begin{equation}
    0=-d_I\bar{x}_{I}(x_{E_1},x_{E_2})+[w_{IE}(x_{E_{1}}+x_{E_2})]_{0}^{1}
\end{equation}
Under the hypothesis $w_{EE}<d_{E}$, we can apply Proposition~\ref{prop:asymptotics-reduced}, and the reduced system $\bar x_{E}(t)$ has the same asymptotic behavior of the two excitatory neurons in the full E$^{2}$-I network. Reducing the full E$^{2}$-I system to the reduced dynamics allows for a qualitative understanding of the interplay between the cost functions $\En^{E_1}(x_{E_{1}},\bar{x}_{I},x_{E_{2}},u_{E_1})$ associated to the first excitatory neuron and $\En^{E_2}(x_{E_{1}},\bar{x}_{I},x_{E_{2}},u_{E_2})$ associated to the second excitatory neuron. Specifically, the reduced energies computed at equilibrium $x_{I}\equiv \bar{x}_{I}(x_{E_1},x_{E_2})$ are
\begin{equation}
    \En^{E_i}=-x_{E_i}\left[\frac{1}{2}w_{EE}x_{E_i}-w_{EI}\bar{x}_{I}+u_{E_{i}}\right]+\frac{1}{2}x_{E_i}^{2}
\end{equation}
For a choice of external inputs $u_{E_1},u_{E_2}$ satisfying the $\updelta$-precision hypothesis and such that $u_{E_1}>u_{E_2}$, we can observe from Fig.~\ref{fig: EIE}(B) that the neural activity converges to the reduced-model equilibrium $(1,0)$, associated to the desired equilibrium $x^{\star}_{1}=(1,1,0)$ for the full model, from every point of state-space.
In addition, both energies $\En^{E_1}(x_{E_{1}},\bar{x}_{I},x_{E_{2}},u_{E_1})$ and $\En^{E_2}(x_{E_{1}},\bar{x}_{I},x_{E_{2}},u_{E_2})$ present a convex surface with non-aligned minima, and $(1,0)$ is the only point where both neurons cannot improve their action profile.

\subsubsection{A biologically plausible winner-take-all network}
\begin{figure}[ht!]
    \centering
    \includegraphics[width=0.7\linewidth]{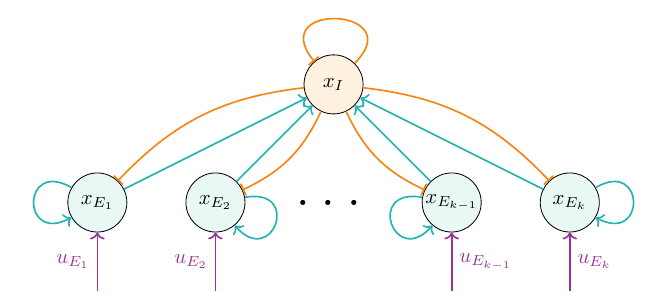}
    \caption{Star topology of a biologically plausible winner-take-all network.}
    \label{fig:WTA}
\end{figure}
Consider now a generic network having an arbitrary number $k\in\mathbb{N}$ of excitatory neurons connected to one central inhibitory neuron. 
We show how the associated dynamics~\eqref{eq: EIdyn} can potentially implement a biologically plausible WTA model~\citep[Chapter 3.2]{FB:25nn}. We define this network below.
\begin{definition}[\textbf{E$^{k}$-I network}]
    Let the dissipation matrix be $D=\diag(\underbrace{d_{E},\dots,d_{E}}_{k},d_{I})\in\real^{(k+1)\times (k+1)}$ and the synaptic matrix be
    \begin{equation}\label{eq:EkI synapse}
        W=
        \begin{pmatrix}
            w_{EE} & 0 & \dots & -w_{EI}\\
            0 & w_{EE} & \dots & -w_{EI}\\
            0 & \dots & \ddots & -w_{EI}\\
            w_{IE} & w_{IE} & \dots & -w_{II}
        \end{pmatrix} \in\real^{(k+1)\times (k+1)}.
    \end{equation}
    We define the E$^{k}$-I network dynamics as
    \begin{equation}\label{eq:EkI}\tag{E$^{k}$-I}
        \dot x =-Dx+[Wx+b+u]_{0}^{1}
    \end{equation}
    with state $x = (x_{E_{1}},\dots,x_{E_{k}}, x_{I})$, internal bias $b\in\real^{k+1}$, and external input $u\in\real^{k+1}$.
\end{definition}

We formalize next the generalization of an FLI-performing network to multiple signals $u\in\real^{k+1}$. We first make assumptions about the E$^k$I input vector, which extends the notion of Assumption~\ref{ass: insu}.

\begin{assumption}[E$^k$I inhibitory neuron insulation and normalized inputs]\label{ass: insu-k}
    Let $u_{I}\in\real$ be the input to the inhibitory neuron and  $(u_{E_{1}},\dots, u_{E_{k}})^\top\in\real^k$ be the inputs to the excitatory neurons. 
    We assume that
    \begin{enumerate}
        \item[(i)] the inhibitory neuron is insulated from events external to the E$^{k}$-I network, i.e., $u_{I}=0$;
        \item[(ii)] given actual external signals $\bar u_{E_{1}},\dots,\bar u_{E_{k}}$, we consider inputs to be such that $u_{E_{i}}=\bar u_{E_{i}}-\bar u$ for all \(i \in [k]\), with $\bar u=(\bar u_{E_{m}}+\max_{j \in [k] \setminus\{m\}}\bar u_{E_{j}})/2$, where \(m :=\argmax{i\in [k]}{\bar u_{E_i}}\), which we assume is the unique largest input.

    \end{enumerate}
\end{assumption}

This can be similarly realized by absorbing the mean of the highest and second highest inputs into the internal bias.
Notice that setting \(k=2\), we recover Assumption~\ref{ass: insu}.
We now make the notion of being able to distinguish the highest input among \(k\) inputs using the E$^{k}$-I network precise. The following generalizes Definition~\ref{def: fli}.

\begin{definition}[\(k\)-signal Finite-Precision Lateral Inhibition ($k$-FLI)]\label{def: fli-k}
    The network in~\eqref{eq:EkI} is laterally inhibitive with finite precision \(\updelta >0\) if, for any $u = (u_{E_{1}},\dots, u_{E_{k}}, u_{I})\in\real^{k+1}$ satisfying Assumption~\ref{ass: insu-k}
    and having unique $i\in\{1,\dots,k\}$ such that
    \begin{align}
        u_{E_{i}}&\geq\updelta\\
        u_{E_{j}}&\leq-\updelta\qquad \forall j\in\{1,\dots,k\},\ j\neq i
    \end{align} for some \(\alpha \in [0,1]\), the point
    \begin{equation}\label{eq: k_FLI_eq}
        x^{\star}= D^{-1}[\vect{e}_{i}\ \alpha]\in\real^{k+1}
    \end{equation} is the unique stable equilibrium for the network activity.
\end{definition}

Here $\vect{e}_{i}\in\real^{k}$ is the $i^{th}$ canonical basis vector in $\real^{k}$. 
The following result generalizes Lemma~\ref{lem:param-existence} and establishes conditions for the existence of  $k$-FLI equilibrium.

\begin{lemma}[\textbf{Parameter conditions for existence of $k$-FLI equilibrium}]\label{lm: k-bin}
    The network~\eqref{eq:EkI} admits the FLI equilibrium for finite precision \(\updelta>0\), i.e., \(x^*\) as defined in~\eqref{eq: k_FLI_eq} with \(\alpha \in (0,1)\), if
    \begin{subequations}\label{eq:EkI ineq}
    \begin{align}
        &d_{E}^{-1}w_{EE}-d_{I}^{-1}\alpha w_{EI}+ b_E + \updelta \geq 1\\
        &-d_{I}^{-1}\alpha w_{EI}+ b_E - \updelta \leq 0\qquad\\
        &d_{E}^{-1}w_{IE}-d_{I}^{-1}\alpha w_{II} + b_I = \alpha \label{eq:I_FLI-k}
    \end{align}
    \end{subequations}
    If \(\alpha = 1\) (resp. \(\alpha = 0\)), \eqref{eq:I_FLI-k} becomes \(d_E^{-1}w_{IE}-d_I^{-1}\alpha w_{II}+b_I \ge 1 ~(\text{resp. } \leq 0)\).
\end{lemma}

The proof follows a similar route as the one for Lemma~\ref{lem:param-existence} and is therefore omitted.  Proposition~\ref{prop: FLI} is extended through the following result.

\begin{proposition}[\textbf{Conditions for $k$-FLI}]\label{prop: k-FLI}
    Consider the network~\eqref{eq:EkI}.
    Then, 
    \begin{enumerate}[label=\roman*)]
        \item the E$^{k}$-I model has a unique equilibrium and $W-D \in \mathcal{LDS}$ iff $w_{EE}<d_{E}$;
        \item if the network satisfies the hypotheses of Lemma~\ref{lm: k-bin} for finite precision \(\updelta > 0\) and if \(d_E = d_I = d\) with \(w_{EE} < d_E\), the network is $k$-FLI with finite precision \(\updelta\).
    \end{enumerate}
\end{proposition}
\begin{proof}
    (i) $\Rightarrow$) This follows from Propositions~\ref{prop: LDS-to-P} and~\ref{prop: EIuniq}.

    $\Leftarrow$) Applying Theorem~\ref{thm: LDS},
    the inequalities reduce to \(d_{I}>-w_{II}\) and \(d_E>w_{EE}\), with the former being satisfied trivially, and the latter by assumption,
    giving us $W-D\in\mathcal{LDS}$. 
    Consequently, we also have that $D-W\in\mathcal{P}$ and therefore the E\textsuperscript{k}I dynamics admit a unique equilibrium point.

    (ii) Under the assumption of \(w_{EE} < d\), (i) gives us that E$^k$I has a unique equilibrium.
    When the conditions of Lemma~\ref{lm: k-bin} hold, that unique equilibrium is \(x^*\) \eqref{eq: k_FLI_eq}. 
    Since \(D = d\mathcal I_{k+1}\), Theorem~\ref{thm: EIglob} grants us the GAS of \(x^*\). 
    Thus by definition, E$^{k}$I is $k$-FLI.
\end{proof}

Recall, yet again, that while Proposition~\ref{prop: k-FLI} holds only for \(D=d\mathcal I_{k+1}\), Conjecture~\ref{conj: GAS} strongly suggests that under the LDS condition for arbitrary positive diagonal \(D\), the network is GAS at the established unique equilibrium.
Thus, under the suggested biologically-plausible synaptic structure and the established conditions of Proposition~\ref{prop: k-FLI}, this EI system is able to perform WTA behavior for an arbitrary number of inputs.

\subsection{Lateral inhibitory columns of E$^{2}$-I circuits}
We have shown that the E$^2$-I setup is capable of discriminating signals of a specified mean \(\bar u\) and minimum separation \(\updelta\), cf. Proposition~\ref{prop: FLI}. This capability is also enjoyed by the E$^k$-I setup, cf. Proposition~\ref{prop: k-FLI}, albeit the increase in the number of neurons does not translate into a higher precision. 
Recall also that in Assumption~\ref{ass: insu} (resp. Assumption~\ref{ass: insu-k}), a  normalization is applied to make it easier to control the characterization of FLI (resp. k-FLI).

However, if the input signals to the excitatory nodes are of an unknown or unexpected mean (say, for E$^2$I, \(u_{E_1} + u_{E_2} \neq 0\) still remains) or have a separation less than \(\updelta\), the firing rate model instead displays a \textit{soft} WTA behavior, settling short of complete activation or inhibition. 
Even worse, if the input mean is off by too much, it could wholly fail to discriminate between the inputs, with all E neurons simultaneously saturating or being inactivated.
Here, we address this problem by stacking multiple E$^k$I motifs in a columnar organization. We do it in two steps: in the first layer, we correct for the mean, as otherwise information might be lost at saturation or inactivation.

Secondly, we work on the resulting \textit{normalized} output of the excitatory neurons, to be fed into subsequent layers to ensure an ultimate output that exhibits \textit{full} WTA, as opposed to \textit{soft} WTA.

\subsubsection{Dynamic normalization of neural activity}
It is known in the neuroscience literature~\citep{SG:73ce} that cortical contour enhancement circuits evolve dynamically so as to normalize their total activity. 
Normalization helps in the improvement of signal-to-noise ratio in the presence of weak inputs and also aids in the prevention of overexcitation (and thus, a loss of information) due to widespread neural saturation. 
In line with the aforementioned goal of preventing runaway complete neural saturation or inactivation despite unexpected changes to the input mean, we adjust the equilibrium of the inhibitory neuron to compensate for changes in the mean of the input.
We formalize this notion in the following definition.

\begin{definition}[\textbf{FLI with inhibitory compensation}]
    An E$^2$-I unit has FLI with inhibitory compensation if the inhibitory neuron adjusts its equilibrium such that inputs of arbitrary mean can be differentiated according to FLI.
\end{definition}

The purpose of inhibitory compensation is to correct the input mean so that subsequent layers receive normalized inputs so as to then perform \textit{full} FLI.
The columnar structure will be such that subsequent layers beyond the first one do not have their own internal bias. The first layer is designed to provide the \textit{internal} bias so that the equilibrium conditions for FLI are met. The next result ensures this behavior.

\begin{theorem}[\textbf{Input-mean cancellation via inhibitory compensation}]\label{thm: norm-u}
    Consider the system~\ref{eq:E2I}
    having \(W\) as in~\eqref{eq: SynW_SI_E2I} and \(D=d\mathcal I_3\).
    Let $w_{EE}<d$, with excitatory bias \(b_E \in \mathbb R\).
    Let \(u_I=0\) and \(u_{E_1}, u_{E_2} \in \mathbb R\).

    Let the following law provide the inhibitory internal bias:
    \begin{equation}\label{eq:bI-law}
        b_I' = b_I + \left(\frac{w_{II}+d}{w_{EI}}\right)\bar u,
    \end{equation}
    where \(\bar u = (u_{E_1} + u_{E_2})/2\) and \(b_I\) is the inhibitory bias under which the excitatory equilibria are centered.
    Define \(y=Dx\).
    Assume that the original FLI equilibrium under inputs satisfying \(u_{E_1} \geq \updelta\), \(u_I = 0\), and \(u_{E_2} \leq -\updelta\) is given by 
    \(y^\mathrm{cent}= (y_{E_1}^{\mathrm{cent}}, \alpha, y_{E_2}^{\mathrm{cent}})^\top\)
    and that the compensated equilibrium under $(\bar u+\updelta,0,\bar u-\updelta)$ is 
    \(y^* = (y_{E_1}^*, \alpha', y_{E_2}^*)^\top\), with \(\alpha,\alpha'\in(0,1)\), i.e., both equilibria satisfy unsaturated conditions for the inhibitory neuron. 
    Then the excitatory components of the compensated equilibrium \(y^*\) satisfy the same equilibrium equations as those of \(y^\mathrm{cent}\), i.e., as when the inputs are \(u_{E_1}=+\updelta\), \(u_{E_2}=-\updelta\).
\end{theorem}

\begin{proof}
    By Proposition~\ref{prop: FLI}, for \(y(t) = Dx(t)\), \(y(t)\to y^*\). 

    Since \(0<\alpha'<1\),
    the inhibitory equilibrium is unsaturated and 
    \[(1+d^{-1}w_{II})y_I^*=d^{-1}w_{IE}(y_{E_1}^*+y_{E_2}^*)+b_I',\]
    hence \[-d^{-1}w_{EI}y_I^*=-d^{-1}w_{EI}\frac{d^{-1}w_{IE}(y_{E_1}^*+y_{E_2}^*)+b_I}{1+d^{-1}w_{II}}-\bar u,\] 
    where we have used~\eqref{eq:bI-law} and \(\frac{w_{II}+d}{1+d^{-1}w_{II}}=d\).
    Notice now how the term \(-d^{-1}w_{EI}y_I^*\) in the excitatory equilibrium now contains \(-\bar u\), which cancels out the nonzero mean \(\bar u\) when it enters with the input, as long as the assumption \(0<y_I^*<1\) holds.

    Let the original I equilibrium be \(\alpha = (w_{IE}d^{-1} + d_I)/(1 + w_{II}d^{-1}) \in (0,1)\).
    The excitatory equilibrium equation are consequently the following: for \(E_1\),
    \[y_{E_1}^*=[d^{-1}w_{EE}y_{E_1}^*-d^{-1}w_{EI}y_I^*+b_E+\bar u+\updelta]_0^1\]
    reduces to
    \begin{align}
        y_{E_1}^* &= [d^{-1}w_{EE}y_{E_1}^*-d^{-1}w_{EI}\frac{d^{-1}w_{IE}(y_{E_1}^*+y_{E_2}^*)+b_I}{1+d^{-1}w_{II}}+b_E+\updelta]_0^1\\ \intertext{and for $y^* = (1, y_I^*, 0)$,}
        1 & = [d^{-1}w_{EE}^*-d^{-1}w_{EI}\frac{d^{-1}w_{IE}+b_I}{1+d^{-1}w_{II}}+b_E+\updelta]_0^1\\
        1 & = [d^{-1}w_{EE}^*-d^{-1}w_{EI}\alpha+b_E+\updelta]_0^1
    \end{align}
    and for \(E_2\),
    \[y_{E_2}^*=[d^{-1}w_{EE}y_{E_2}^*-d^{-1}w_{EI}y_I^*+b_E+\bar u-\updelta]_0^1\]
    reduces to 
    \begin{align}
        y_{E_2}^* &=[d^{-1}w_{EE}y_{E_2}^*-d^{-1}w_{EI}\frac{d^{-1}w_{IE}(y_{E_1}^*+y_{E_2}^*)+b_I}{1+d^{-1}w_{II}}+b_E-\updelta]_0^1\\ \intertext{and for $y^* = (1, y_I^*, 0)$,}
        0 &=[-d^{-1}w_{EI}\frac{d^{-1}w_{IE}+b_I}{1+d^{-1}w_{II}}+b_E-\updelta]_0^1\\
        0 &= [-d^{-1}w_{EI}\alpha+b_E-\updelta]_0^1
    \end{align}
    which are exactly the original equilibrium equations with inputs \((+\updelta,0,-\updelta)\) and bias \(b_I\), proving the statement.
\end{proof}

As a result, we can characterize the  range of admissible input means that can be compensated with the first column layer.

\begin{corollary}[Admissible input-mean ranges that can be compensated with unsaturated inhibitory neuron]
\label{cor:umean_range}
Assume the hypotheses of Theorem~\ref{thm: norm-u} hold.  
The compensated inhibitory equilibrium \(d^{-1}\alpha'\) is unsaturated, i.e., \(0<\alpha'<1\), if 
\[-d^{-1}w_{EI}\alpha < \bar u < d^{-1}w_{EI}(1-\alpha),\]
where \(\alpha\) is the original centered-input inhibitory equilibrium. In particular, any input mean \(\bar u\) in this interval can be canceled via~\eqref{eq:bI-law} without inducing saturation of the inhibitory neuron.
\end{corollary}
\begin{proof}
By Theorem~\ref{thm: norm-u}, in the unsaturated regime the inhibitory equilibrium equation under the compensated bias~\eqref{eq:bI-law} yields
\[\alpha'=\alpha+\frac{d}{w_{EI}}\bar u,\]
or equivalently, $\bar u=\frac{w_{EI}}{d}(\alpha'-\alpha)$. Therefore, \(0<\alpha'<1\) holds if and only if \(0<\alpha+\frac{d}{w_{EI}}\bar u<1\), which is equivalent to
\[-\frac{w_{EI}}{d}\alpha<\bar u<\frac{w_{EI}}{d}(1-\alpha),\] 
as stated.
\end{proof}

Now that we have E$^{2}$-I units that accept arbitrary input means within a given range,

we are now ready to achieve contrast enhancement by stacking multiple layers of such E$^{2}$-I units.

\subsubsection{E$^{2}$-I columns} 

We are now interested in expanding the previous framework to the case where we have many E$^{2}$-I models stacked on top of each other, with the input to each layer being the output of the previous layer. Our goals here are:
\begin{itemize}
    \item study how a columnar architecture is functional at discriminating small gradients in the input;
    \item study the stability of the columnar architectures.
\end{itemize}
\begin{figure}[!ht]
    \centering
    \includegraphics[width=0.5\linewidth]{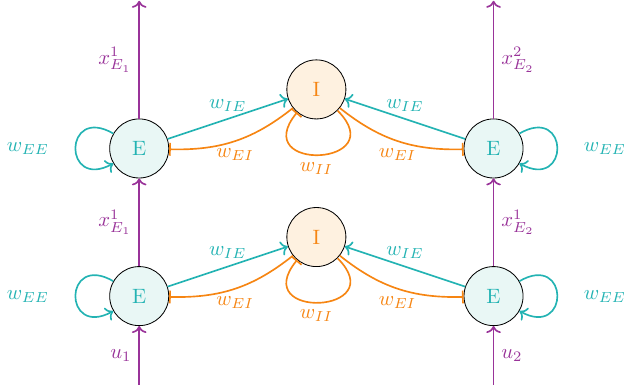}
    \caption{Schematic representation of a lateral inhibitory column composed by E$^{2}$-I networks, where the output of one layer becomes the input for the next layer.}
    \label{fig:cort-col}
\end{figure}

Let $\mathrm{L}\in\mathbb{N}$ be the number of layers in the network. We define $N_{E}=2\mathrm{L}$ to be the total number of excitatory neurons and $N_{I}=\mathrm{L}$ the total number of inhibitory neurons. The total number of neurons in the network is then $N=N_{E}+N_{I}=3\mathrm{L}$. 

We now specialize the construction of the dissipation and synaptic matrices to the case of the E$^{2}$-I column with $\mathrm{L}\in\mathbb{N}$ layers. Let $z=(d_E,d_I,d_E)\in\real^{3}$ and define
\begin{itemize}
    \item Dissipation map $D_{\mathrm{L}}:\mathbb{N}\times\real^{3}\to\real^{N\times N}$.
    \begin{equation}
        D=D_{\mathrm{L}}(z)=\diag(\underbrace{z,\dots,z}_{\mathrm{L}}).
    \end{equation}
    \item Define the intra-layer synaptic matrix
    \begin{equation}\label{eq: Syn_in}
        W_{in} = \begin{pmatrix}
            w_{EE} & -w_{EI} & 0\\
            w_{IE} & -w_{II} & w_{IE}\\
            0 & -w_{EI} & w_{EE}
        \end{pmatrix},
    \end{equation}
    and the inter-layer
    synaptic matrix
    \begin{equation}\label{eq: SynExt}
        W_{ex} = \begin{pmatrix}
            w_{EE} & 0 & 0\\
            0 & 0 & 0\\
            0 & 0 & w_{EE}
        \end{pmatrix}.
    \end{equation}
    Therefore, the synaptic matrix $W$ of the E$^{2}$-I column is
    \begin{equation}\label{eq: W_cortical_col}
        W_{\mathrm{L}}=W=
        \begin{pmatrix}
            W_{in} & \vectorzeros & \vectorzeros & \vectorzeros\\
            W_{ex} & \ddots & \vectorzeros & \vectorzeros\\
            \vectorzeros & \ddots & \ddots & \vectorzeros\\
            \vectorzeros & \vectorzeros & W_{ex} & W_{in}
        \end{pmatrix}
    \end{equation}
    with $\mathrm{L}$ diagonal blocks of intra-layer connectivity and $\mathrm{L}-1$ sub-diagonal blocks of extra-layers connectivity.
\end{itemize}

The maximal in-degree and out-degree of the excitatory-to-excitatory block of the E$^2$-I column are $k_{in}^E=2$ and $k_{out}^E=2$, respectively. According to Theorem~\ref{thm: LDS}, the $\mathcal{LDS}$ condition for the E$^{2}$-I column is ensured if 
\begin{align}
    d_{E}&>w_{EE}(k^E_{in}+k^E_{out})/2\nonumber\\
         &=\frac{4}{2}w_{EE}=2w_{EE}
\end{align}
This bound is sufficient, but might not be necessary. In fact, in what follows, we study a new sufficient condition to ensure $D-W\in\mathcal{P}$ and then use it to refine the bound for $\mathcal{LDS}$.

\begin{proposition}\label{prop: P-cort}
    $D-W=D_\mathrm{L}(z)-W_\mathrm{L}\in\mathcal{P}$ for all $\mathrm{L}\in\mathbb{N}$ iff $d_{E}>w_{EE}$.
\end{proposition}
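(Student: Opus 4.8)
The plan is to exploit the block lower-triangular structure of $D-W$ to collapse the claim, for every $\mathrm L$, to a single $3\times 3$ matrix, and then to verify the $\mathcal P$-property of that matrix by directly enumerating its seven principal minors.

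First I would record the structure. By~\eqref{eq: W_cortical_col}, $W_\mathrm{L}$ is block lower bidiagonal with diagonal blocks $W_{in}$ and subdiagonal blocks $W_{ex}$, while $D=D_\mathrm{L}(z)$ is block diagonal with every diagonal block equal to $\diag(d_E,d_I,d_E)$. Hence $D-W$ is block lower triangular, with each diagonal block equal to the single matrix
\begin{equation*}
  A \;:=\; \diag(d_E,d_I,d_E)-W_{in} \;=\;
  \begin{pmatrix}
    d_E-w_{EE} & w_{EI} & 0\\
    -w_{IE} & d_I+w_{II} & -w_{IE}\\
    0 & w_{EI} & d_E-w_{EE}
  \end{pmatrix},
\end{equation*}
and subdiagonal blocks $-W_{ex}$. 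The key observation is that this triangular structure is inherited by every principal submatrix: given a nonempty index set $S$, split it as $S=\bigsqcup_{l=1}^{\mathrm L}S_l$ according to which layer each index belongs to; an entry of $(D-W)[S]$ whose row lies in layer $p$ and whose column lies in layer $q$ with $p<q$ sits strictly above the block diagonal of $D-W$ and therefore vanishes. Thus $(D-W)[S]$ is again block lower triangular with diagonal blocks $A[S_l]$, so that
\begin{equation*}
  \det\!\big((D-W)[S]\big)\;=\;\prod_{l:\,S_l\neq\emptyset}\det\!\big(A[S_l]\big).
\end{equation*}

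From this identity the reduction is one line in each direction: if $A\in\mathcal P$ then every factor on the right is positive, so $\det((D-W)[S])>0$ for all $S$ and $D-W\in\mathcal P$; conversely, taking $S$ to be a single layer's subset $S_l$ shows that every principal minor of $A$ is positive, i.e. $A\in\mathcal P$. Since $A$ is independent of $\mathrm L$, it then remains only to prove $A\in\mathcal P\iff w_{EE}<d_E$. The forward direction is exactly Proposition~\ref{prop: EIuniq} (equivalently, the order-$1$ minor $A_{11}=d_E-w_{EE}$ must be positive). For the converse I would assume $d_E>w_{EE}$ and check the minors directly: the three order-$1$ minors are $d_E-w_{EE}$, $d_I+w_{II}$, $d_E-w_{EE}$; the three order-$2$ minors are $(d_E-w_{EE})(d_I+w_{II})+w_{EI}w_{IE}$ (for the index sets $\{1,2\}$ and $\{2,3\}$) and $(d_E-w_{EE})^2$ (for $\{1,3\}$); and a cofactor expansion along the last column gives $\det A=(d_E-w_{EE})\big[(d_E-w_{EE})(d_I+w_{II})+2w_{EI}w_{IE}\big]$. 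Positivity of all seven quantities is then immediate from $d_E>w_{EE}$ together with the standing sign conditions on the synaptic weights.

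I do not expect a genuine obstacle here. The only point requiring care is the assertion that a principal submatrix of a block triangular matrix remains block triangular, so that its determinant factors through the diagonal blocks; once that bookkeeping over ``which index lies in which layer'' is made precise, the entire statement rests on a single $3\times 3$ determinant and a handful of sign checks, both of which are routine.
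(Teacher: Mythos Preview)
Your proof is correct and cleaner than the paper's. The paper proves the sufficiency direction by induction on $\mathrm L$: the base case $\mathrm L=1$ is exactly your enumeration of the seven principal minors of $A$, and the inductive step appends one layer, writes $D_\mathrm{L}-W_\mathrm{L}$ as a $2\times 2$ block lower-triangular matrix with top-left block $D_{\mathrm L-1}-W_{\mathrm L-1}$ and bottom-right block $A$, and then explicitly lists all new principal minors of orders $3\mathrm L-2$, $3\mathrm L-1$, $3\mathrm L$, showing each is a positive multiple of some $\upalpha_k$ (a minor of the smaller matrix). Your approach short-circuits this induction by observing once that block lower-triangularity is inherited by every principal submatrix, so every principal minor of $D-W$ factors as a product of principal minors of the single $3\times 3$ block $A$; this collapses the whole ``for all $\mathrm L$'' claim to the one-layer check in a single stroke. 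The paper's case analysis does incidentally yield recursive formulas for the minors, but that is not needed for the statement, and your route is both shorter and more transparent.
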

\begin{proof}
We need to prove both directions of the implication.
\begin{itemize}
    \item[$\Rightarrow$] Apply Proposition~\ref{prop: EIuniq}.
    \item[$\Leftarrow$] We will proceed with a proof by induction.
    \begin{enumerate}
        \item[(I)] $\mathrm{L}=1$.\\
        We start from a one-layer E$^{2}$-I network, with
        \begin{align}
            D_1(z)&=\diag(d_E,d_I,d_E)\\
            W_1&=W_{in}
        \end{align}
        Consider now the matrix $A=D-W$, which has components
        \begin{equation}
        A =
            \begin{pmatrix}
                d_{E}-w_{EE} & w_{EI} & 0\\
                -w_{IE} & d_{I}+w_{II} & -w_{IE}\\
                0 & w_{EI} & d_{E}-w_{EE}
            \end{pmatrix}
        \end{equation}
        and we now focus on computing its principal minors.
        \begin{itemize}
            \item order 1:
            \begin{align}
                d_{E}-w_{EE}&>0\\
                d_{I}+w_{II}&>0
            \end{align}
            \item order 2:
            \begin{align}
                (d_{E}-w_{EE})(d_{I}+w_{II})+w_{EI}w_{IE}&>0\\
                (d_{E}-w_{EE})^{2}&>0
            \end{align}
            \item order 3:
                \begin{align}
                (d_{E}-w_{EE})[(d_{E}-w_{EE})(d_{I}+w_{II})+2w_{EI}w_{IE}]>0
            \end{align}
        \end{itemize}
        \item[(II)] L>1.\\
        We now suppose that $D_{\mathrm{L}-1}(z)-W_{\mathrm{L}-1}\in\mathcal{P}$, and we want to understand whether also $D_{\mathrm{L}}(z)-W_{\mathrm{L}}\in\mathcal{P}$. We start by noticing that we can write
        \begin{equation}
            D_{\mathrm{L}}(z)-W_{\mathrm{L}}=
            \begin{pmatrix}
                D_{\mathrm{L}-1}(z)-W_{\mathrm{L}-1} & \vectorzeros[3(\mathrm{L}-1)\times 3]\\
                \mathcal{O}_{ex} & D_{1}(z)-W_{in}
            \end{pmatrix}
        \end{equation}
        where $\mathcal{O}_{ex}=(\vectorzeros[3\times 3(\mathrm{L}-2)]\:\:W_{ex})$. We can now address the last three principal minors, for which we need to consider all the possible combinations of deletions of rows and columns. For simplicity, we will denote with $\upalpha_{k}$, for $k=1,\dots,3(\mathrm{L}-1)$, any of the principal minors of $D_{\mathrm{L}-1}(z)-W_{\mathrm{L}-1}$ of order $k$. Notice in particular that we have $\upalpha_{k}>0$ for all $k=1,\dots,3(\mathrm{L}-1)$.
        \begin{itemize}
            \item order $3\mathrm{L}-2$:\\
            If we delete two rows and two columns with the same indices from the last block $W_{in}$, then the principal minors are
            \begin{align}\label{eq: minor3d-2}
                (d_E-w_{EE})\upalpha_{3\mathrm{L}-3}&>0\\
                (d_I+w_{II})\upalpha_{3\mathrm{L}-3}&>0
            \end{align}
            If we delete one row and one column from the last $W_{in}$ block and one from the block $D_{\mathrm{L}-1}(z)-W_{\mathrm{L}-1}$, then the principal minors are 
            \begin{align}
                [(d_E-w_{EE})(d_I+w_{II})+w_{EI}w_{IE}]\upalpha_{3\mathrm{L}-4}&>0\\
                (d_E-w_{EE})^2\upalpha_{3\mathrm{L}-4}&>0
            \end{align}
            Finally, if we delete only two rows and columns with same index in $D_{\mathrm{L}-1}(z)-W_{\mathrm{L}-1}$ we have
            \begin{equation}
                (d_E-w_{EE})[(d_E-w_{EE})(d_I+w_{II})+2w_{IE}w_{EI}]\upalpha_{3\mathrm{L}-5}>0
            \end{equation}
            \item order $3\mathrm{L}-1$:\\
            By deleting one row and one column with same index from the last $W_{in}$ block, the principal minors are
            \begin{align}
                [(d_E-w_{EE})(d_{I}+w_{II})+w_{EI}w_{IE}]\upalpha_{3\mathrm{L}-3}&>0\\
                (d_E-w_{EE})^{2}\upalpha_{3\mathrm{L}-3}&>0
            \end{align}
            Instead, by deleting one row and one column with same index from the block $D_{\mathrm{L}-1}(z)-W_{\mathrm{L}-1}$, the principal minors have the form
            \begin{equation}
                (d_E-w_{EE})[(d_E-w_{EE})(d_I+w_{II})+2w_{IE}w_{EI}]\upalpha_{3\mathrm{L}-4}>0
            \end{equation}
            \item order $3\mathrm{L}$:\\
            \begin{equation}\label{eq: minor3d}
                (d_E-w_{EE})[(d_E-w_{EE})(d_I+w_{II})+2w_{IE}w_{EI}]\upalpha_{3\mathrm{L}-3}>0
            \end{equation}
        \end{itemize}
    \end{enumerate}
\end{itemize}
\end{proof}

Notice that by imposing $ \upalpha_{0}=1$ and $\upalpha_{-1},\upalpha_{-2}=0$, we can use~\eqref{eq: minor3d-2} to~\eqref{eq: minor3d} as recursive expressions for the principal minors of the matrix $D_{\mathrm{L}}(z)-W_{\mathrm{L}}$ for all $\mathrm{L}\in\mathbb{N}$.

We are now interested at studying the $\mathcal{LDS}$ condition in the specific case of the E$^2$-I column, possibly applying the results of Theorem~\ref{thm: LDS}. Rewriting the dissipation and synaptic matrices for the E$^{2}$-I column

according to the formalism of~\eqref{eq: dis} and~\eqref{eq: syn}, $N_E=2\mathrm{L}$, $N_I=\mathrm{L}$, and the diagonal components of the synaptic matrix $W_\mathrm{L}$ are
\begin{align}
    W_E^E&=\begin{pmatrix}
        w_{EE} & 0 & 0 & \dots & 0\\
        0 & w_{EE} & 0 & \dots & 0\\
        w_{EE} & 0 & \ddots & \dots & \vdots\\
        0 & \ddots & 0 & \ddots & 0\\
        0 & \dots & w_{EE} &0 & w_{EE}
    \end{pmatrix}\\
    W_I^I&=\diag(\underbrace{w_{II},\dots,w_{II}}_{\mathrm{L}}).
\end{align}
Notice that we do not consider the extra-diagonal blocks of asymmetric connections, as under Assumption~\ref{ass: balance} they do not contribute to the $\mathcal{LDS}$ treatment. 

\begin{proposition}[\textbf{$\mathcal{LDS}$ of E$^{2}$-I columns}]
    Let $d_E>2w_{EE}$. Then
    \begin{enumerate}
        \item[(i)] $W_{\mathrm{L}}-D_{\mathrm{L}}(z)\in\mathcal{LDS}$;
        \item[(ii)] the eigenvalues of $D_{\mathrm{L}}(z)-W_{\mathrm{L}}$ are
        \begin{equation}\label{eq:eigenvalues-Toeplitz}
            \uplambda_{k}=2p_E(d_E-w_{EE})-2p_E w_{EE}\cos\left(\frac{k\uppi}{N+1}\right)\quad k=1,\dots,N.
        \end{equation}
    \end{enumerate}
\end{proposition}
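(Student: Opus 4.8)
The plan is to exhibit an explicit diagonal Lyapunov solution and then read the spectrum off the classical path-graph eigenvalue formula. First I would relabel the $N=3\mathrm{L}$ neurons so that the $N_E=2\mathrm{L}$ excitatory units precede the $N_I=\mathrm{L}$ inhibitory ones, putting $W$, and hence $D-W$, in the $2\times 2$ block form of~\eqref{eq: syn}. Following the construction in the proof of Theorem~\ref{thm: LDH}, I would take $P=\diag\bigl(w_{EI}^{-1}\mathcal{I}_{N_E},\,w_{IE}^{-1}\mathcal{I}_{N_I}\bigr)\succ0$. Because the column obeys the reciprocity Assumption~\ref{ass: balance} and the homogeneity Assumption~\ref{ass: hom}, and because the inter-laminar matrix $W_{ex}$ couples only excitatory units (its middle entry is zero), the identities $p_Ew_{EI}=p_Iw_{IE}=1$ force the excitatory/inhibitory off-diagonal blocks of $(D-W)^{\top}P+P(D-W)$ to vanish. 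The symmetrized matrix therefore splits into an excitatory block $M_E\in\real^{N_E\times N_E}$ and an inhibitory block; the latter equals $2w_{IE}^{-1}(d_I+w_{II})\mathcal{I}_{\mathrm{L}}\succ0$, since no inhibitory neuron receives lateral or feedforward input.

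Next I would pin down $M_E$. From~\eqref{eq: Syn_in}, \eqref{eq: SynExt} and~\eqref{eq: W_cortical_col}, the excitatory sub-block of $W$ is $W_E^E=w_{EE}(\mathcal{I}_{2\mathrm{L}}+S)$, where $S$ is the nilpotent shift with $S_{i,i-2}=1$ encoding the feedforward links $E_j^{l-1}\to E_j^{l}$. Hence
\[
M_E=w_{EI}^{-1}\bigl(2D_E-W_E^E-(W_E^E)^{\top}\bigr)
   =w_{EI}^{-1}\bigl(2(d_E-w_{EE})\,\mathcal{I}_{2\mathrm{L}}-w_{EE}\,(S+S^{\top})\bigr).
\]
The matrix $S+S^{\top}$ connects index $i$ only with $i\pm2$, so the permutation that lists the odd indices (the chain $E_1^{1},\dots,E_1^{\mathrm{L}}$) before the even ones (the $E_2$ chain) conjugates it into $A_{\mathrm{L}}\oplus A_{\mathrm{L}}$, where $A_{\mathrm{L}}$ is the $\mathrm{L}\times\mathrm{L}$ path-graph adjacency matrix (zero diagonal, ones on both first off-diagonals).

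It then remains to diagonalize a single tridiagonal Toeplitz block. I would invoke the standard discrete-sine basis: $A_{\mathrm{L}}$ has eigenvalues $2\cos\!\bigl(k\uppi/(\mathrm{L}+1)\bigr)$ for $k=1,\dots,\mathrm{L}$, with eigenvectors whose $j$-th entry is $\sin\!\bigl(jk\uppi/(\mathrm{L}+1)\bigr)$. Consequently $M_E$ has eigenvalues $\uplambda_k=2p_E(d_E-w_{EE})-2p_E w_{EE}\cos\!\bigl(k\uppi/(\mathrm{L}+1)\bigr)$ with $p_E=w_{EI}^{-1}$, each occurring twice (once per excitatory chain); this is assertion (ii). For (i), since $\cos\!\bigl(k\uppi/(\mathrm{L}+1)\bigr)<1$ for every $k$, each $\uplambda_k>2p_E(d_E-2w_{EE})\ge0$ whenever $d_E>2w_{EE}$, and together with positivity of the inhibitory block this gives $(D-W)^{\top}P+P(D-W)\succ0$, i.e. $W-D\in\mathcal{LDS}$ (this recovers the $d_E>2w_{EE}$ threshold announced from Theorem~\ref{thm: LDH}, which is in fact conservative, since the exact condition at fixed $\mathrm{L}$ reads $d_E>w_{EE}\bigl(1+\cos(\uppi/(\mathrm{L}+1))\bigr)$).

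I expect the only real friction to be combinatorial bookkeeping: verifying that the $W_{ex}$ sub-diagonal blocks reduce exactly to the $S_{i,i-2}$ pattern, that the chosen $P$ annihilates the asymmetric E-to-I and I-to-E cross-blocks, and that $S+S^{\top}$ decomposes as $A_{\mathrm{L}}\oplus A_{\mathrm{L}}$ under the odd/even permutation; the spectral step itself is then routine given the path-graph formula.
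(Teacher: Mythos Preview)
Your proposal is correct and follows essentially the same route as the paper: choose $P=\diag(w_{EI}^{-1}\mathcal{I}_{N_E},w_{IE}^{-1}\mathcal{I}_{N_I})$ as in Theorem~\ref{thm: LDH} so the E--I cross-blocks of $(D-W)^{\top}P+P(D-W)$ cancel, observe that the excitatory block has the $i,i\pm2$ off-diagonal pattern, permute by parity to obtain tridiagonal Toeplitz structure, and read off the path-graph spectrum. Your write-up is in fact more careful than the paper's on one point: you make explicit that the permutation yields \emph{two} disjoint $\mathrm{L}\times\mathrm{L}$ path blocks, so the cosine denominator is $\mathrm{L}+1$ and each eigenvalue has multiplicity two, whereas the paper's statement writes ``$N+1$'' without pinning down which $N$ is meant.
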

\begin{proof}
    In order to apply the result of Theorem~\ref{thm: LDS}, we need to study the matrix $(D_E-W_E^E)^{\top}P_E+P_E(D_E-W_E^E)$, for which the only non-zero coefficients are
    \begin{align}
        \{(D_E-W_E^E)^{\top}P_E+P_E(D_E-W_E^E)\}_{i,i}&=2p_E(d_E-w_{EE})\\
        \{(D_E-W_E^E)^{\top}P_E+P_E(D_E-W_E^E)\}_{i,i\pm 2}&=-p_E w_{EE}
    \end{align}
    Therefore, applying Gershgorin circle theorem~\citep{HRA-JCR:85}, we obtain the condition $d_E>2w_{EE}$. Upon closer inspection, it can be observed that the connections are disjoint for every pair of neurons, and therefore it is possible to find a row $Q$ and columns $Q^{\top}$ permutation matrix such that the transformed $Q\{(D_E-W_E^E)^{\top}P_E+P_E(D_E-W_E^E)\}Q^{\top}$ excitatory block becomes a tri-diagonal matrix.
    \begin{equation}
        Q\{(D_E-W_E^E)^{\top}P_E+P_E(D_E-W_E^E)\}Q^{\top}=
        \begin{pmatrix}
            2p_E(d_E-w_{EE}) & -p_E w_{EE} &  \dots & 0\\
            -p_E w_{EE} & 2p_E(d_E-w_{EE}) & \ddots & \vdots\\
            0 & \ddots & \ddots & -p_E w_{EE}\\
            0 & \dots & -p_E w_{EE} & 2p_E (d_E-w_{EE})
        \end{pmatrix}
    \end{equation}
    which is a Toeplitz matrix, and a formula for the eigenvalues is known~\citep{RMG:06}.
    Specifically, the eigenvalues of this matrix are given by~\eqref{eq:eigenvalues-Toeplitz}.
\end{proof}

Notice in particular that the bound given by Gershgorin circle theorem becomes tight only in the limit $N\to\infty$, since we get
\begin{align}
    \lim_{N\to +\infty} \uplambda_{1}&= \lim_{N\to +\infty} 2p_E(d_E-w_{EE})-2p_E w_{EE}\cos\left(\frac{\uppi}{N+1}\right)\nonumber\\
    &=2p_E(d_E-w_{EE})-2p_E w_{EE}
\end{align}

\begin{proposition}[Required height of $\mathrm{E}^{2}$-$\mathrm{I}$ column for WTA]\label{prop:E2I CC}
    Consider an E$^{2}$-I column with \(l \in \mathbb N\) layers under inhibitory compensation and synaptic matrix as given in \eqref{eq: W_cortical_col}. 
    Let each E$^{2}$-I unit comprising the E$^{2}$-I column have parameters such that the minimum separation of the inputs for lateral inhibition is \( \updelta\).
    Assume that the desired precision of the E$^{2}$-I column for WTA is \(\upepsilon <  \updelta\).
    Then, it must hold that
    \begin{equation}
        \mathrm{L} \geq 1+\Bigg\lceil\frac{\ln{(\upepsilon/\updelta)}}{\ln{(d_{E}/w_{EE} - 1)}}\Bigg\rceil
    \end{equation}
    and the E$^{2}$-I column achieves full WTA for input means within the range defined in Corollary~\ref{cor:umean_range}.

\end{proposition}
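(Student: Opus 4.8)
The plan is to exploit the block lower-triangular structure of the column's synaptic matrix $W_{\mathrm{L}}$ in \eqref{eq: W_cortical_col}: since each lamina receives feedforward input only from the one below, the full dynamics form a cascade, so it suffices to analyze the layers one at a time. First I would note that each lamina, regarded as an $\mathrm{E}^{2}\mathrm{I}$ unit driven by a constant input from below, satisfies the single-unit dissipation--excitation dominance condition $w_{EE}<d_{E}$ (built into the parameter bounds that make it a $\updelta$-FLI circuit); together with the amplification hypothesis $d_{E}<2w_{EE}$ (equivalently $w_{EE}>1/2$ when $d_{E}=1$, as in \eqref{eq: lay}) this places $d_{E}\in(w_{EE},2w_{EE})$. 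Consequently each lamina has a unique globally asymptotically stable equilibrium --- by the scalar-$D$ instance of Theorem~\ref{thm: EIglob} when $d_{E}=d_{I}$, and by Conjecture~\ref{conj: GAS}/the numerical evidence otherwise. A standard cascade argument (layer $1$ converges, drives layer $2$, which converges, and so on, all trajectories confined to the forward-invariant box) then reduces the problem to tracking the stack of layer-wise equilibria.

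Next I would compute the contrast $\Delta_{l}:=x_{E_{1}}^{l,*}-x_{E_{2}}^{l,*}$ layer by layer. As long as the separation of the input reaching layer $l$ stays below the $\updelta$-FLI threshold $2\updelta$, that lamina is in the subthreshold regime; under inhibitory compensation Theorem~\ref{thm: norm-u} pins its output mean at $1/2$ and keeps both excitatory equilibria interior, so they solve the \emph{unsaturated} fixed-point equations. Subtracting the two excitatory equations at layer $l$ (equal internal bias, common inhibitory drive $x_{I}^{l,*}$) gives $(d_{E}-w_{EE})\Delta_{l}=w_{EE}\Delta_{l-1}$ for $l\geq 2$, since $U=\diag(w_{EE},0,w_{EE})$ rescales the incoming contrast by $w_{EE}$; the initialization $\mathrm{x}^{0}=(u_{E_{1}}/w_{EE},0,u_{E_{2}}/w_{EE})$ gives $(d_{E}-w_{EE})\Delta_{1}=u_{E_{1}}-u_{E_{2}}=2\upepsilon$. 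Writing $s_{l}$ for half the separation of the input arriving at layer $l$ (so $s_{1}=\upepsilon$ and $2s_{l+1}=w_{EE}\Delta_{l}$), these relations collapse to the geometric recursion $s_{l+1}=\bigl(\tfrac{w_{EE}}{d_{E}-w_{EE}}\bigr)s_{l}$, hence $s_{l}=\bigl(\tfrac{w_{EE}}{d_{E}-w_{EE}}\bigr)^{l-1}\upepsilon$, whose ratio $\tfrac{w_{EE}}{d_{E}-w_{EE}}=\bigl(\tfrac{d_{E}}{w_{EE}}-1\bigr)^{-1}$ exceeds $1$ precisely under $d_{E}<2w_{EE}$.

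The top lamina $\mathrm{L}$ performs full WTA exactly when the contrast it inherits reaches the discrimination threshold, $w_{EE}|\Delta_{\mathrm{L}-1}|=2s_{\mathrm{L}}\geq 2\updelta$, i.e. $\bigl(\tfrac{w_{EE}}{d_{E}-w_{EE}}\bigr)^{\mathrm{L}-1}\upepsilon\geq\updelta$. Taking logarithms, dividing by the positive quantity $\ln\bigl(\tfrac{w_{EE}}{d_{E}-w_{EE}}\bigr)$, and rewriting both numerator and denominator via $\ln a=-\ln(1/a)$, this is equivalent to $\mathrm{L}-1\geq \ln(\upepsilon/\updelta)/\ln(d_{E}/w_{EE}-1)$, which is the asserted bound; the least admissible integer is the ceiling, recovering \eqref{eq: lay}. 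The statement on input means then follows directly from Theorem~\ref{thm: norm-u}(ii): each lamina re-centers its output about $1/2$, so the admissible window $\bigl(-w_{EI}\upalpha,\,1-w_{EI}\upalpha\bigr)$ is preserved up the cascade, and the column discriminates any input mean in that window.

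The main obstacle I anticipate is the bookkeeping in the second step: one must verify that \emph{every} lamina below $\mathrm{L}$ genuinely remains in the interior subthreshold regime, so that the linear recursion is valid throughout. This requires the $\mathrm{E}^{2}\mathrm{I}$ parameter bounds relating $\updelta$, $w_{EE}$, $w_{EI}$ and $d_{E}$ to rule out premature saturation of an amplified signal before it crosses the threshold, and a fresh application of Theorem~\ref{thm: norm-u} at each layer with its layer-dependent input mean; assembling these into a clean induction on $l$ is where the delicacy lies.
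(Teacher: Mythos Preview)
Your proposal is correct and follows essentially the same approach as the paper: both subtract the two excitatory equilibrium equations in the unsaturated regime to obtain the geometric recursion with ratio $w_{EE}/(d_{E}-w_{EE})$ for the layer-wise input separation, then impose the threshold $\updelta$ at layer $\mathrm{L}$ and take logarithms. Your version is in fact more carefully argued than the paper's --- you make explicit the cascade structure, the role of inhibitory compensation in keeping each sublayer interior, and the induction needed to justify the linear recursion throughout, whereas the paper's proof simply writes the recursion and compounds it.
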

\begin{proof}
    Since it is given that the input signals have a separation of \(\upepsilon <  \updelta\), the first layer will not achieve full WTA.
    Instead, using the equilibrium equations, and the fact the input means are compensated against,
    \begin{equation}
        d_{E}(x_{E_1} - x_{E_2}) = w_{EE}(x_{E_1} - x_{E_2}) + 2\upepsilon.
    \end{equation}
    Let \(\upepsilon^{(1)}\) be the separation in the inputs to the second layer. Then,
    \begin{equation}
            \upepsilon^{(1)} \coloneqq \frac{w_{EE}}{2}(x_{E_1} - x_{E_2}) = \frac{\upepsilon}{(d_{E} - w_{EE})/w_{EE}}.
    \end{equation}
    For \(L\) layers, this is compounded multiplicatively, and we get that
    \begin{equation}
        \upepsilon^{(\mathrm{L}-1)} \coloneqq \frac{x^{\mathrm{L}}_{E_1} - x^{\mathrm{L}}_{E_2}}{2} = \frac{\upepsilon}{((d_{E} - w_{EE})/w_{EE})^{\mathrm{L}-1}}
    \end{equation}
    where \(\upepsilon^{(\mathrm{L}-1)}\) is the half-separation in the inputs to the \(\mathrm{L}\)-th layer. To achieve proper WTA phenomenology, we need to impose the condition
    \begin{equation}
        \updelta = \frac{\upepsilon}{((d_{E} - w_{EE})/w_{EE})^{\mathrm{L}-1}}
    \end{equation}
    Taking the natural logarithm and recalling that \(w_{EE} < d_{E}\), we get the desired result.
\end{proof}

\begin{remark}[\textbf{Positivity of the bound on the number of layers}]
    As we have seen earlier in Section~\ref{sec:param_E2I} under Remark~\ref{rem:crit_prec}, when $\updelta>1/2$, the choice of the parameters is no longer dependent on the precision $\updelta$, but exclusively on the $\mathcal{LDS}$ bound $d_{E}^{-1}w_{EE}< 1$. 
    Therefore, it is important to check if for all $\updelta \in(0,1/2)$, the minimum number of layers needed is a well-defined quantity.
    From Proposition~\ref{prop:E2I CC}, 
    \begin{equation*}
        \mathrm{L}\ge 1 + \Bigg\lceil\frac{\ln(\upepsilon/\updelta)}{\ln(d_{E}/w_{EE}-1)}\Bigg\rceil.
    \end{equation*}

    The numerator of the fraction in the r.h.s. is negative, since $\upepsilon<\updelta$. For $b=0$, we start from~\eqref{eq:bounds-cond-lemma}
    \begin{equation*}
        d_{E}^{-1}w_{EE}-d_I^{-1}w_{IE}> 1-\updelta,
    \end{equation*}
    from which it follows that
    \begin{equation}
        d_{E}^{-1}w_{EE}> 1-\updelta.
    \end{equation}
    Reorganizing the inequality as
    \begin{equation}
        \frac{d_E}{w_{EE}}<\frac{1}{1-\updelta}
    \end{equation}
    and subtracting $1$ on both sides we get
    \begin{equation*}
        \frac{d_{E}}{w_{EE}}-1< \frac{\updelta}{1-\updelta}<1
    \end{equation*}
    for $\updelta\in(0,1/2)$. Under the LDS condition, \(d_{E}/w_{EE}>1\) and \(d_{E}/w_{EE} - 1 > 0\). 
    Consequently, we also have that $\ln(d_{E}/w_{EE}-1)<0$ and
    \begin{equation}
        \frac{\ln(\upepsilon/\updelta)}{\ln(d_{E}/w_{EE}-1)}\geq0
    \end{equation}
    for all \(\delta \in (0, 1/2)\), and the number of layers is thus well-defined.
\end{remark}

\subsubsection{E$^{k}$-I columns}

We now desire to extend our results to more general lateral inhibitory column comprised of E$^{k}$-I units at each level for \(k \in\ \mathbb N\), but first address the problem of what happens if \(\bar u \neq 0\).

\begin{proposition}[Input normalization of E$^{k}$-I unit]\label{prop:EkI compensation}
    Consider the FR network in Definition~\ref{def:FR} with the synaptic matrix as in \eqref{eq:EkI synapse} satisfying \(w_{EE} < 1\).
    Let \(\bar u \in \mathbb R\) be the incoming input baseline and \(b_E\) the excitatory internal bias. 
    Let \(\bar \updelta\) be the difference between the highest and second highest signals, and let \(\bar x_I\) be the inhibitory neuron equilibrium when the input baseline \(\bar u=0\) and the highest and second highest input signals are separated by at least \(\updelta\).
    Let \(b_I\) be the component of the internal inhibitory bias. 
    Then, the output baseline is restored to \(b_E\) for any nonzero input baseline \(\bar u\) satisfying
    \begin{equation}\label{eq: bar u range EkI}
        \bar u \in [-w_{EI}\bar x_I, w_{EI}(1 - \bar x_I)]
    \end{equation}
    if the inhibitory internal bias is given by
    \begin{equation}\label{eq:bI-compensation EkI}
        b_I = \bar b_I + \left[\frac{w_{II}+1}{w_{EI}}\right] \bar u.
    \end{equation}
\end{proposition}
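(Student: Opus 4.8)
The plan is to run the argument of Theorem~\ref{thm: norm-u} with $k$ excitatory neurons in place of two; the only role the count played there was through the aggregate excitatory drive reaching the inhibitory unit, so the extension is essentially mechanical. Taking $D=\mathcal{I}_{k+1}$ (consistent with the hypothesis $w_{EE}<1$), the equilibrium equations of the $\mathrm{E}^k\mathrm{I}$ unit with synaptic matrix \eqref{eq:EkI synapse}, excitatory inputs $u_{E_j}=\bar u+\updelta_j$, excitatory bias $b_E$, and an insulated inhibitory neuron (Assumption~\ref{ass: insu}, $u_I=0$) are
\begin{align*}
    x_{E_j} &= [w_{EE}x_{E_j}-w_{EI}x_I+b_E+\bar u+\updelta_j]_0^1, \qquad j=1,\dots,k,\\
    x_I &= [w_{IE}\sum_{j=1}^{k} x_{E_j}-w_{II}x_I+b_I]_0^1.
\end{align*}
For the uncompensated case $\bar u=0$, $b_I=\bar b_I$, write $\bar x_{E_1},\dots,\bar x_{E_k},\bar x_I$ for the equilibrium; since $\bar x_I$ is assumed interior to $[0,1]$, the inhibitory threshold is inactive there, so $\bar x_I=w_{IE}\sum_{j}\bar x_{E_j}-w_{II}\bar x_I+\bar b_I$.

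First I would exhibit the candidate equilibrium of the compensated unit, $x_{E_j}^\star=\bar x_{E_j}$ for every $j$ and $x_I^\star=\bar x_I+\bar u/w_{EI}$, and verify it by substitution. Inserting the bias law \eqref{eq:bI-compensation EkI} into the inhibitory equation, its argument equals $w_{IE}\sum_j\bar x_{E_j}-w_{II}(\bar x_I+\bar u/w_{EI})+\bar b_I+(w_{II}+1)\bar u/w_{EI}=\bar x_I+\bar u/w_{EI}$, since the coefficient of $\bar u/w_{EI}$ collapses from $-w_{II}+(w_{II}+1)$ to $1$; hence the inhibitory equation holds \emph{provided} $x_I^\star=\bar x_I+\bar u/w_{EI}\in[0,1]$, which is exactly the admissible range \eqref{eq: bar u range EkI}, $\bar u\in[-w_{EI}\bar x_I,\,w_{EI}(1-\bar x_I)]$. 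For the excitatory equations, the substitution gives $-w_{EI}x_I^\star+\bar u=-w_{EI}\bar x_I-\bar u+\bar u=-w_{EI}\bar x_I$, so the $\bar u$ cancels and the $j$-th threshold argument reduces to $w_{EE}\bar x_{E_j}-w_{EI}\bar x_I+b_E+\updelta_j$, precisely its value at the uncompensated equilibrium; each excitatory equation is thus reproduced with the identical saturation pattern.

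Next I would promote this fixed point to \emph{the} equilibrium. Since $w_{EE}<1=d_E$, the earlier result that the $\mathrm{E}^k\mathrm{I}$ network is $\mathcal{LDS}$ whenever $w_{EE}<d_E$ gives both a unique equilibrium (equivalently $D-W\in\mathcal{P}$ via Proposition~\ref{prop: LDH-to-P} and Proposition~\ref{prop: surjFR}) and $W-D\in\mathcal{LDS}$; global asymptotic stability then follows from Theorem~\ref{thm: EIglob}, as $D=\mathcal{I}_{k+1}$ commutes with $W$ and the saturated activation $[\cdot]_0^1$ is weakly increasing and non-expansive. Therefore the excitatory equilibrium profile of the compensated unit coincides with that of the uncompensated $\bar u=0$ unit, whose output baseline equals $b_E$ by the $\bar\updelta$-FLI design (conditions \eqref{eq:EkI ineq}), mirroring the recovery of the $0.5$ output mean in Theorem~\ref{thm: norm-u}; this is the asserted restoration.

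The main obstacle --- indeed the only delicate point --- is bookkeeping which thresholds are active: the construction works exactly because shifting $x_I$ by $\bar u/w_{EI}$ flips no saturation, and \eqref{eq: bar u range EkI} is precisely the condition enforcing this at the inhibitory neuron, while the excitatory thresholds are handled for free by the cancellation above. A minor wrinkle is the degenerate case in which the reference value $\bar x_I$ itself sits at an endpoint of $[0,1]$: there \eqref{eq: bar u range EkI} forces $\bar u/w_{EI}$ to carry the sign that keeps $x_I^\star$ inside $[0,1]$, and one checks the boundary equilibrium equation still holds (the threshold clamping rather than acting as the identity), so no separate treatment is needed.
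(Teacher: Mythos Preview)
Your proof is correct and follows essentially the same route as the paper: you exhibit the shifted inhibitory equilibrium $x_I^\star=\bar x_I+\bar u/w_{EI}$ induced by the bias law \eqref{eq:bI-compensation EkI}, observe that this shift exactly cancels the input offset $\bar u$ in every excitatory equilibrium equation, and invoke the uniqueness of the equilibrium under $w_{EE}<d_E$ to conclude. If anything, your presentation is slightly cleaner in its bookkeeping of saturation thresholds and in explicitly citing Theorem~\ref{thm: EIglob} for the stability claim.
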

\begin{proof}
    Consider the equilibrium equation for the I neuron at the desired excitatory equilibrium and use the given compensatory bias in~\eqref{eq:bI-compensation EkI}. Then we get
    \[x_I = \left[w_{IE}\bar x_E - w_{II}x_I + \bar b_I + \frac{w_{II}+1}{w_{EI}} \bar u\right]_0^1.\]
    Under the given range of \(\bar u\), we have that
    \[x_I = \frac{\bar u}{w_{EI}} + \bar x_I\]
    at the unique equilibrium. 
    Thus, the excitatory equilibrium equation when written out gives us, for the neuron \(j\) with the highest input
    \begin{align}
        x_{E_j} &= [w_{EE}x_{E_j} - w_{EI}\bar x_I + \bar u + \updelta_j]_0^1\\
        x_{E_j} &= [w_{EE}x_{E_j} - w_{EI} x_I- \updelta_j]_0^1
    \end{align}
    of which \(x_{E_j}=1\) is a solution, following from the inequalities in~\eqref{eq:EkI ineq}.
    Similarly, for other excitatory neurons, denoted by \(i\),
    \begin{align}
        x_{E_i} &= [w_{EE}x_{E_i} - w_{EI} \bar x_I + b_E + \bar u - \updelta_i]_0^1\\
        x_{E_i} &= [w_{EE}x_{E_i} - w_{EI} x_I + b_E - \updelta_i]_0^1
    \end{align}
    which also admits \(x_{E_i}=0\) as a solution due to~\eqref{eq:EkI ineq}.
    Thus, since the firing rate equilibrium is unique under the assumption of \(w_{EE} < d_E\), this is the unique equilibrium point that the E$^{k}$-I unit is stable at.
\end{proof}

We now put all the results together in the following.

\begin{proposition}[FLI of inhibitory-compensated E$^{k}$-I column]
    Consider a \(\mathrm{L}\)-layered E$^{k}$-I column with the assumption of normalized inputs at each layer as in Proposition~\ref{prop:EkI compensation} and let each layer have a finite precision \(\updelta>0\). 
    Let the input \(u\) to the first layer be such that 
    \begin{equation}
        u = [u_{E_1}, \dots, u_{E_k}] = \bar u \mymathbb{1} + [\upepsilon_1, \dots, \upepsilon_k]
    \end{equation} 
    where the input baseline \(\bar u\) satisfies~\eqref{eq: bar u range EkI}. 
    Assume that there exists \(j \in\{1,\dots,k\}\) and $0<\upepsilon<\updelta$  such that
    \begin{align}
        \upepsilon_j &> \upepsilon\\
        \upepsilon_i &< -\upepsilon\quad \forall i\neq j.
    \end{align}
    Then, the E$^k$-I column exhibits FLI if
    \begin{equation}
        \mathrm{L} \geq 1+\Bigg\lceil\frac{\ln{(\upepsilon/\updelta)}}{\ln{(d_{E}/w_{EE} - 1)}}\Bigg\rceil
    \end{equation}
    and specifically the \(j^{th}\) excitatory neuron is eventually the only one active.
\end{proposition}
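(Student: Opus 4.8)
\emph{Strategy.} The plan is to propagate, layer by layer, the ``soft'' (subthreshold) winner-take-all analysis of a single $\mathrm{E}^k\mathrm{I}$ unit together with the baseline normalization guaranteed by Proposition~\ref{prop:EkI compensation}, while tracking how the separation between the dominant excitatory channel and its nearest competitor is amplified from one layer to the next. The feedforward, block lower-triangular structure of the column's synaptic matrix $W_\mathrm{L}$ in \eqref{eq: W_cortical_col} is what makes this possible: the equilibrium of layer $l$ depends only on the equilibrium of layer $l-1$, so one can solve the column recursively and, by induction on $l$, invoke the uniqueness and global convergence of each individual $\mathrm{E}^k\mathrm{I}$ layer --- Proposition~\ref{prop: P-cort} together with the homeomorphism argument of Proposition~\ref{prop: surjFR} for uniqueness, and Theorem~\ref{thm: EIglob} (when the excitatory and inhibitory timescales coincide, and the numerically supported Conjecture~\ref{conj: GAS} otherwise) for convergence.

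\emph{Per-layer linear regime and amplification.} First I would show that, as long as a layer's effective half-separation stays below $\updelta$, none of its $k$ excitatory units nor its inhibitory unit saturates, so every threshold operator acts as the identity at the layer's equilibrium. Under the compensation law \eqref{eq:bI-compensation EkI}, and for an incoming baseline in the admissible window \eqref{eq: bar u range EkI}, Proposition~\ref{prop:EkI compensation} then gives that the inhibitory equilibrium absorbs the baseline exactly and the excitatory equilibria obey the affine relations $(d_E-w_{EE})\,x_{E_i} = -w_{EI}\bar x_I + b_E + \upepsilon_i$, with the output baseline restored to $b_E$. Consequently the pairwise gaps satisfy $x_{E_i}-x_{E_{i'}} = (\upepsilon_i-\upepsilon_{i'})/(d_E-w_{EE})$, and passing to the next layer through the inter-laminar coupling $W_{ex}=\diag(w_{EE},0,w_{EE})$ of \eqref{eq: SynExt} multiplies every excitatory activity, hence every gap, by $w_{EE}$. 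Iterating, the half-gap presented to layer $l$ is $\upepsilon^{(l-1)} = \upepsilon\,\bigl(w_{EE}/(d_E-w_{EE})\bigr)^{\,l-1}$, exactly as in the two-channel computation of Proposition~\ref{prop:E2I CC}, and the same argument applied to the top channel $j$ versus its nearest competitor shows both deviate from the (compensated) center by at least this amount.

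\emph{Triggering full WTA and the layer count.} The decisive quantitative point is that the amplification ratio exceeds one: $d_E/w_{EE}-1\in(0,1)$ in the FLI parameter window (this is precisely what is verified in the remark preceding the statement), so $\upepsilon^{(l-1)}$ is strictly increasing and diverges geometrically. Full WTA at a layer is triggered once its half-separation first reaches $\updelta$; imposing $\upepsilon\,\bigl((d_E-w_{EE})/w_{EE}\bigr)^{-(\mathrm{L}-1)}\ge\updelta$ and taking logarithms --- with $\ln(\upepsilon/\updelta)$ and $\ln(d_E/w_{EE}-1)$ both negative, hence their quotient positive --- yields exactly $\mathrm{L}\ge 1 + \ln(\upepsilon/\updelta)/\ln(d_E/w_{EE}-1)$. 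At that layer the normalized input to the $j$-th excitatory neuron exceeds $\updelta$ and every competitor's lies below $-\updelta$, so the $\mathrm{E}^k\mathrm{I}$ inequalities \eqref{eq:EkI ineq} are satisfied and the layer's unique stable equilibrium is $x^\star=[\vect{e}_j\ 1]$; by the layerwise convergence established in the first step, the column's trajectory reaches it, and the $j$-th excitatory neuron is eventually the only one active, as claimed.

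\emph{Main obstacle.} The linear-algebra recursion is routine; the delicate part is the self-consistency of the linear-regime hypothesis along the whole cascade. One must check that the baseline handed to each successive layer still lies in the window \eqref{eq: bar u range EkI} --- a window dictated by the extrema of inhibitory activity --- and that the superposition of that baseline with the growing gap never drives any unit into saturation before the designed layer $\mathrm{L}$. The compensation law \eqref{eq:bI-compensation EkI} is built to cancel the baseline contribution exactly, leaving only the $O(\upepsilon^{(l)})$ perturbation, and since $\upepsilon^{(l)}<\updelta$ for every $l<\mathrm{L}$ this perturbation is too small to cause saturation; making these bounds uniform in the number of excitatory neurons $k$ and in the layer index, so that the induction closes, is the part that requires care.
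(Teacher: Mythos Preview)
Your proposal is correct and follows essentially the same route the paper indicates: the paper's proof is simply ``follows a similar route to that of Proposition~\ref{prop:E2I CC},'' and what you have written is precisely that extension --- use inhibitory compensation to strip the baseline, read off the linear equilibrium relations $(d_E-w_{EE})x_{E_i}=\text{const}+\upepsilon_i$ so that pairwise gaps scale by $w_{EE}/(d_E-w_{EE})$ per layer, iterate, and solve $\upepsilon\,\bigl(w_{EE}/(d_E-w_{EE})\bigr)^{\mathrm{L}-1}\ge\updelta$ for $\mathrm{L}$. If anything, your ``main obstacle'' paragraph is more careful than the paper itself about the self-consistency of the non-saturation hypothesis along the cascade.
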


The proof follows a similar route to that of Proposition~\ref{prop:E2I CC} and is therefore omitted.
\stopcontents[sections]

\end{document}